\documentclass[sigconf, nonacm]{acmart}
\usepackage{amsmath,amsfonts,dsfont,multirow,multicol,epsfig,url,array,makecell,balance,color,epstopdf}
\usepackage{algorithmic}
\usepackage[ruled, vlined, linesnumbered]{algorithm2e}
\usepackage{hhline}
\usepackage{array}
\usepackage{enumerate}
\usepackage{enumitem}
\usepackage{subfigure}
\usepackage{booktabs}
\usepackage{xcolor,colortbl}
\usepackage{bm}
\usepackage{lipsum}
\usepackage{tabularx}
\usepackage{diagbox}
\usepackage{caption}
\usepackage{threeparttable}
\usepackage{etoolbox}
\SetCommentSty{mycommfont}
\newcommand{\tabincell}[2]{}
\newcolumntype{P}[1]{>{\centering\arraybackslash}p{#1}}

\setlist[itemize]{leftmargin=*}

\newtheorem{definition}{Definition}%[section]
\newtheorem{lemma}{Lemma}
\newtheorem{theorem}{Theorem}
\newtheorem{fact}{Fact}

\def\header{\vspace{0.8mm} \noindent}

\def\tblcapup{\vspace{0mm}}

\newcommand{\pushright}[1]{\ifmeasuring@#1\else\omit\hfill$\displaystyle#1$\fi\ignorespaces}
\newcommand{\pushleft}[1]{\ifmeasuring@#1\else\omit$\displaystyle#1$\hfill\fi\ignorespaces}

\def\dmax{d_{\max}}
\def\davg{d}

\def\e{\varepsilon}
\def\rela{c}
\def\pf{p_f}

\def\epi{\bm{\hat{\pi}}}
\def\bpi{\bm{\bar{\pi}}}
\def\vpi{\bm{\pi}}

\def\r{\bm{r}}
\def\er{\bm{r}}
\def\P{\mathbf{P}}

\def\A{\mathbf{A}}
\def\D{\mathbf{D}}

\def\incre{X}
\def\I{\mathcal{I}}
\def\th{\theta}

\def\E{\mathrm{E}}
\def\Var{\mathrm{Var}}

\def\rev{\color{black}}

\def\setpush{{\em SetPush}\xspace}

\def\sublinear{{SubgraphPush}\xspace}

\newcommand\vldbdoi{XX.XX/XXX.XX}
\newcommand\vldbpages{XXX-XXX}
% issue-specific
\newcommand\vldbvolume{14}
\newcommand\vldbissue{11}
\newcommand\vldbyear{2023}
% should be fine as it is
\newcommand\vldbauthors{\authors}
\newcommand\vldbtitle{\shorttitle} 
% leave empty if no availability url should be set
\newcommand\vldbavailabilityurl{https://github.com/wanghzccls/SetPush-code}
% whether page numbers should be shown or not, use 'plain' for review versions, 'empty' for camera ready
%\newcommand\vldbpagestyle{empty} 
\newcommand\vldbpagestyle{plain} 

\begin{document}
\title{Estimating Single-Node PageRank in $\tilde{O}\left(\min\{d_t, \sqrt{m}\}\right)$ Time}
\subtitle{[Technical Report]}

\author{Hanzhi Wang}
\affiliation{%
  \institution{Renmin University of China}
  %\streetaddress{P.O. Box 1212}
  \city{Beijing}
  \country{China}
  %\postcode{43017-6221}
}
\email{hanzhi_wang@ruc.edu.cn}

\author{Zhewei Wei}
\thanks{Zhewei Wei is the corresponding author. The work was partially done at Gaoling School of Artificial Intelligence, Peng Cheng Laboratory, Beijing Key Laboratory of Big Data Management and Analysis Methods and MOE Key Lab of Data Engineering and Knowledge Engineering.} 
%\authornote{Zhewei Wei is the corresponding author. The work was partially done at Gaoling School of Artificial Intelligence, Peng Cheng Laboratory, Beijing Key Laboratory of Big Data Management and Analysis Methods and MOE Key Lab of Data Engineering and Knowledge Engineering.}
%\orcid{0000-0001-5109-3700}
\affiliation{%
  \institution{Renmin University of China}
  \city{Beijing}
  \country{China}
}
\email{zhewei@ruc.edu.cn}

\begin{abstract}
{\rev 
PageRank is a famous measure of graph centrality that has numerous applications in practice. The problem of computing a single node's PageRank has been the subject of extensive research over a decade. However, existing methods still incur large time complexities despite years of efforts. Even on undirected graphs where several valuable properties held by PageRank scores, the problem of locally approximating the PageRank score of a target node remains a challenging task. Two commonly adopted techniques, Monte-Carlo based random walks and backward push, both cost $O(n)$ time in the worst-case scenario, which hinders existing methods from achieving a sublinear time complexity like $O(\sqrt{m})$ on an undirected graph with $n$ nodes and $m$ edges. 
%existing methods are still hard to achieve a sublinear time complexity like $O(\sqrt{m})$ on an undirected graph with $n$ nodes and $m$ edges. 

%The problem of single-node PageRank computation on undirected graphs is also of great importance from both practical and theoretical aspects. However, existing methods mainly focus on directed graphs, and fail to fully utilize these properties to improve the complexity bounds on undirected graphs. Despite years of efforts, existing methods still incur large time complexities even on undirected graphs. Two commonly adopted techniques, Monte-Carlo based random walks and backward push, both incur an $O(n)$ time complexity in the worst-case scenario, which hinders existing methods to achieve a sublinear time complexity like $O(\sqrt{m})$ on an undirected graph with $n$ nodes and $m$ edges. 

%However, existing methods mainly focus on directed graphs, and fail to fully utilize these properties to improve the complexity bounds on undirected graphs. As a result, the state-of-the-art method for single-node PageRank computation on undirected graphs turns out to be the methods originally proposed on directed graphs, leading to significant theoretical gaps to be improved on undirected graphs. 

%designing a local algorithm to estimate a single node's PageRank with constant relative error still remains a challenging task. 

In this paper, we focus on the problem of single-node PageRank computation on undirected graphs. We propose a novel algorithm, \setpush, for estimating single-node PageRank specifically on undirected graphs. With non-trival analysis, we prove that our \setpush achieves the $\tilde{O}\left(\min\left\{d_t, \sqrt{m}\right\}\right)$ time complexity for estimating the target node $t$'s PageRank with constant relative error and constant failure probability on undirected graphs. 
%The core of our \setpush is a novel probability push operation, which flexibly mixes the deterministic backward push operation and the randomized Monte-Carlo sampling in an atomic step. %Benefit from the push operation, we successfully reduce the $O(n)$ term introduced by the vanilla backward push operation. Furthermore, 
%We also fully leverage the theoretical properties of PageRank on undirected graphs for improving the complexity. %we leverage the properties of PageRank on undirected graphs in an non-trivial way to break the $O(n)$ lower bound for the Monte-Carlo sampling. %Benefit from the independence of the sampling operation adopted in our \setpush, we present a non-trivial theoretical analysis to bound the variance of our \setpush in a simple form. 
%As a consequence, our \setpush is theoretically demonstrated to be asymptotically better than those of existing methods on undirected graphs (unless on a complete graph). 
We conduct comprehensive experiments to demonstrate the effectiveness of \setpush. 
}
 	
\end{abstract}

%%% Loca sl Variables:
%%% mode: latex
%%% TeX-master: "paper"
%%% End:

\maketitle

%%% do not modify the following VLDB block %%
%%% VLDB block start %%%
\pagestyle{\vldbpagestyle}
\begingroup\small\noindent\raggedright\textbf{PVLDB Reference Format:}\\
\vldbauthors. \vldbtitle. PVLDB, \vldbvolume(\vldbissue): \vldbpages, \vldbyear.\\
\href{https://doi.org/\vldbdoi}{doi:\vldbdoi}
\endgroup
\begingroup
\renewcommand\thefootnote{}\footnote{\noindent
This work is licensed under the Creative Commons BY-NC-ND 4.0 International License. Visit \url{https://creativecommons.org/licenses/by-nc-nd/4.0/} to view a copy of this license. For any use beyond those covered by this license, obtain permission by emailing \href{mailto:info@vldb.org}{info@vldb.org}. Copyright is held by the owner/author(s). Publication rights licensed to the VLDB Endowment. \\
\raggedright Proceedings of the VLDB Endowment, Vol. \vldbvolume, No. \vldbissue\ %
ISSN 2150-8097. \\
\href{https://doi.org/\vldbdoi}{doi:\vldbdoi} \\
}\addtocounter{footnote}{-1}\endgroup
%%% VLDB block end %%%

%%% do not modify the following VLDB block %%
%%% VLDB block start %%%
\ifdefempty{\vldbavailabilityurl}{}{
\vspace{.3cm}
\begingroup\small\noindent\raggedright\textbf{PVLDB Artifact Availability:}\\
The source code, data, and/or other artifacts have been made available at \url{\vldbavailabilityurl}.
\endgroup
}
%%% VLDB block end %%%

%\vspace{-2mm}
\section{Introduction} \label{sec:intro}
PageRank is first proposed by Google~\cite{page1999pagerank} to rank the importance of web pages in the search engine. It is formulated based on two intuitive arguments: (i) highly linked pages are more important than the pages with fewer links; (ii) the page that linked by an important page is also important. 
%Alternatively, 
{\rev If we convert the web structure to a graph, the PageRank scores of all pages in the web correspond to the probability distribution of simulating random walks on the graph. 
Specifically, consider a graph $G=(V,E)$ with $|V|=n$ nodes and $|E|=m$ edges. 
We select a node $s$ from the graph's vertex set $V$ uniformly at random, and simulate an $\alpha$-random walk from node $s$. The PageRank score of node $t\in V$ is equal to the probability that an $\alpha$-random walk simulated from node $s$ terminates at node $t$. Here we call $s$ the source node. %$\alpha$ {\em the teleport probability} or {\em the dampling factor}. 
$\alpha$-random walk refers to the random walk process that at each step (e.g., at node $u$), the walk either terminates at $u$ with probability $\alpha$, or moves to a randomly selected neighbor of $u$ with probability $1-\alpha$. We call $\alpha$ {\em the teleport probability} or {\em the damping factor}, which is a constant satisfying $\alpha \in (0,1)$. 
%Following the convention of PageRank studies, in this paper, we assume $\alpha$ is a constant and $\alpha \in (0,1)$. 
%Therefore, 
%We note that the probability that an $\alpha$-random walk passes by 
%is more likely to terminate at the nodes with large neighborhood sizes or important neighbors, or more precisely, the nodes with high graph centrality scores. 
}

%Over the last decade, PageRank~\cite{page1999pagerank}, first proposed by Google to rank web pages, has emerged as one of the most well-adopted measure of graph centralities. 

Over the last decade, PageRank has emerged as one of the most well-adopted graph centrality measure%due to its simplicity, generality, uniqueness, fast convergence and, of course, the success Google has achieved
~\cite{gleich2015beyondtheweb}. 
%Due to the simplicity, generality and uniqueness of the PageRank
%The simplicity and generality of PageRank's formulation make it be widely apopted in numerous applications, 
%The applications of PageRank has been far beyond its origin in web search, covering a wide range of research domains, such as social networks~\cite{kwak2010twitter}, recommender systems~\cite{boldi2008query}, databases~\cite{balmin2004objectrank}, as well as biology~\cite{morrison2005generank}, chemistry~\cite{mooney2012molecularnetworks}, neuroscience~\cite{zuo2012neuro} and etc. 
The applications of PageRank has been far beyond its origin in web search, covering a wide range of research domains, such as social networks, recommender systems, databases, as well as biology, chemistry, neuroscience and etc. 
%Here we just name a few examples. 
For example, in social networks, PageRank serves as a classic role in evaluating the centrality of individuals. Kwak et al.~\cite{kwak2010twitter} use PageRank to characterize the properties of Twitter. In recommender systems, the PageRank scores of items are adopted to find potential predictions~\cite{boldi2008query}. Moreover, for the problem of database queries, the PageRank score indicates a query direction to the frequently retrieved results, and thus accelerates the query efficiency~\cite{balmin2004objectrank}. Additionally, PageRank are adopted to study molecules in chemistry~\cite{mooney2012molecularnetworks}, gene in biology~\cite{morrison2005generank} and brain regions in neuroscience~\cite{zuo2012neuro}. More applications of PageRank can be found in the comprehensive survey summarized by Gleich~\cite{gleich2015beyondtheweb}. 

{\rev 
At the same time, a plethora of variants stem from PageRank, including Personalized PageRank~\cite{page1999pagerank}, heat kernel PageRank~\cite{chung2007heatkernelPageRank}, reverse PageRank~\cite{bar2008reversePageRank}, weighted PageRank~\cite{xing2004weightedPageRank} and so on. 
%many of which have been the subjects of extensive research. 
}
For example, Personalized PageRank, one of the most famous variant of PageRank, has been an essential node proximity metric adopted in various web search and representation tasks~\cite{gupta2013wtf, klicpera2019APPNP, Bojchevski2020PPRGo}. Recall that PageRank serves as a global centrality measure in a graph. In comparison, the Personalized PageRank value of a node indicates a localized score, reflecting the relative importance of the node with respect to a given source node. Likewise, the heat kernel PageRank has a successful history in the local clustering scenario. A series of algorithms~\cite{chung2007heatkernelPageRank, yang2019TEA, kloster2014heat} leverage the scores of Heat Kernel PageRank to identify a well-connected cluster around the given seed node. These variants and their wide-spread applications also demonstrate the prominence of PageRank in graph analysis and mining tasks. 

{\rev 
Given the huge success achieved by PageRank, the problem of computing PageRank scores has been the subject of extensive research for more than a decade~\cite{bressan2018sublinear, bar2008reversePageRank, fogaras2005MC, andersen2007contribution, lofgren2013personalized, lofgren2016BiPPR, lofgren2014FastPPR}. One particular interest is the problem of {\em single-node PageRank computation}, which aims to compute a single node's PageRank on large-scale graphs. Such problem is an important primitive in graph analysis and learning tasks of both practical and theoretical interest. 
%resulting in various meaningful solutions in multiply applications. Briefly speaking, the significance of efficiently estimating a single node's PageRank is two-fold. 

%\begin{itemize}
%\item %Estimating a single node's PageRank intrinsically corresponds to the problem of approximating a node's centrality score in the graph. In many cases, all we need is an approximation of the centrality scores of a few nodes. For example, we may hope to observe the changes in the global importance of a few popular websites (e.g., the top-10 most important websites ranked last year). Since websites' global importance can be reflected from their PageRank scores, in the case, we will frequently ask for the PageRank scores of the 10 websites. It would be prohibitively slow to score all nodes in the graph every time, especially on large-scale graphs with millions or even billions of nodes and edges. Therefore, an ideal solution is a {\em local} algorithm, which is able to efficiently return the target node's approximation scores by only exploring a small fraction of graph edges around the target node. However, as pointed out by~\cite{bressan2018sublinear}, most of existing approaches 
%One one hand, 
From the theoretical aspect, 
the query time complexity of single-node PageRank has a close connection to various graph analysis problems. For example, as we shall show in Section~\ref{sec:pre}, node $t$'s PageRank is equal to the average over all nodes $u$'s $\vpi_u(t)$, where $\vpi_u(t)$ denotes the Personalized PageRank (PPR) score of node $t$ with respect to node $u$. We call such problem single-target PPR queries, in which we aim to estimate  $\vpi_u(t)$ of every node $u\in V$. %Therefore, %the algorithms for single-target PPR queries are also applicable for single-node PageRank computation. %On the other hand, 
%can be used to compute the target node's PageRank. In such case, the query time complexity of single-target PPR determines that of single-node PageRank. On the other hand, 
%Thus, 
%And 
The theoretical insight for single-node PageRank computation can therefore be used for single-target PPR queries by definition. 
%Any improvement of the query efficiency of single-node PageRank also contributes to the complexity bound of single-target PPR queries. 
Moreover, 
%as mentioned in~\cite{wang2020RBS}, the complexity of single-target PPR queries is the bottleneck for estimating single-source SimRank scores in the graph, where SimRank is a well-adopted measure for node similarity. Thus, any improvement in single-node PageRank studies may the complexity of single-node PageRank computation, we can also optimize the performance of single-source SimRank algorithms. Additionally, 
Bressan et al.~\cite{bressan2018sublinear} propose a novel method called {\em \sublinear} for single-node PageRank computation, and adapt the \sublinear method to computing single-node Heat Kernel PageRank (HKPR) by leveraging the analogue between PageRank and HKPR. %All of the above demonstrate that the problem of single-node PageRank computation is a primitive of theoretical interest. 

On the other hand, in many practical cases, all we need is an approximation of a few nodes' PageRank scores. % (also pointed out in~\cite{bressan2018sublinear}). 
For example, in the application scenario of web search, the changes in the importance of a few popular websites (e.g., the top-10 most popular websites ranked last year) is of particular interest. Since websites' global importance can be reflected from their PageRank scores, the PageRank scores of the ten websites are therefore frequently requested. Note that it would be prohibitively slow to score all nodes in the graph every time, especially on large-scale graphs with millions or even billions of nodes and edges. Therefore, an ideal solution is a {\em local} algorithm, which is able to efficiently return the target node's approximation scores by only exploring a small fraction of graph edges around the target node. However, as pointed out by~\cite{bressan2018sublinear}, most of existing approaches require an $\Omega(n)$ time complexity for the single-node PageRank computation. %We also illustrate the limitations of existing methods in Section~\ref{sec:algorithm}. 
Designing an efficient local algorithm with $o(n)$ query time complexity remains a challenge. 

%In the past decade, we have been experiencing a huge ``Big Data" movement with the exponential blowup in data volumes. It is prohibitively unrealistic to compute all nodes' PageRank scores one at a time on .     

%and single-pair Personalized PageRank (PPR) queries, single-source SimRank computations, and single-node heat kernel PageRank computation. 

%have close connection of various ppr computation problem

%The problem of single-node PageRank computation has wide-spread applications in the area of local clustering~\cite{Teng2004Nibble}, and graph representation learning~\cite{klicpera2019APPNP}. 

%\end{itemize}

%The reasons are multi-fold. 

}

%In particular, the need of efficiently computing a single node's PageRank are now in ever-increasing demand driven by the exponential blowup of data volumes in modern network analysis. However, despite decades of efforts devoted to the problem, most of existing methods require an $\Omega(n)$ time complexity to approximate a single node's PageRank in the graph. In other words, designing a {\em local} algorithm for computing a single node's PageRank remains a challenging task. 

{\rev 
\header{\bf Single-Node PageRank Computation on Undirected Graphs. } Existing methods for single-node PageRank computation mainly focus on directed graphs, which, however, incur large query time complexity despite decades of efforts due to the hardness. In this paper, we settle for a slightly less ambitious target to efficiently estimate single-node PageRank {\em on undirected graphs}. Note that the problem of single-node PageRank computation on undirected graphs is still of great importance from both practical and theoretical aspects. Specific reasons are illustrated in the following. 
\begin{itemize}
\vspace{-1mm}
\item From the theoretical aspect, a number of existing algorithms do not offer any worst-case guarantee on directed graphs without considering a uniform random choice of the target node. For these methods, meaningful complexity bounds  can only be derived on undirected graphs when we consider an arbitrary target node (e.g., the LocalPush~\cite{lofgren2013personalized}, FastPPR~\cite{lofgren2014FastPPR}, and BiPPR~\cite{lofgren2016BiPPR} methods as listed in Table~\ref{tbl:comparison}). 
%Note that the three methods, LocalPush~\cite{lofgren2013personalized}, FastPPR~\cite{lofgren2014FastPPR}, and BiPPR~\cite{lofgren2016BiPPR}, are originally proposed on directed graphs. However, they cannot offer the worst-case guarantee on directed graphs without considering a uniform random choice of the target node. 
%(as listed in Table~\ref{tbl:comparison}). 
%On the other hand, to our knowledge, the best complexity results achieved by existing methods on undirected graphs turn out to be the complexities of FastPPR, BiPPR and \sublinear. %(which one is the best determined by the specific values of $d, d_{\max}$ and $d_t$).
%However, the three methods are originally proposed on directed graphs, which reflects that existing methods do not make full use of the properties held by PageRank on undirected graphs. 
On the other hand, there are several crucial properties of the PageRank scores that are only held on undirected graphs. This motivates us %The two points mentioned above motivate us 
to study the problem of single-node PageRank computation specifically on undirected graphs for achieving better complexity results by utilizing these crucial properties delicately.

\item Second, %To be more specific, we think the necessity to study the single-node PageRank computation problem on undirected graphs mainly come from two aspects. 
%from high scalability to compute single-node PageRank on large-scale undirected graphs. 
%For example, the canonical reverse exploration algorithm, LocalPush~\cite{lofgren2013personalized}, requires $O\left(n\cdot \left(\sum_{v\in V} \vpi_{v}(t)\cdot d_v\right)\right)$ time cost for estimating an arbitrary target node $t$'s PageRank on directed graphs. Such complexity can be rewritten as a meaning bound $O(n\cdot d_t)$ only on undirected graphs, by applying the property of PPR on undirected graphs: $\vpi_v(t) \cdot d_v = \vpi_t(v) \cdot d_t$. The same problem also holds in the hybrid methods, FastPPR~\cite{lofgren2014FastPPR} and BiPPR~\cite{lofgren2016BiPPR}, which include a reverse propagation phase in the algorithm. As a consequence, although these methods are originally proposed for single-node PageRank computation on directed graphs, their complexity bounds are only applicable to undirected graphs with an arbitrary target node. 
from the practical aspect, many downstream graph mining and learning tasks are only defined on undirected graphs. %All input directed graphs are required to be converted to undirected graphs. 
%For example, in the scenario of local clustering, the well-adopted {\em conductance} metric to measure the the quality of identified clusters is defined on undirected graphs. Thus, %most of (if not all) local clustering algorithms are only for undirected graphs. Among them, 
%all we need in local clustering tasks is the PageRank vector computed on undirected graphs.the celebrated local clustering method~\cite{FOCS06_FS} employ the Personalized PageRank vector computed on undirected graphs to find local clusters. 
%Moreover, Graph Neural Networks (GNNs) have drawn increasing attention in recent years. A plethora of GNN models leverage PageRank computation to propagate node features~\cite{klicpera2019APPNP, Bojchevski2020PPRGo,Chen2020GCNII}. Because the graph Laplacian matrix for feature propagation is only applicable to undirected graphs, this line of research invokes PageRank computation algorithms only on undirected graphs. 
For example, in the scenario of local clustering, the celebrated local clustering method~\cite{FOCS06_FS} employs (Personalized) PageRank vector to identify local clusters, while the well-adopted {\em conductance} metric to measure the quality of identified clusters is defined on undirected graphs. Therefore, in local clustering, all we need is the PageRank scores on undirected graphs. %most of (if not all) local clustering algorithms are only for undirected graphs. Among them,  
Additionally, Graph Neural Networks (GNNs) have drawn increasing attention in recent years. A plethora of GNN models leverage PageRank computation to propagate node features~\cite{klicpera2019APPNP, Bojchevski2020PPRGo,Chen2020GCNII}. Since the graph Laplacian matrix for feature propagation is only applicable to undirected graphs, this line of research invokes PageRank computation algorithms only on undirected graphs. 
\end{itemize}
}

\begin{figure}[t]
\centering
%\vspace{-5mm}
\begin{tabular}{c}
\hspace{-1mm}\includegraphics[width=88mm,trim=0 0 10 0,clip]{./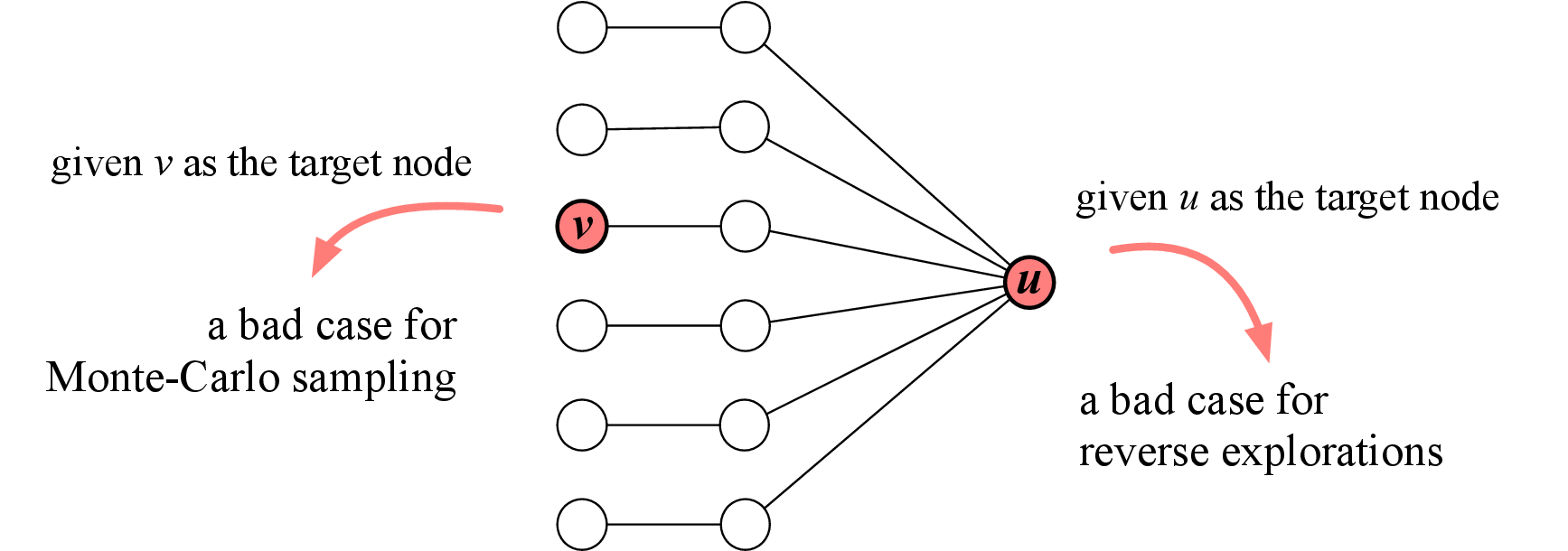}
\end{tabular}
\vspace{-4mm}
\caption{A special case. }
\label{fig:high_level}
\vspace{-5mm}
\end{figure}

\begin{table*} [t]
%\vspace{-4mm}
\centering
\renewcommand{\arraystretch}{1.4}
%\begin{small}
\tblcapup
\caption{\rev Comparison of algorithms for solving the problem of single-node PageRank computation on undirected graphs under constant relative error and failure probability. $d$ and $d_{\max}$ denotes the average and maximum degree of graph $G$, respectively. The complexity results marked by $\star$ are only applicable on undirected graphs. 
%In the last column, we list the ratio of our \setpush's query time complexity to every baseline's time complexity. The smaller the ratio is, the larger improvement \setpush performs over the baseline method. 
}\label{tbl:comparison}
\vspace{-2mm}
%\resizebox{0.9\linewidth}{!}{%
%\resizebox{1\linewidth}{!}{
%\tblcapdown
%p{2.3in}
\begin{tabular} 
%{|l|p{2.3in}|} \hline
{|P{3.5cm}|c@{\hspace{+1mm}}|@{\hspace{+1mm}}c@{\hspace{+0.5mm}}|P{6.3cm}|} \hline
{\bf Query Time Complexity} & \multirow{2}*{\bf Baseline Methods} &  {\bf Query Time Complexities} & {\bf {Improvement of \setpush over Baselines}} \\ 
{\bf of Our \setpush} & & {\bf of Baseline Methods} & (the larger, the better) \\ \hline
%~& & & (the time) \\ \hline
\multirow{7}*{$\tilde{O}\left(\min\left\{d_t, \sqrt{m}\right\}\right) \star$} & The Power Method~\cite{page1999pagerank} & $\tilde{O}(m)$ & $\max \left\{ m / d_t, \sqrt{m} \right\}$  \\  \cline{2-4}
& Monte-Carlo~\cite{fogaras2005MC} & $\tilde{O}(n)$ & $\max \left\{n/d_t,  \sqrt{n/\davg} \right\}$ \\ \cline{2-4}
%LocalPush~\cite{andersen2007contribution} & $\tilde{O}\left(n\cdot d_t\right)$ \\ \hline
& LocalPush~\cite{lofgren2013personalized} & $\tilde{O}\left(\min\left\{n\cdot d_t, m\right\}\right) \star$ & $\sqrt{m}$ \\ \cline{2-4}
&  RBS~\cite{wang2020RBS} & $\tilde{O}\left(n\right)$ & $\max \left\{n / d_t, \sqrt{n/\davg} \right\}$ \\ \cline{2-4}
& FastPPR~\cite{lofgren2014FastPPR} &  $\tilde{O}\left(\sqrt{n\cdot d_t}\right) \star$ & $\max \left\{\sqrt{n / d_t}, \sqrt{d_t / d}\right\}$ \\ \cline{2-4}
& BiPPR~\cite{lofgren2016BiPPR, lofgren2015BiPPRundirected} & $\tilde{O}\left(\sqrt{n\cdot d_t}\right) \star$ & $\max \left\{\sqrt{n / d_t}, \sqrt{d_t / d}\right\}$ \\ \cline{2-4}
%\rule{0pt}{14pt} Bressan-PageRank~\cite{bressan2018sublinear} & {$\tilde{O}\left(\min\left\{\frac{m^{2/3}\Delta^{1/3}}{d^{1/2}}, \frac{m^{4/5}}{d^{3/5}} \right\}\right)$}\\[1ex] \hline
& \sublinear~\cite{bressan2018sublinear} & {\rev $\tilde{O}\left(\min\left\{\frac{m^{2/3}\cdot \dmax^{1/3}}{\davg^{2/3}}, \frac{m^{4/5}}{\davg^{3/5}} \right\}\right)$} 
& $\max \left\{\min \left\{\frac{n^{\frac{2}{3}}\cdot d_{\max}^{1/3}}{d_t}, \frac{n^{\frac{4}{5}} \cdot d^{\frac{1}{5}}}{d_t}\right\}, \min\left\{\frac{n^{\frac{1}{6}}\cdot d_{\max}^{1/3}}{d^{\frac{1}{2}}}, \frac{n^{\frac{3}{10}}}{d^{\frac{3}{10}}}\right\}\right\}$\\ \hline
%\rule{0pt}{14pt}$\mymathhl{\text{\setpush(Ours)}}$ & $\mymathhl{\tilde{O}(\min\left\{d_t, \sqrt{m}\right\})}$ \\[1ex] \hline
%& \setpush(Ours) & $\tilde{O}\left(\min\left\{d_t, \sqrt{m}\right\}\right)$ & \\ \hline
\end{tabular}
%}
\vspace{-3mm}
%\end{small}
\end{table*}

\vspace{-0.5mm}
{\rev
\header{\bf Limitations of Existing Methods on Undirected Graphs. } 
Below we briefly illustrate the limitations of existing methods for the single-node PageRank computation on undirected graphs. A simplified problem formulation is given as follows. A formal definition can be found in Section~\ref{sec:pre}.  
%Consider the problem of single-node PageRank computation on undirected graphs. 
Specifically, the inputs to the single-node PageRank problem are an undirected graph $G=(V,E)$ and a target node $t \in V$. The goal is to estimate the target node $t$'s PageRank $\vpi(t)$ within a constant relative error. 
%Let us quickly recap the problem of estimating a single node's PageRank score.  The inputs to the problem are a graph $G=(V,E)$ and a target node $t \in V$. The goal is to estimate the PageRank value of node $t$ within a constant relative error. 
%using a sublinear number of visits to the graph. 
%In other words, we aim to design a local algorithm which only needs $o(n)$ time for estimating the PageRank of a single node in the graph. %An exception can be the complete graph where each node connects to all the other nodes in the graphs. 
We also allow a constant failure probability for scalability. 
%Note that most of existing methods mentioned below are originally proposed on directed graphs, some of which concern a uniform random choice of the target node to derive meaningful complexity results. 
%We note that existing methods for single-node PageRank computation incur several limitations, and cannot achieve scalable complexity results even on undirected graphs. %We present a brief discussion on these limitations in the following. A detailed version can be found in Section~\ref{subsec:reasonswhy}. 
%Specifically, 
%Studies on this problem 
For the single-node PageRank computation problem, existing methods can be broadly classified into three categories: %It's worth to mention that most of existing methods mention below are proposed on directed graphs, and can be extended to support undirected graphs. 
\begin{itemize}
    \item \underline{The Monte-Carlo method}~\cite{fogaras2005MC, fogaras2003start, lofgren2016BiPPR} estimate $\vpi(t)$ by repeatedly simulating $\alpha$-random walks in the graph. However, according to the Pigeonhole principle, the lower bound of the required number of random walks is $\Omega\left(1/n\right)$. Thus, in the worse-case scenario where $\vpi(t)=O\left(1/n\right)$, the Monte-Carlo method requires at least $O(n)$ computational time for estimating a single node's PageRank. In Figure~\ref{fig:high_level}, we provide a toy example to illustrate the hard instance by regarding node $v$ as the given target node which satisfies $\vpi(v)=\Theta\left(1/n\right)$. %To derive an accurate estimate of $\vpi(v)$, we need to generate at least $O(n)$ random walks in the graph just for ensuring one of the random walks stepping at node $v$ (e.g., from node $u$ to node $v$). 

    \item \underline{The reverse exploration method} attempts to derive an estimate of $\vpi(t)$ by reversely exploring the graph from the target node $t$ to its ancestors. A primitive operation commonly adopted in these methods is {\em backward push}, which deterministically pushes the probability mass initially at the target node $t$ reversely to its ancestors step by step. %Sadly, a major drawback of the backward push operation is that it has to touch all neighbors of the current node for pushing the probability. 
    Unfortunately, in each backward push operation (e.g., at node $u$), we at least require $O(d_u)$ time to reversely push the probability mass currently at $u$ to every neighbor of $u$, where $d_u$ denotes the degree of node $u$. Thus, %conducting a backward push operation at a high-degree node $u$ is time-consuming. 
    in the worst case where $d_u=\Omega(n)$, we cost $O(n)$ time only after one step of backward push. 
    %cost $O(n)$ time per push operation for propagating the probability mass to every neighbor of $u$ if $u$ is a high-degree node with $d_u=\Omega(n)$. 
    Figure~\ref{fig:high_level} provides a toy example for this bad case where $d_u=\Theta(n)$. 
    
    %where node $u$ serves as the given target node. 

    \item \underline{The hybrid method} combines the Monte-Carlo method and the reverse exploration method together. However, a simple combination cannot resolve the limitations of the Monte-Carlo and reverse exploration methods as mentioned above. %Up to now, the state-of-the-art algorithm has been the \sublinear method proposed by Bressan et al.~\cite{bressan2018sublinear}, which achieves {$\tilde{O}\left(\min\left\{\frac{m^{2/3}\cdot \dmax^{1/3}}{\davg^{1/2}}, \frac{m^{4/5}}{\davg^{3/5}} \right\}\right)$} time. 
    In fact, despite years of efforts, the problem of computing single-node PageRank on undirected graphs has not been well solved. 
    %designing local algorithm for single-node PageRank computation remains a challenging problem despite years of efforts. 
%We list the time complexities of these existing methods in Table~\ref{tbl:comparison}. 
%As a summary, designing local algorithm for single-node PageRank computation remains a challenging problem despite years of efforts. 

    %according to the settings of backward push operation, in the 
    %Therefore, for the nodes with large node degrees (e.g., node $u$ as shown in Figure~\ref{fig:high_level} that $d_u=O(n)$), the reverse exploration methods have to take up $O(n)$ time just for one backward push operation. 
\end{itemize}

%The expected time cost for deriving an estimator of $\vpi(t)$ within constant relative error turns out to be $O(n)$ in this case. 
}

\vspace{-1mm}
%limitations of existing methods 
\subsection{Our Contributions}\label{subsec:contribution}
%\header{\bf {\rev Our Contributions.} }
{\rev 
In this paper, we consider the problem of single-node PageRank computation {\em on undirected graphs}. We propose a novel algorithm called \setpush, which achieves %the best complexity result among existing methods on undirected graphs. 
%a better time complexity compared to the best result of existing methods applicable on undirected graphs. 
%To be more specific, \setpush 
%for estimating single-node PageRank on undirected graphs. Our \setpush is the first algorithm which 
%achieves a 
the $\tilde{O}\left(\min \left\{d_t, \sqrt{m}\right\}\right)$ query time complexity for the single-node PageRank computation under constant relative error and failure probability. Here $m$ denotes the number of edges in the graph, $d_t$ denotes the degree of the given target node $t$. Additionally, $\tilde{O}$ is a variant of the Big-Oh notation that ignores poly-logarithmic factors~\cite{bressan2018sublinear, wang2020RBS, teng2016scalable}. %Additionally, we shall use $d$ and $d_{\max}$ (in Tabl) to denote the average and maximum degree of graph $G$, respectively. 
%Note that $\tilde{O}$ is a commonly adopted notation in theoretical analysis~\cite{bressan2018sublinear, wang2020RBS}, which was first proposed by Teng~\cite{teng2016scalable}.  
%In the following, we summarize three 
Detailed contributions achieved by this paper are summarized as below. 
\begin{itemize}
\vspace{-1mm}
\item \header{\bf Theoretical Improvements. } 
We theoretically demonstrate the superiority of our \setpush over existing methods on undirected graphs. 
%We now specify the superiority of our \setpush over existing methods on undirected graphs. %As shown in Table~\ref{tbl:comparison}, we list the theoretical improvement of \setpush over existing methods
Specifically, %in Table~\ref{tbl:comparison}, 
%our \setpush outperforms the basic Power Method by $\max\left\{\frac{m}{d_t}, \sqrt{m}\right\}$, and outperforms the Monte-Carlo method by $\max\left\{\frac{n}{d_t}, \sqrt{\frac{n}{d}}\right\}$. 
%we list the complexity results of our \setpush and existing methods to solve the problem of single-node PageRank computation on undirected graphs. It's worth to mention that the complexity results of FastPPR, BiPPR, LocalPush and our \setpush are only applicable to undirected graphs, while the complexity of the other methods hold both on directed and undirected graphs. The notation $d$ and $d_{\max}$ refer to the average and maximum degree of graph $G$, respectively. In particular, 
in the last column of Table~\ref{tbl:comparison}, we present the theoretical improvements of our \setpush over existing methods. In particular, the value of ``Improvement" equals the query time complexity of a baseline method over that of our \setpush. Thus, the value of ``Improvement" is the larger, the better. It's worth mentioning that the complexity results of FastPPR, BiPPR, LocalPush and our \setpush given in Table~\ref{tbl:comparison} are only applicable to undirected graphs, while the other complexities hold both on directed and undirected graphs. 
%From Table~\ref{tbl:comparison}, we have several observations: 
%In Table~\ref{tbl:comparison}, we list the time complexities of existing methods for single-node PageRank computation. 
%\begin{itemize}
%    \item 
We observe that the expected time complexity of our \setpush is no worse than that of each baseline method listed in Table~\ref{tbl:comparison}. %due to the fact $m=n\cdot d$, $d_{\max}\le n$ and $d_t \le n$. The notation $d$ and $d_{\max}$ refer to the average and maximum degree of graph $G$, respectively. 
Actually, except on a compete graph where the average node degree $d=n$, the time complexity of our \setpush is asymptotically better than that of every method listed in Table~\ref{tbl:comparison}. 
\item \header{\bf A Novel Push Operation. } 
%We will elaborate on the reasons why existing methods are unable to achieve the complexity $\tilde{O}\left(\min \left\{d_t, \sqrt{m}\right\}\right)$ of \setpush on undirected graphs in Section~\ref{subsec:reasonswhy}. At a high level, 
%The superiority of \setpush over existing methods mainly comes from two aspects: 
%\begin{itemize}
%\item We propose a novel push operation,     
The core of our \setpush is a novel push operation, 
which simultaneously mixes the deterministic backward push and randomized Monte-Carlo sampling in an atomic step. Benefit from this push operation, we cost $\Theta(d_u)$ time only at the node $u$ with small $d_u$, and randomly sample a fraction of $u$'s neighbors to push probability mass if $d_u$ is large. As a result, we successfully remove the $O(d_u)$ term introduced by the vanilla push operation at node $u$, and achieve a superior time complexity over the baseline method. 
%in each push operation (e.g., at node $u$) adopted in \setpush, we do not always need to touch every neighbor of $u$, which successfully reduces the $O(d_u)$ term introduced by the vanilla push operation. 
%Benefit from the independence of the sampling operation adopted in our \setpush, the variance of \setpush can be bounded by a surprising simple form, which, though, needs to be derived through non-trivial theoretical analysis.  Finally, we conduct experiments on large-scale real-world graphs to demonstrate the effectiveness of the \setpush method. 

%\item 

\item \header{\bf Algorithm Development on Undirected Graphs. } Our \setpush algorithm is designed specifically on undirected graphs. We show that by making full use of the theoretical properties held by PageRank values on undirected graphs, we can achieve a better time complexity for single-node PageRank computation compared to existing methods on undirected graphs. %Briefly speaking, we note that on undirected graphs, it is no longer necessary to simulate $\alpha$-random walks from a uniformly selected source node $s$. Thus, we can break the lower bound $O\left(\frac{1}{\vpi(t)}\right)$ of the Monte-Carlo method, and achieve a superior complexity result compared to the Monte-Carlo method. 
\end{itemize}

%It's worth to mention that only considering the problem of single-node PageRank on undirected graphs

%the complexity results of FastPPR, BiPPR, LocalPush and our \setpush are only applicable to undirected graphs, while the complexity of the other methods hold both on directed and undirected graphs. 

%\end{itemize}

}

%\header{\bf Theoretical Improvements over Existing Methods. } 

%\header{\bf Technical Novelty of \setpush. } 

%%% Local Variables:
%%% mode: latex
%%% TeX-master: "paper"
%%% End:

\vspace{-2mm}
\section{Preliminaries} \label{sec:pre}
%In this section, we first present a formal definition of PageRank. Then we formulate the problem of single-node PageRank computation. Moreover, we provide a brief introduction to Personalized PageRank, which is a variant of PageRank and always plays a vital role in PageRank computation. 
This section introduces several basic concepts that are frequently adopted in the single-node PageRank computation. 
Table~\ref{tbl:def-notation} shows the notations that are frequently used in this paper. 

\begin{table} [t]
%\vspace{-4mm}
\centering
\renewcommand{\arraystretch}{1.3}
\begin{small}
\tblcapup
\caption{\rev Table of notations.}\label{tbl:def-notation}
\vspace{-4mm}
%\resizebox{0.9\linewidth}{!}{%
\resizebox{1\linewidth}{!}{%
%\tblcapdown
%p{2.3in}
\begin{tabular} 
%{|l|p{2.3in}|} \hline
{|c@{\hspace{1mm}}|P{2.25in}|} \hline
{\bf Notation} &  {\bf Description}  \\ \hline
$G=(V,E)$ & undirected graph with vertex set $V$ and edge set $E$ \\ \hline
$n, m$ & the numbers of nodes and edges in $G$ \\ \hline
$N(u)$ & the adjacency list of node $u$\\ \hline
%$n_u=|N_{out}(u)|$ & the number of $u$'s out neighbors\\ \hline
$\A$ & the adjacency matrix of $G$\\ \hline
$d_u$ & the degree of node $u$ \\ \hline
$\davg$ & the average node degree of the graph\\ \hline
$\dmax$ & the maximum node degree of the graph\\ \hline
$\D$ & the diagonal degree matrix that $D_{uu}=d_u$\\ \hline
$\P=\A\D^{-1}$ & the transitional probability matrix\\ \hline
$\alpha$	& the teleport probability that an $\alpha$-discounted random walk terminates at each step \\ \hline
$\vpi(t),\epi(t)$ & the true and estimated PageRank of node $t$. \\ \hline
$\vpi_t,\epi_t$ & the true and estimated Personalized PageRank vectors with regard to node $t$. \\ \hline
%$\vp_u$ & the sampling probability vector of $\forall v\in N_{out}(u)$ in the random walk process, i.e., $\vp_u(v)=\frac{\A_{uv}}{d_u}$ \\ \hline
%$\vec{r}^{(i)},\hat{\vec{r}}^{(i)}$ & the true and estimated $i$-hop residue vectors in $\mathcal{R}^n$\\ \hline
%$\vec{q}^{(i)},\hat{\vec{q}}^{(i)}$ & the true and estimated $i$-hop reserve vectors in $\mathcal{R}^n$\\ \hline
%$\e_r,\delta$ & the relative error and threshold \\ \hline
%$\e_r, \delta, p_f$ & parameters of relative error, relative error threshold and failure probability \\ \hline
$c$ & constant relative error\\ \hline
$\tilde{O}$ & the Big-Oh natation ignoring the log factors \\ \hline
\end{tabular}
}
\vspace{-4mm}
\end{small}
\end{table}

\subsection{PageRank}
%\header{\bf Single-Node PageRank. } 
Given an undirected and unweighted graph $G=(V,E)$ with $n$ nodes and $m$ edges, the PageRank vector $\vpi$ is an $n$-dimensional vector, which can be mathematically formulated as: 
\begin{align}~\label{eqn:def_pagerank}
\vpi=(1-\alpha)\A\D^{-1}\cdot \vpi +\frac{\alpha}{n}\cdot \bm{1}. 
\end{align}
Here $\A$ denotes the adjacency matrix of the graph, $\D$ is the diagonal degree matrix that $\D_{uu}=d_u$, $\bm{1}\in \mathbb{R}^n$ denotes an all-one vector, and $\alpha$ is a {\em constant} damping factor, which is strictly less than $1$ (i.e., $\alpha \in (0,1)$). For each node $t\in V$, we use $\vpi(t)$ to denote the PageRank value of node $t$. %We may also use $\vpi(t)$ to denote node $t$'s {\em single-node PageRank} of $t$ for simplicity. 
According to the definition formula given in Equation~\eqref{eqn:def_pagerank}, the PageRank value of node $t$ satisfies the following recurrence relation: 
%Specifically, for each node $t\in V$, 
%we can further derive: 
\begin{align}\label{eqn:ite_pagerank}
\vpi(t)=(1-\alpha)\sum_{u\in N(t)}\frac{\vpi(u)}{d_u}+\frac{\alpha}{n}, 
\end{align}
where $u$ is one of the neighbor of node $t$, and $d_u$ denotes the degree of node $u$. In particular, Equation~\eqref{eqn:ite_pagerank} also indicates a lower bound of any node's PageRank that $\vpi(t)\ge \frac{\alpha}{n}$ for each $t\in V$. 

\header{\bf $\alpha$-random walk. } 
By the definition formula of PageRank vector $\vpi$ given in Equation~\eqref{eqn:def_pagerank}, 
%Reviewing the recurrence relation for $\vpi(t)$ given in Equation~\eqref{eqn:ite_pagerank}, the PageRank vector $\vpi$ actually indicates the stationary distribution of the Markov chain that at each step: 
%the PageRank vector $\vpi$ actually corresponds to the stationary distribution of a random walk based Markov chain. Specifically 
we can further derive: 
%In the seminal paper of PageRank~\cite{page1999pagerank}, Page et al. also propose an alternative definition of the PageRank vector $\vpi$ using the stationary distribution of a Markov chain: 
\begin{align}\label{eqn:power_series}
\vpi=\left(\mathbf{I}-(1-\alpha)\A \D^{-1}\right)^{-1}\cdot \left(\frac{\alpha}{n}\cdot \bm{1}\right). 
\end{align} 
As pointed out in~\cite{lofgren2015PHDthesis}, Equation~\eqref{eqn:power_series} can be solved using a power series expansion~\cite{avrachenkov2007monte}: 
\begin{align}\label{eqn:ite_power_method}
\vpi=\sum_{i=0}^{\infty} \alpha(1-\alpha)^i\cdot (\A\D^{-1})^{i}\cdot \frac{1}{n}\cdot \bm{1}, 
\end{align}
where $\vpi$ corresponds to a random walk probability distribution. Specifically, %a random walk
%which actually indicates the probability distribution of a random walk 
%of length $L \sim G(\alpha)$~\cite{fogaras2005MC, jeh2003scaling, fogaras2003start}. Specifically, 
a random walk on the graph is a sequence of nodes $W=\{w_0, w_1, w_2, \ldots\}$ that the $i$-th step (i.e., the node $w_i$) in the walk is selected uniformly at random from the neighbor of node $w_{i-1}$. The PageRank value of node $t$ equals to the probability that a so called {\em $\alpha$-random walk} (or $\alpha$-discounted random walks in some literature)~\cite{wang2017fora, wang2020RBS} simulated from a uniformly selected source node $s$ terminates at node $t$. Note that in each step (e.g., currently at node $u$), an $\alpha$-random walk: 
\begin{itemize}
    \item with probability $(1-\alpha)$, select a neighbor $v$ uniformly at random from the adjacency list $N(u)$ of node $u$, and moves from $u$ to $v$; 
    \item with probability $\alpha$, terminates at the current node $u$. 
\end{itemize}
Therefore, the length $L$ of an $\alpha$-random walk is a geometrical random number following the geometric distribution $L\sim G(\alpha)$. The expectation of $L$ is therefore a constant that $\E\left[L\right]=\frac{1}{\alpha}$. 
%a random walk of length $L$ terminates at node $t$, where the source node $w_0$ is selected uniformly at random from the vertex set $V$ and the length of the walk $L$ is a geometrical random number generated according to the geometric distribution $G(\alpha)$. Such random walk is called as {\em $\alpha$-random walk} (or $\alpha$-discounted random walks in some literature)~\cite{wang2017fora, wang2020RBS}. %The damping factor $\alpha$ is a constant strictly less than $1$ (i.e., $\alpha \in (0,1)$). 

\header{\bf Problem Definition. } In this paper, we concern the problem of single-node PageRank computation. Specifically, given a target node $t$, a relative error parameter $\rela$, and a failure probability parameter $p_f$, we aim to derive a $(c,p_f)$ approximation of $\vpi(t)$, which is formally defined as follows. 
%Given an undirected graph $G=(V,E)$ with a target node $t\in V$, a teleport probability $\alpha \in (0,1)$, and two parameters $\rela \in [0,1]$ and $\pf \in [0,1]$, we aim to compute an $(\e_r, \pf)$ approximation of $\vpi(t)$, which is defined as below.
\begin{definition} [$(\rela,\pf)$-Approximation of Single-Node PageRank]\label{def:problem}
%\begin{definition} [Estimating Single-Node PageRank with Constant Relative Error]
%\label{def:def_relaerr}
%Given a target node $t$ on graph $G=(V,E)$, the task of estimating single-node PageRank with constant relative error $c\in (0,1)$ aims to compute an estimator $\epi(t)$ of the single-node PageRank $\vpi(t)$, such that 
Given a target node $t$ in the graph $G=(V,E)$, %an estimator $\epi(t)$ of the single-node PageRank $\vpi(t)$, two parameters $\rela\in (0,1)$ and $\pf\in (0,1)$, 
$\epi(t)$ is an $(\rela, \pf)$-approximation of the single-node PageRank $\vpi(t)$ if 
\vspace{-1mm}
\begin{equation}\nonumber %\label{eqn:def_maxerr}
\left|\epi(t) - \vpi(t) \right| \le \rela \cdot \vpi(t)
\end{equation}
%simultaneously hold with probability no less than $1-p_f$. 
holds with probability at least $1-\pf$. %, where $c$ and $p_f$ are both constants in $(0,1)$.  
\end{definition}

Note that in a line of research~\cite{bressan2018sublinear, lofgren2014FastPPR, wang2020RBS}, $\rela$ is set as a {\em constant} and thus is omitted in the Big-Oh notation. In this paper, we assume $\rela$ is a constant following this convention. Additionally, we assume $\pf$ is also a {\em constant} without loss of generality. 
%we allow a small failure probability $\pf$ for approximating the single-node PageRank $\vpi(t)$ within the desired accuracy. , we also  since 
It's worth mentioning that a constant failure probability $\pf$ can be easily reduced to arbitrarily small with only adding a log factor to the running time by utilizing the Median-of-Mean trick~\cite{charikar2002mediantrick}.

\subsection{Personalized PageRank}
%\subsection{Personalized PageRank}
%\header{\bf Personalized PageRank (PPR). }
Apart from PageRank, the seminal paper~\cite{page1999pagerank} also propose a variant of PageRank, called Personalized PageRank (PPR), to evaluate the {\em personalized} centrality of graph vertices with respect to a given source node. The definition formula of PPR is analogous to that of PageRank except for the initial distribution: % the $\frac{1}{n}\cdot \bm{1}$ term in Equation~\eqref{eqn:def_pagerank} is replaced with an one-hot vector $\bm{e}_s$: 
\begin{align}\label{eqn:def_ppr}
\vpi_s=(1-\alpha)\A \D^{-1}\cdot \vpi_s +\alpha \bm{e}_s. 
\end{align}
Specifically, $\vpi_s \in \mathcal{R}^{n}$ is called the single-source PPR vector, where $\vpi_s(t)$ denotes the PPR value of node $t$ with respect to node $s$. $\bm{e}_s$ is an one-hot vector that $\bm{e}_s(s)=1$ and $\bm{e}_s(u)=0$ if $u\neq s$. Analogously, by applying the power series expansion~\cite{avrachenkov2007monte}, we can derive: 
\begin{align}\label{eqn:ite_power_method_ppr}
\vpi_s=\sum_{\ell=0}^{\infty}\alpha (1-\alpha)^\ell \left(\A \D^{-1}\right)^{\ell}\cdot \bm{e}_s. 
\end{align}
Equation~\eqref{eqn:ite_power_method_ppr} provides a probabilistic interpretation on the PPR score. Specifically, the PPR value $\vpi_s(u)$ corresponds to the probability that an $\alpha$-random walk generated from node $s$ terminates at node $u$. Additionally, by comparing Equation~\eqref{eqn:ite_power_method_ppr} with Equation ~\eqref{eqn:ite_power_method}, we note that the PageRank score $\vpi(t)$ is actually an average over all $\vpi_u(t)$ for $\forall u\in V$: 
%the relation between PageRank and Personalized PageRank can be directly derived from Equation~\eqref{eqn:ite_power_method} and ~\eqref{eqn:ite_power_method_ppr}: 
%\begin{fact}\label{fact:PageRank_PPR}
\begin{align}\label{eqn:PageRank_PPR}
\vpi(t)=\frac{1}{n}\cdot\sum_{s\in V}\vpi_s(t). 
\end{align}
%\end{fact}
%Recall that the PageRank value of node $t$ can be interpreted as a probability that an $\alpha$-random walk simulated from a {\em random} node terminates at node $t$. In comparison, the PPR value of node $t$ w.r.t node $s$ corresponds to the probability that an $\alpha$-random walk starting from {\em node $s$} terminates at node $t$. And Equation~\eqref{eqn:PageRank_PPR} indicates that $\vpi(t)$ is an average over all $\vpi_s(t)$ for $\forall s\in V$. 
In particular, on undirected graphs, PPR vectors exhibit an underlying {\em reversibility property} that for any node-pair $(u,v)\in V^2$~\cite{lofgren2015BiPPRundirected}:   
%Furthermore, a crucial property held only by the PPR values on undirected graphs is: 
\begin{align}\label{eqn:birectional_ppr}
\vpi_u(v)\cdot d_u=\vpi_v(u) \cdot d_v. 
\end{align}

\header{\bf $\boldsymbol{\ell}$-hop PPR. } Given a source node $s$, a target node $t$ and an integer $\ell\ge 0$, the $\ell$-hop PPR $\vpi_s^{(\ell)}(t)$ corresponds to the probability that an $\alpha$-random walk generated from node $s$ terminates at node $t$ exactly in its $\ell$-th step. The $\ell$-hop PPR vector $\vpi_s^{(\ell)}$ is defined as below. 
%we further present the definition of $\ell$-hop PPR vector $\vpi_s^{(\ell)}$: 
\begin{align}\label{eqn:def_lhopppr}
\vpi^{(\ell)}_s=\alpha (1-\alpha)^\ell \cdot \left(\A \D^{-1}\right)^{\ell} \bm{e}_s. 
\end{align}
By Equation~\eqref{eqn:def_lhopppr} and Equation~\eqref{eqn:ite_power_method}, we can thus derive $\vpi_s=\sum_{\ell=0}^{\infty} \vpi_s^{(\ell)}$. %From the probabilistic aspect, the $\ell$-hop PPR of node $t$ w.r.t node $s$ equals the probability that an $\alpha$-random walk starting from node $s$ terminates at node $t$ exactly at the $\ell$-th step. 
Moreover, the $\ell$-hop PPR value $\vpi^{(\ell)}_s(u)$ admits the following recursive equation that for each node $v\in V$ and each integer $\ell\ge 1$:  
%\begin{fact}\label{fact:PPR_recur}
\begin{align}\label{eqn:PPR_recur}
\vpi_t^{(\ell+1)}(v)=\sum_{u\in N(v)}\frac{(1-\alpha)}{d_u}\cdot \vpi_t^{(\ell)}(u). 
\end{align}
%\end{fact}
Moreover, the $\ell$-hop PPR vector also exhibits the reversibility property on undirected graphs. More specifically, for every two nodes $u,v$ in an undirected $G$ and every $\ell\in \{0,1, \ldots \}$, we have: 
%\begin{fact}\label{fact:undirectedPPR}
\begin{align}\label{eqn:undirectedPPR}
\vspace{-2mm}
\vpi^{(\ell)}_s(t)\cdot d_s=\vpi^{(\ell)}_t(s)\cdot d_t. 
\end{align}
%\end{fact} 

%\header{\bf Evolving Set. } 

%\header{\bf Problem Definition. } 

%%% Local Variables:
%%% mode: latex
%%% TeX-master: "paper"
%%% End:

\section{Analysis of Existing Methods} \label{sec:related}

\begin{figure*}[t]
%\begin{minipage}[t]{1\textwidth}
\centering
%\vspace{-5mm}
%    \begin{footnotesize}
\begin{tabular}{c}
\hspace{+1mm}
\includegraphics[height=48mm,trim=0 0 30 0,clip]{./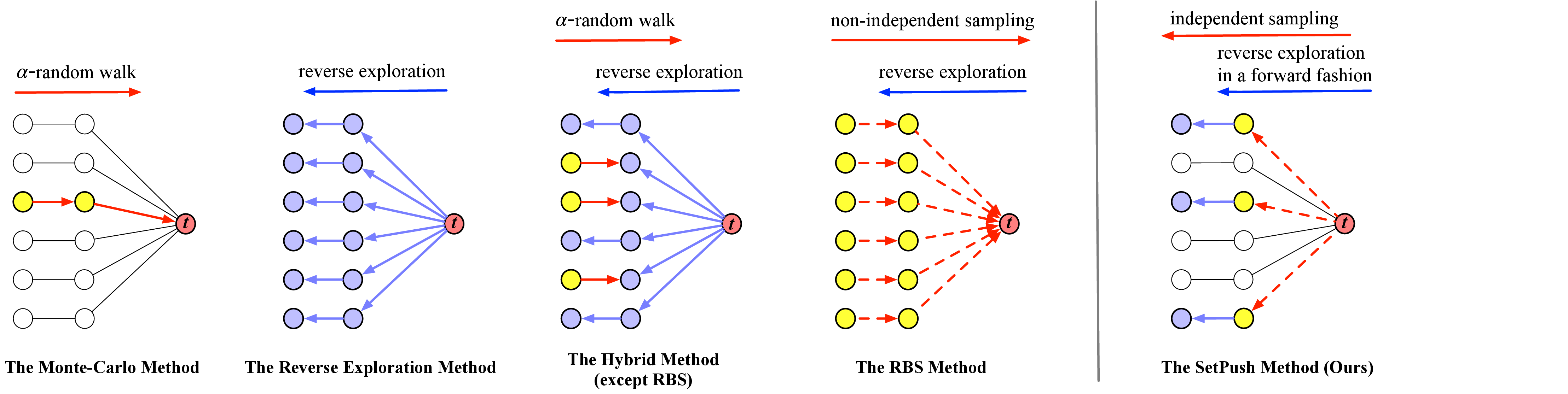}
\end{tabular}
\vspace{-4mm}
\caption{Comparison of existing methods. }
\label{fig:related}
\vspace{-4mm}
%\end{minipage}
\end{figure*}

In this section, we present a brief review on existing approaches for single-node PageRank computation. Specifically, we classify existing methods into four categories: the power method~\cite{page1999pagerank}, the Monte-Carlo method~\cite{fogaras2005MC}, the reverse exploration method~\cite{andersen2007contribution, lofgren2013personalized} and the hybrid method~\cite{lofgren2014FastPPR, lofgren2016BiPPR, bressan2018sublinear, wang2020RBS}. %for combining the Monte-Carlo sampling and reverse explorations. 
Figure~\ref{fig:related} provides a sketch to illustrate the differences among these methods.

\subsection{The Power Method} 
The power method~\cite{page1999pagerank} is an iterative method for computing PageRank values of all nodes in the graph. It defines an $n$-dimensional vector $\epi$ as an approximation of the PageRank vector $\vpi$, where $\epi(t)$ is an estimate of node $t$'s PageRank $\vpi(t)$. The power method initially sets $\epi$ as $\frac{1}{n}\cdot \bm{1}$, and iteratively updates $\epi$ according to the definition formula given in Equation~\eqref{eqn:def_pagerank} 
%\begin{align*}
%\epi'=(1-\alpha)\A \D^{-1}\epi+\frac{\alpha}{n}\cdot \bm{1},   
%\end{align*}
until $\epi$ merely converges. As demonstrated in~\cite{haveliwala2003convergence_rate}, the convergence rate of the power method is given by $(1-\alpha)$. For the typical setting that $\alpha=0.2$, the convergence rate of PageRank becomes $0.8$, which turns out to be every fast even on large-scale graphs. 

However, a major drawback of the power method is that the power method involves a multiplication between the transition matrix $\P=\A \D^{-1}$ and the PageRank vector $\vpi$ in each iteration. Note that $\P$ is an $n\times n$ matrix with $m$ nonzero entries and $\vpi$ is an $n$-dimensional vector. Thus, the power method requires at least $O(m)$ time in each iteration, which is time-costly especially for single-node PageRank queries on large-scale graphs. 

\subsection{The Monte-Carlo Method} 
Recall that the PageRank score of node $t$ equals the probability that an $\alpha$-random walk simulated from a uniformly selected source node terminates at node $t$. Thus, the Monte-Carlo method~\cite{fogaras2005MC} generates $n_r$ $\alpha$-random walks in the graph, where the source node of each walk is independently selected from $V$ uniformly at random. Then the Monte-Carlo method computes $\frac{1}{n_r}\cdot \sum_{w=1}^{n_r} \I^{(w)}(t)$ as an estimate of $\vpi(t)$,  
%each of which starts from a random node on the graph and terminates at the $L$-th step that $L\sim G(\alpha)$. 
where $\I^{(w)}(t)$ is an indicator variable that $\I^{w}(t)=1$ if the $w$-th random walk terminates at node $t$. %The method computes $\frac{1}{n_r}\cdot \sum_{w=1}^{n_r} \I^{(w)}(t)$ as an estimator of $\vpi(t)$. 
{\rev 
By the Chernoff bound, the number of $\alpha$-random walks that is required to derive a $(c,p_f)$-approximation of $\vpi(t)$ can be bounded as $n_r=\tilde{O}\left(\frac{1}{\e^2}\right)$. Recall that the expected length $L$ of an $\alpha$-random walk is $\E[L]=\frac{1}{\alpha}$, which is a constant. Consequently, the expected time cost of the Monte-Carlo method for achieving the $(\rela, \pf)$-approximation of single-node PageRank is bounded by $O(n_r)=\tilde{O}\left(n\right)$.
}

\subsection{The Reverse Exploration Method}\label{subsec:reverse_exploration}
%Another line of research~\cite{andersen2007contribution, lofgren2013personalized, wang2020RBS,jeh2003scaling} addressed the problem of computing the linear system given in Equation~\eqref{eqn:def_pagerank} via local exploration. %The main idea of these methods is based on the recursive formula as shown in Equation~\eqref{eqn:ite_pagerank}. 
Another line of research~\cite{andersen2007contribution, lofgren2013personalized,jeh2003scaling} computes single-node PageRank via reverse explorations. Specifically, given a target node $t$, this line of methods aim to estimate the contribution that each node makes to node $t$'s PageRank. Specifically, as a well-known reverse exploration method, LocalPush~\cite{lofgren2013personalized} reversely explores the graph from the target node $t$ to its ancestors, propagating the probability mass initially at the target node $t$ to its neighbors step by step. %The computation of PageRank contribution is commonly referred to as the single-target PPR query, which asks for the PPR value $\vpi_u(t)$ of each node $u$ on the graph to a given target node $t$. Recall that $\vpi(t)=\frac{1}{n}\cdot \sum_{u\in V} \vpi_u(t)$
%\begin{align*}%\label{eqn:pagerank_ppr_eqn}
%\vpi(t)=\frac{1}{n}\cdot \sum_{u\in V} \vpi_u(t)
%\end{align*}
%as shown in Equation~\eqref{eqn:PageRank_PPR}. Thus, this line of research use the reverse exploration first to estimate the PPR value $\vpi_u(t)$ of each node $u$, then utilize Equation~\eqref{eqn:PageRank_PPR} to derive the estimator of $\vpi(t)$. %In Figure~\ref{fig:related}, we present a sketch to illustrate the differences between the methods of reverse explorations and Monte-Carlo simulation. 
To be more specific, the LocalPush method repeatedly conducts {\em backward push} operations, updating two variables $\r^b(v)$ and $\epi^b(v)$ for each node $v$ in graph $G$ during the query phase.
%two $n$-dimensional vectors $\r^b$ and $\epi^b$. 
In particular, $\r^b(v)$ is called the (reverse) {\em residue} of node $v$, which records the probability mass that is to be reversely pushed from node $v$ to its ancestors. $\epi^b(v)$ is called the (reverse) {\em reserve} of $v$, which records the probability mass that has been received by node $v$ so far. Initially, LocalPush sets $\r^b(v)=\epi^b(v)=0$ for every $v\in V$ except $\r^b(t)=1$. 
During the query phase, 
LocalPush repeatedly conducts the following backward push operations from all nodes $v$ with $\r^{b}(v)\ge \e$. Specifically, in the backward push operation at node $v$, LocalPush updates $\epi^b(v)$ and $\r^b(v)$ as follows: 
%\begin{align*}
%$\epi^b=\epi^b+\alpha\cdot (\r^b(v)\cdot \bm{e}_v)$,  
%\end{align*}
%and 
%\begin{align*}
%\r^b=\r^b-\r^b(v)\bm{e}_v +(1-\alpha)\cdot \D^{-1}\A\cdot \left(\r^b(v)\cdot \bm{e}_u\right), 
%\end{align*}
%respectively. 
%Here $\r^b(v)$ denotes the probability mass maintained at node $v$ before this backward push. 
%In the backward push at node $v$, we conduct the following steps: 
\begin{itemize}
    \item convert $\alpha$ fraction of the probability mass currently at $\r^b(v)$ to its reserve: $\epi^b(v) \gets \epi^b(v)+\alpha \cdot \r^b(v)$; 
    \item reversely push the remained mass at $\r^b(v)$ to the neighbors of node $v$: for each $u\in N(v)$, $\r^b(u) \gets \r^b(u)+(1-\alpha)\cdot \frac{\r^b(v)}{d_u}$; 
    \item set $\r^b(v)$ as $0$: $\r^b(v)\gets 0$. 
\end{itemize} 
%Moreover, $\r^b$ admits an recurrence relation for $\epi$ and $\r$: 
%update $\r^b$ following: 
%Here node $v$ is called as the {\em current node}, which denotes the node from which we are going to perform the backward push operation currently. 
%To be more specific, in the backward push operation at node $v$, 
%we convert $\alpha$ fraction of the probability mass maintained by $\r^b(v)$ to its reserve $\epi^b(v)$, i.e., $\epi^b(v) \gets \epi^b(v)+\alpha \r^b(v)$, reversely push the remained probability mass (i.e., $(1-\alpha)\r^b(v)$) to all the neighbors of node $v$ (i.e., $\r^b(u) \gets \r^b(u)+(1-\alpha)\cdot \frac{\r^b(v)}{d_u}$ for each $u\in N(v)$), and reduce $\r^b(v)$ to $0$. 
%Additionally, $\epi^b(u)$ is an underestimate of the PPR value $\epi_u(t)$, which admits the recurrence relation for $\epi$ and $\r$: 
%\begin{align*}
%    \epi^b=\epi^b+\alpha\cdot (\r^b(u)\cdot \bm{e}_u). 
%\end{align*}
%Initially, LocalPush sets $\r^b(v)$ and $\epi^b(v)$ as $0$, except $\r^b(t)=1$. During the query phase, 
When no node in graph $G$ has the residue that is larger than $\e \in (0,1)$, the algorithm terminates. 
%LocalPush repeatedly conducts backward push operations until all nodes' residue is smaller than a threshold $\e \in (0,1)$. Finally, 
LocalPush then uses $\epi(t)=\frac{1}{n}\cdot \sum_{u\in V}\epi^b(u)$ as an estimate of $\vpi(t)$. 

In particular, %Lofgren et al. proposed the LocalPush method~\cite{lofgren2013personalized}, which sets an additional push threshold $\e \in (0,1)$ for scalability, and only performs backward push operations at the nodes $v$ with $\r^b(v)\ge \e$. %For example, in Figure~\ref{fig:related}, we omit the backward push operations at node $x$ and $y$ since $\r^b(x)<\e$ and $\r^b(y)<\e$. 
Lofgren et al.~\cite{lofgren2013personalized} prove that throughout the backward push process, $\epi^b(v)$ is always an underestimate of $\vpi_v(t)$ that $\vpi_v(t)-\epi^b(v)\le \e$, where $\vpi_v(t)$ denotes the PPR of $t$ (w.r.t node $v$), and $\epi^b(v)$ is the reserve of node $v$. Hence, by setting the push threshold $\e=\frac{\rela \alpha}{n}$, we can derive: 
\begin{align}\label{eqn:localpush_time}
\vpi(t)-\epi(t)=\frac{1}{n}\sum_{v\in V}\left(\vpi_v(t)-\epi^b(v)\right)\le \frac{1}{n}\sum_{v\in V} \frac{\rela \alpha}{n}\le \rela \vpi(t) 
\end{align}
when the LocalPush algorithm terminates. In the last inequality of Equation~\eqref{eqn:localpush_time}, we also adopt the lower bound $\vpi(t)\ge \frac{\alpha}{n}$, as shown in Equation~\eqref{eqn:ite_power_method}. 
In other words, by setting $\e=\frac{\rela \alpha}{n}$, the estimate $\epi(t)$ derived by LocalPush is a $(\rela, \pf)$-approximation of $\vpi(t)$. Furthermore, Lofgren et al.~\cite{lofgren2013personalized} bound the worst-case time complexity of reverse exploration method as $\sum_{u\in V}\frac{\vpi_u(t)\cdot d_u}{\e}$. By plugging into $\e=\frac{\rela \alpha}{n}$ and the reversibility property $\vpi_u(t)\cdot d_u=\vpi_t(v)\cdot d_t$ as shown in Equation~\eqref{eqn:undirectedPPR}, we have: 
\begin{align*}
\sum_{v\in V}\frac{\vpi_v(t)\cdot d_v \cdot n}{\rela \alpha}=\frac{n}{\rela \alpha}\cdot \left(\sum_{v\in V}\vpi_t(v)\cdot d_t\right)=\frac{n\cdot d_t}{\rela \alpha}=O\left(n \cdot d_t\right)    
\end{align*}
%we can further bound the time complexity of LocalPush as $\frac{n\cdot d_t}{\alpha}\cdot \sum_{v\in V}\vpi_t(v)$. 

Note that the $O(n\cdot d_t)$ complexity may become $O(n^2)$ on some dense graphs where $d_t \to n$. To circumvent this problem, Lofgren and Goel~\cite{lofgren2013personalized} use a priority queue ordered by the residue $\r^b(v)$ of node $v$. Each time we pop off the node $v$ with the greatest $\r^b(v)$ on the graph and conduct the backward push at node $v$. %For the ease of presentation, we denote such method as PriorityPush. 
As a result, the worse-case time complexity of LocalPush is improved to $\tilde{O}\left(\min\left\{n\cdot d_t, m\right\}\right)$ for deriving a $(\rela, \pf)$-approximation of $\vpi(t)$. %Nonetheless, such complexity is even worse than that of the Monte-Carlo method.  

%Specifically, these methods first estimate the Personalized PageRank $\vpi_u(t)$ of each node $u\in V$, and utilize the relation $\vpi(t)=\frac{1}{n}\cdot \sum_{u\in V}\vpi_u(t)$ given in Fact~\ref{fact:PageRank_PPR} to derive the estimator of $\vpi(t)$. 

%\vspace{-2mm}
\subsection{The Hybrid Method} 
Another set of papers~\cite{lofgren2014FastPPR, lofgren2016BiPPR, bressan2018sublinear, wang2020RBS} prove some novel results by combining the Monte-Carlo method and the reverse exploration method together. The key idea is first proposed in FastPPR~\cite{lofgren2014FastPPR}, which introduces a bi-directional approximation algorithm for single-node PageRank:
\begin{align}\label{eqn:bidirection_estimator}
\vspace{-2mm}
\vpi(t)=\sum_{v\in B(t)}\Pr\left\{RW(\alpha)=v\right\} \cdot \vpi_v(t). 
\vspace{-4mm}
\end{align}
Here $B(t)$ is a blanket set of the target node $t$ that all $\alpha$-random walks to node $t$ pass through set $B(t)$. Additionally, $\Pr\left\{RW(\alpha)=v\right\}$ denotes the probability that node $v$ is the first node in $B(t)$ hit by a randomly simulated $\alpha$-random walk. FastPPR first invokes the reverse exploration method to estimate all the PPR values $\vpi_v(t)$ for $v\in V$. Then FastPPR simulates $\alpha$-random walks to collect these estimators according to Equation~\eqref{eqn:bidirection_estimator}. %We present a sketch to illustrate this process in Figure~\ref{fig:related}. 
As a result, to achieve an $\e$-absolute error of $\vpi(t)$, FastPPR first allows an $\sqrt{\e}$ absolute error for each $\epi_v(t)$ derived in the reverse exploration phase, and only take $\frac{1}{\sqrt{\e}}$ $\alpha$-random walks in the Monte-Carlo simulation phase. Thus, the query time complexity of FastPPR can be bounded by $\frac{1}{\alpha \rela^2}\cdot \sqrt{\frac{d_t\cdot n}{\alpha}}\cdot \sqrt{\frac{\log{\left(1/\pf\right)\cdot \log{\left(n/\alpha\right)}}}{\log{\left(1/(1-\alpha)\right)}}}=\tilde{O}\left(\sqrt{n\cdot d_t}\right)$ for achieving a $(\rela, \pf)$-approximation of $\vpi(t)$. The result is subsequently improved by BiPPR~\cite{lofgren2016BiPPR, lofgren2015BiPPRundirected} to $\frac{1}{\alpha \rela}\cdot \sqrt{\frac{d_t\cdot n}{\alpha}}\cdot \sqrt{\log{(1/\pf)}}=\tilde{O}\left(\sqrt{n\cdot d_t}\right)$. Furthermore, Bressan et al.~\cite{bressan2018sublinear} proposed the \sublinear method, which optimizes the complexity result to {\rev $\tilde{O}\left(\min\left\{\frac{m^{2/3}\dmax^{1/3}}{\davg^{2/3}}, \frac{m^{4/5}}{\davg^{3/5}} \right\}\right)$}. Here $\davg$ and $\dmax$ denote the average and maximum degree of all the nodes in the graph, respectively. 

%\begin{figure}[t]
%\centering
%\begin{tabular}{c}
%\hspace{-4mm}\includegraphics[width=90mm,trim=0 0 10 0,clip]{./}
%\end{tabular}
%\vspace{-4mm}
%\caption{Comparison between the RBS method and ours. }
%\label{fig:comparison}
%\vspace{-4mm}
%\end{figure}

 \header{\bf The RBS Method. } Reviewing the hybrid methods mentioned above, the Monte-Carlo sampling phase and the reverse exploration phase serve as two separate phases and are conducted sequentially. In comparison, a recent method, RBS~\cite{wang2020RBS}, proposes to mix the two phases in a more flexible way. Specifically, the RBS method follows the framework of the reverse exploration, which reversely propagates the probability mass from the given target node $t$ to its ancestors in the graph. The difference is, in each backward push step (e.g. at node $v$), the RBS method only deterministically pushes the probability mass at $\r^b(v)$ to a small fraction of $v$'s neighbors (i.e., deterministically increase the residue of $u\in N(v)$ if the residue increment $\frac{(1-\alpha)\r^b(v)}{d_u}\ge \theta$, where $\theta$ is a threshold for deterministic push). For the other neighbors $u$, the RBS method generates a uniform random $rand \in (0,1)$ and only updates the residues $\r^b(u)$ if $\frac{(1-\alpha)\r^b(v)}{d_u}\ge rand \cdot \theta$. 
 {\rev 
 By this means, the RBS method avoids to touch all neighbors, and successfully reduces an $O(\davg)$ gap between the time complexity of LocalPush~\cite{andersen2007contribution} and the lower bound for single-target PPR queries. Here $\davg$ denotes the average node degree in the graph. For the single-node PageRank computation, the expected time complexity of RBS can be bounded by $\tilde{O}\left(n\right)$ by setting $\theta=\frac{\rela^2 \cdot \vpi(t)}{12\cdot \log_{1-\alpha}{(c \alpha /2n)}}$. 
 
 The theoretical insight introduced by RBS is encouraging, which enlightens us that we may flexibly mix the deterministic reverse exploration and the randomized Monte-Carlo sampling in each step, instead of separately performing the two phases one by one. 
 }
 %With careful choice of the switch threshold, we may achieve better scalability without the decrement of approximation accuracy. 
%However, the limitation of the RBS method is that the sampling operation invoked in each backward push is not independent, which causes large variance and can only achieve a $\tilde{O}\left(n\right)$ expected time cost at best for $(\rela, \pf)$-approximation of single-node PageRank. 

%%% Local Variables:
%%% mode: latex
%%% TeX-master: "paper"
%%% End:

\vspace{-2mm}
\section{Algorithm}\label{sec:algorithm}
{\rev 
%In this section, we present \setpush, an approximation algorithm for single-node PageRank on undirected graphs. Given an arbitrary target node $t$, the expected time complexity of \setpush to derive a $(c,p_f)$-approximation of $\vpi(t)$ can be bounded by $\tilde{O}\left(\min\left\{d_t, \sqrt{m}\right\}\right)$.  %with a given target node $t$.

This section presents our \setpush algorithm. Before introducing the details, we first illustrate the reasons why existing methods are unable to achieve the $\tilde{O}\left(\min\left\{d_t, \sqrt{m}\right\}\right)$ time complexity for the single-node PageRank computation on undirected graphs. 
} 

\vspace{-1mm}
\subsection{Limitations of Existing Methods}\label{subsec:reasonswhy}
%\vspace{-1mm}
\begin{itemize}
    \item 
    For the Monte-Carlo method, the lower bound of the query time complexity for deriving a $(c, p_f)$-approximation of $\vpi(t)$ is $\Omega\left(\frac{1}{\vpi(t)}\right)$. By the definition formula of PageRank, the initial probability distribution of simulating $\alpha$-random walks is $\frac{1}{n}\cdot \bm{1}$. Therefore, in the worst-case scenario where $\vpi(t)=O\left(\frac{1}{n}\right)$ (e.g., the node $v$ in Figure~\ref{fig:high_level}), %a bad case for the Monte-Carlo method is to estimate the PageRank value of the node whose PageRank is approximately equal to $\frac{1}{n}$ (e.g., the node $v$ in Figure~\ref{fig:high_level}). This means 
    the Monte-Carlo method needs to simulate at least $\Omega\left(n\right)$ $\alpha$-random walks in order to hit node $t$ once. 
    %As a consequence, the Monte-Carlo method incurs an $\Omega\left(n\right)$ time cost for estimating the PageRank of any node in the graph with constant relative error. 
    %{\rev 
    %As a result, the Monte-Carlo method incurs an $\Omega\left(n\right)$ time cost for deriving a $(c, p_f)$-approximation of $\vpi(t)$ even on undirected graphs. 
    %}
    %For the reverse exploration method, the backward push operation makes them subject to a time complexity lower bound of $O(d_t)$ for pushing the mass initially at $\r^b(t)$ to its $d_t$ neighbors. 
    \item For the reverse exploration method, we require at least  $O(d_t)$ time to 
    %the backward push operation introduces an $O(d_t)$ time cost for 
    reversely push the probability mass initially at node $t$ to all of its neighbors (i.e., the $d_t$ neighbors). 
    %makes them subject to a time complexity lower bound of $O(d_t)$ for pushing the mass initially at $\r^b(t)$ to its $d_t$ neighbors.   
    Consider the node $u$ in Figure~\ref{fig:high_level}, where the neighborhood size of $u$ is $O(n)$. When we conduct backward push operations from node $u$, the time cost has reached $\Omega(n)$ only after the first backward push operation. 

    \item %For the hybrid method by employing the reverse exploration as a separate phase (e.g., FastPPR~\cite{lofgren2014FastPPR} and BiPPR~\cite{lofgren2016BiPPR}), the reverse exploration phase still costs $O(d_t)$ time after the first backward push operation, which becomes $O(n)$ in some bad cases (e.g., given node $u$ as the target node in Figure~\ref{fig:high_level}). 
    For the hybrid method, 
    %which combine Monte-Carlo sampling with the vanilla backward push operation in the algorithm (e.g., FastPPR~\cite{lofgren2014FastPPR} and BiPPR~\cite{lofgren2016BiPPR}), 
    the above mentioned limitations still exist. Exceptions are the \sublinear~\cite{bressan2018sublinear} and RBS~\cite{wang2020RBS} methods. 
    \begin{itemize}
        \item 
        The \sublinear method defines a blacklist to record all high-degree nodes in the graph. %the nodes with large degrees. 
        In the reverse exploration phase, the \sublinear method only performs the backward push operations from the nodes that are excluded from the blacklist. By this means, the \sublinear method effectively mitigates the limitations of the backward push operations as mentioned above. 
        %thus achieves $\tilde{O}\left(\min\left\{\frac{m^{2/3}\dmax^{1/3}}{\bar{d}^{1/3}}, \frac{m^{4/5}}{\bar{d}^{3/5}} \right\}\right)$ expected time complexity. 
        However, the \sublinear method still includes a Monte-Carlo sampling phase to simulate $\alpha$-random walks from a uniformly selected source node. Thus, the lower bound of $\Omega \left(1/\vpi(t)\right)$ for the query time complexity of the Monte-Carlo sampling methods still exists, which hinders the \sublinear method from achieving the $\tilde{O}\left(\min\left\{d_t, \sqrt{m}\right\}\right)$ time complexity for the single-node PageRank computation on undirected graphs. 
        %as shown in Table~\ref{tbl:comparison}, such complexity is still no better than $\tilde{O}\left(\min\left\{d_t, \sqrt{m}\right\}\right)$. % as illustrated in Section~\ref{sec:intro}. 

        \item For RBS, its major drawback comes from the sampling operation that RBS adopts in each backward push operation. Specifically, the sampling operation adopted in each backward push operation of RBS is non-independent. Consider the bad case scenario as shown in Figure~\ref{fig:related}. The residue increment $\frac{(1-\alpha)\r^b(v)}{d_u}$ of each neighbor $u\in N(t)$ is identical. 
        %As a result, the function of the threshold $\theta$ for determining the fraction of deterministic push is disabled. in this bad case scenario, the RBS method needs to cost $O(d_t)=O(n)$ to deterministically 
        As a result, for all neighbors $u\in N(t)$, the conditions to conduct a randomized push (i.e., $\frac{(1-\alpha)\r^b(v)}{d_u}\ge rand \cdot \theta$) are satisfied simultaneously, which, again, leads to the $O(d_t)=O(n)$ time cost in such bad case scenario. 
        %which causes large variance and can only achieve a $\tilde{O}\left(n\right)$ expected time cost at best for $(\rela, \pf)$-approximation of single-node PageRank. 
    \end{itemize}
    %An exception is the \sublinear method proposed by Bressan et al. ~\cite{bressan2018sublinear}.The \sublinear method sets a blacklist to record all the nodes with large degrees. In the reverse exploration phase, the \sublinear method performs the backward push operations only from the nodes not in the blacklist set, and thus achieves $\tilde{O}\left(\min\left\{\frac{m^{2/3}\dmax^{1/3}}{\bar{d}^{1/2}}, \frac{m^{4/5}}{\bar{d}^{3/5}} \right\}\right)$ expected time complexity. Nonetheless, such complexity is still no better than $\tilde{O}\left(\min\left\{d_t, \sqrt{m}\right\}\right)$ as illustrated in Section~\ref{sec:intro}. 
    %\item 
%e\vspace{-1mm}
\end{itemize}
In the following, we shall describe our \setpush in details and explain the superiority of our \setpush over existing methods. Specifically, we first define a concept called {\em truncated PageRank} in Section~\ref{subsec:trunc_pagerank}. Our \setpush is based on a $\left(\frac{\rela}{2}, \pf\right)$-approximation of the truncated PageRank. After that, in Section~\ref{subsec:highlevelidea} and ~\ref{subsec:setpushalg}, we provide the high-level ideas and detailed algorithm structure of \setpush.

%\vspace{-2mm}
\subsection{Truncated PageRank}\label{subsec:trunc_pagerank}
%\vspace{-1mm}
%\header{\bf Truncated PageRank. } 
Given a target node $t$ in an undirected graph $G=(V,E)$, a constant damping factor $\alpha\in (0,1)$, and a constant relative error $\rela$, we refer to $\bpi(t)$ as the {\em truncated PageRank} of node $t$ if %$\bpi(t)=\frac{1}{n}\cdot \sum_{s\in V}\sum_{\ell=0}^{L} \vpi^{(\ell)}_s(t)$, 
\begin{align}\label{eqn:def_trunc_pagerank}
\bpi(t)=\frac{1}{n}\cdot \sum_{s\in V}\sum_{\ell=0}^{L} \vpi^{(\ell)}_s(t), 
\end{align}
where $L=\log_{1-\alpha}\frac{\rela \alpha}{2n}=O\left(\log{n}\right)$. Analogously, we call the $n$-dimensional vector $\bpi=\frac{1}{n}\cdot \sum_{s\in V}\sum_{\ell=0}^{L} \vpi^{(\ell)}_s$ the truncated PageRank vector. By Equation~\eqref{eqn:ite_power_method_ppr}, Equation~\eqref{eqn:PageRank_PPR} and Equation~\eqref{eqn:def_lhopppr}, we can further derive: 
%We observe that the PageRank vector $\vpi$ can be expressed as the sum of the truncated PageRank vector and $\ell$-hop PPR for $\ell\in \{L+1, L+2, \ldots \}$ by utilizing :  
\begin{align*}
\vspace{-2mm}
\vpi=\frac{1}{n}\cdot \sum_{s\in V}\sum_{\ell=0}^{\infty}\vpi^{(\ell)}_s=\bpi+\frac{1}{n}\cdot \sum_{s\in V}\sum_{\ell=L+1}^{\infty}\alpha (1-\alpha)^\ell \cdot \left(\A \D^{-1}\right)^\ell \cdot \bm{e}_s. 
\end{align*}
Therefore, for every $t\in V$, we have: 
\begin{align*}
\vpi(t)=\bpi(t)+\frac{1}{n}\cdot \sum_{s\in V}\sum_{\ell=L+1}^{\infty}\alpha (1-\alpha)^\ell \cdot \bm{e}_t^{\top}\cdot \left(\A \D^{-1}\right)^\ell \cdot \bm{e}_s. 
\end{align*}
We note that for each $\ell\in \{0, 1, 2, \ldots\}$, $\left(\bm{e}_t^{\top}\cdot \left(\A \D^{-1}\right)^\ell \cdot \bm{e}_s\right) \in [0,1]$. Thus, we have $\frac{1}{n}\cdot \sum_{s\in V}\bm{e}_t^{\top}\cdot \left(\A \D^{-1}\right)^\ell \cdot \bm{e}_s \le 1$. As a consequence, we can derive $\vpi(t)\le \bpi(t)+\sum_{\ell=L+1}^{\infty} \alpha (1-\alpha)^\ell=\bpi(t)+(1-\alpha)^{L+1}$. 
%\begin{align*}
%\vpi(t)\le \bpi(t)+\sum_{\ell=L+1}^{\infty} \alpha (1-\alpha)^\ell=\bpi(t)+(1-\alpha)^{L+1}. 
%\end{align*}
Recall that $L=\log_{1-\alpha} \frac{\rela \alpha}{2n}$. Then it follows: 
\begin{align}\label{eqn:bpagerank_basic}
\vpi(t) \le \bpi(t)+ \frac{c}{2}\cdot \frac{\alpha}{n}\le \bpi(t)+ \frac{c}{2}\cdot \vpi(t), 
\end{align}
where we apply the lower bound of $\vpi(t)$ that $\vpi(t)\ge \frac{\alpha}{n}$ as shown in Equation~\eqref{eqn:ite_pagerank}. Furthermore, Lemma~\ref{lem:trunc_pagerank_approx} implies that deriving a $(\rela, \pf)$-approximation of $\vpi(t)$ can be achieved by deriving a $(\frac{c}{2}, p_f)$-approximation of $\bpi(t)$. 

\begin{lemma}\label{lem:trunc_pagerank_approx}
Given a target node $t$ in the graph $G=(V,E)$, $\epi(t)$ is a $(\rela, \pf)$-approximation of node $t$'s PageRank $\vpi(t)$ if 
\begin{align*}
|\epi(t)-\bpi(t)|\le \frac{\rela}{2}\cdot \vpi(t)
\end{align*}
holds with probability at least $1-\pf$. 
\end{lemma}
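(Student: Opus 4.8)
The plan is to reduce the lemma to a deterministic triangle-inequality argument, leveraging the two-sided control on the truncation gap $\vpi(t)-\bpi(t)$ that is already implicit in Equation~\eqref{eqn:bpagerank_basic}. The only random event in the statement is the hypothesis $|\epi(t)-\bpi(t)|\le \frac{\rela}{2}\vpi(t)$; everything involving $\vpi(t)$ and $\bpi(t)$ alone is deterministic. I would therefore first bound $|\vpi(t)-\bpi(t)|$ with no probability at all, and only splice in the random estimation error at the very last step.

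First I would establish the lower side $\bpi(t)\le \vpi(t)$. This is immediate from the definition in Equation~\eqref{eqn:def_trunc_pagerank} together with the identity $\vpi=\frac{1}{n}\sum_{s\in V}\sum_{\ell=0}^{\infty}\vpi^{(\ell)}_s$: the truncated quantity $\bpi(t)$ retains only the terms with $\ell\le L$, and every $\ell$-hop contribution $\vpi^{(\ell)}_s(t)$ is nonnegative, so discarding the tail $\ell>L$ can only decrease the sum. Hence the truncation gap satisfies $\vpi(t)-\bpi(t)\ge 0$.

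Next I would upper-bound the same gap. Equation~\eqref{eqn:bpagerank_basic} already states $\vpi(t)\le \bpi(t)+\frac{\rela}{2}\vpi(t)$, where the extra $\frac{\rela}{2}\vpi(t)$ arises from the geometric tail $(1-\alpha)^{L+1}$ combined with the lower bound $\vpi(t)\ge \frac{\alpha}{n}$ from Equation~\eqref{eqn:ite_pagerank}. Rearranging yields $\vpi(t)-\bpi(t)\le \frac{\rela}{2}\vpi(t)$, and combining with the previous step gives the deterministic estimate $|\vpi(t)-\bpi(t)|\le \frac{\rela}{2}\vpi(t)$.

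Finally I would invoke the triangle inequality. On the event where the hypothesis holds, which has probability at least $1-\pf$, we obtain
\begin{align*}
|\epi(t)-\vpi(t)| \le |\epi(t)-\bpi(t)| + |\bpi(t)-\vpi(t)| \le \frac{\rela}{2}\vpi(t)+\frac{\rela}{2}\vpi(t)=\rela\cdot \vpi(t),
\end{align*}
which is exactly the $(\rela,\pf)$-approximation guarantee of Definition~\ref{def:problem}. I expect no genuine obstacle here: the real work---taming the geometric truncation tail---was already carried out in deriving Equation~\eqref{eqn:bpagerank_basic}. The only point to handle carefully is that the truncation gap is purely deterministic, so it consumes none of the $\pf$ budget and the entire failure probability can be allocated to the estimation step.
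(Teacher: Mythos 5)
Your proof is correct and follows essentially the same route as the paper's: a triangle inequality through $\bpi(t)$ combined with the truncation-gap bound of Equation~\eqref{eqn:bpagerank_basic}. The only difference is that you explicitly verify the lower side $\bpi(t)\le \vpi(t)$ via nonnegativity of the discarded tail terms, a step the paper's proof leaves implicit when it treats Equation~\eqref{eqn:bpagerank_basic} as a two-sided bound on $\left|\bpi(t)-\vpi(t)\right|$; this is a small improvement in rigor, not a different approach.
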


\begin{proof}
For each node $t\in V$, we observe: 
\begin{equation*}
\begin{aligned}
&\left|\epi(t)-\vpi(t)\right|=\left|\epi(t)-\bpi(t)+\bpi(t)-\vpi(t)\right|\\
&\le \left|\epi(t)-\bpi(t)\right|+\left|\bpi(t)-\vpi(t)\right|\le \left|\epi(t)-\bpi(t)\right|+\frac{c}{2}\cdot \vpi(t), 
\end{aligned}    
\end{equation*}
where we plugging Equation~\eqref{eqn:bpagerank_basic} into the last inequality. Thus, if $\left|\epi(t)\hspace{-0.5mm}-\hspace{-0.5mm}\bpi(t)\right|\hspace{-0.5mm}\le \hspace{-0.5mm}\frac{c}{2}\cdot \vpi(t)$ holds with probability at least $1-\pf$, $\epi(t)$ is a $(\rela, \pf)$-approximation of $\vpi(t)$, which follows the lemma. 
\end{proof}

%\header{\bf Independent Volume-biased Evolving Set Process. } 

\subsection{Key Idea of \setpush}\label{subsec:highlevelidea}
%\subsection{A primitive operation: \spush}
Given an undirected graph $G=(V,E)$ and a target node $t$, our \setpush computes a $(c,p_f)$-approximation of node $t$'s PageRank by deriving a $(c/2, p_f)$-approximation $\epi(t)$ of $\bpi(t)$ following
\begin{align}\label{eqn:actual_eqn}
\epi(t)=\frac{1}{n}\cdot \sum_{s\in V} \sum_{\ell=0}^L \frac{d_t}{d_s}\cdot \epi_t^{(\ell)}(s). 
\end{align}
In particular, $\epi_t^{(\ell)}(s)$ is an unbiased estimator of the $\ell$-hop PPR value $\vpi_t^{(\ell)}(s)$. To understand Equation~\eqref{eqn:actual_eqn}, recall that $\vpi_t^{(\ell)}(s)\cdot d_t=\vpi_s^{(\ell)}(t)\cdot d_s$ as shown in Equation~\eqref{eqn:undirectedPPR}. Thus, if for each $s\in V$, $\epi_t^{(\ell)}(s)$ is an unbiased estimator of the $\ell$-hop PPR value $\vpi_t^{(\ell)}(s)$, then $\frac{d_t}{d_s}\cdot \epi_t^{(\ell)}(s)$ is an unbiased estimator of $\epi_s^{(\ell)}(t)$. According to the definition formula of the truncated PageRank $\bpi(t)$ as shown in Equation~\eqref{eqn:def_trunc_pagerank}, $\epi(t)$ is therefore an unbiased estimator of $\vpi(t)$. 

To compute $\epi_t^{(\ell)}(s)$, we maintain a variable called {\em $\ell$-hop residue} $\r^{(\ell)}_t(u)$ for each node $u$ in $G$. Initially, we set $\r^{(\ell)}_t\hspace{-1mm}=\bm{0}$ for $\forall \ell \hspace{-0.5mm}\in \hspace{-0.5mm}\{1,2, \ldots, L\}$ and $\r^{(0)}_t\hspace{-1mm}=\hspace{-0.5mm}\bm{e}_t$, where $\bm{0}$ is an $n$-dimensional all zero vector. During the query phase, we repeatedly conduct the following steps to update $\r^{(\ell+1)}_t$ based on $\r^{(\ell)}_t$ by iterating $\ell$ from $0$ to $L\hspace{-0.5mm}-\hspace{-0.5mm}1$: 
\begin{itemize}
\item Pick a node $u$ with nonzero $\r^{(\ell)}_t(u)$; 
\item If $(1-\alpha)\cdot \r^{(\ell)}_t(u) \ge \theta \cdot d_u$, we uniformly distribute $(1-\alpha)\cdot \r^{(\ell)}_t(u)$ to the $(\ell+1)$-hop residue $\r_t^{(\ell+1)}(v)$ of each $v\in N(u)$. To be more specific, for $\forall v\in N(u)$, $\r^{(\ell+1)}_t(v)\gets \r^{(\ell+1)}_t(v)+\frac{(1-\alpha)}{d_u}\cdot \r^{(\ell)}_t(u)$. Note that $\theta \in (0,1)$ is a tunable threshold and we provide a detailed analysis to the choice of $\theta$ in Section~\ref{sec:analysis}. 
\item Otherwise, we independently select some neighbors of $u$, and only distribute the probability mass at $\r^{(\ell)}_t(u)$ to those sampled neighbors. Notably, for each $v\in N(u)$, the expectation of $\r_t^{(\ell+1)}(v)$'s increment is still guaranteed to be $\frac{(1-\alpha)}{d_u}\cdot \r^{(\ell)}_t(u)$. 
%the expected time cost is linear to the size of the sampled outcomes, while for each $v\in N(u)$, 
%the expected increment of $\r^{(\ell+1)}_t(v)$ still equals $\frac{(1-\alpha)}{d_u}\cdot \r^{(\ell)}_t(u)$. 
%each neighbor $v\in N(u)$ is sampled with probability $\frac{(1-\alpha)}{\theta \cdot d_u}\cdot \r^{(\ell)}_t(u)$. 
\end{itemize}
%Additionally, the expected increment of $\r^{(\ell+1)}_t(v)$ still equals $\frac{(1-\alpha)}{d_u}\cdot \r^{(\ell)}_t(u)$. 

After all the $L$ iterations have been processed, we return $\epi(t)=\frac{1}{n}\cdot \sum_{s\in V} \sum_{\ell=0}^L \frac{d_t}{d_s}\cdot \alpha \cdot  \r^{(\ell)}_t(s)$ as an estimator of $\vpi(t)$. 

As we shall demonstrate in Section~\ref{sec:analysis}, the $\ell$-hop residue vector $\r_t^{(\ell)}$ is an unbiased estimate of $\frac{1}{\alpha}\cdot \vpi_t^{(\ell)}$. In other words, $\E\left[\r_t^{(\ell)}(u)\right]=\frac{1}{\alpha}\cdot \vpi_t^{(\ell)}(u)$ holds for each $u\in V$. 
%To see the correctness of such update process, 
%The correctness of the above update process can be directly verified based on Equation~\eqref{eqn:PPR_recur}. 
%Specifically, we show that for each $u\in V$, $\r_t^{(\ell)}(u)$ is an unbiased estimate of $\frac{1}{\alpha}\cdot \vpi_t^{(\ell)}(u)$. Then by Equation~\eqref{eqn:undirectedPPR} and Equation~\eqref{eqn:def_trunc_pagerank}, the correctness of our \setpush follows. 
To see this, we observe that $\vpi^{(0)}_t=\alpha \cdot \bm{e}_t$ holds by definition. Recall that we set $\r^{(0)}_t=\bm{e}_t$ as mentioned above. Therefore, $\E\left[\r^{(\ell)}_t\right]=\frac{1}{\alpha}\cdot \vpi^{(\ell)}_t$ holds when $\ell=0$. Furthermore, let us assume $\r^{(\ell)}_t=\frac{1}{\alpha}\cdot \vpi^{(\ell)}_t$ holds for any $i \in [0,\ell]$. Then for each $v\in V$, the expectation of $\r^{(\ell+1)}_t(v)$ satisfies: 
\begin{align*}
\E\left[\r^{(\ell+1)}_t(v)\right]=\hspace{-2mm}\sum_{u\in N(v)}\hspace{-2mm}\frac{(1-\alpha)}{d_u}\cdot \E\left[\r^{(\ell)}_t(u)\right]=\hspace{-2mm}\sum_{u\in N(v)}\hspace{-2mm}\frac{(1-\alpha)}{d_u}\cdot \frac{\vpi^{(\ell)}_t(u)}{\alpha}. 
\end{align*}
By Equation~\eqref{eqn:PPR_recur}, we can therefore derive $\E\left[\r^{(\ell+1)}_t(v)\right]=\frac{1}{\alpha}\cdot \vpi^{(\ell+1)}_t(v)$. Consequently, for every $\ell\in \{1, \ldots, L\}$, $\E\left[\r^{(\ell)}_t\right]\hspace{-1mm}=\frac{1}{\alpha}\cdot \vpi^{(\ell)}_t$ holds by induction. The formal proof can be found in Section~\ref{sec:analysis}. 

Furthermore, it can be proved that $\epi(t)$ is also an unbiased estimator of the truncated PageRank $\bpi(t)$. Specifically, recall that $\epi(t)=\frac{1}{n}\cdot \sum_{s\in V}\sum_{\ell=0}^{L} \frac{d_t}{d_s}\cdot \alpha \cdot \er^{(\ell)}_t(s)$ according to Algorithm~\ref{alg:VBES}. By applying the linearity of expectation, we can thus derive 
\begin{align*}
\E\left[\epi(t)\right]=\frac{1}{n}\cdot \sum_{s\in V}\sum_{\ell=0}^{L} \frac{d_t}{d_s}\cdot \alpha \cdot \E \left[\r_t^{(\ell)}(s)\right]=\frac{1}{n}\cdot \hspace{-1mm}\sum_{s\in V}\sum_{\ell=0}^{L} \frac{d_t}{d_s}\cdot \vpi_t^{(\ell)}(s).     
\end{align*}
Recall that in Equation~\eqref{eqn:undirectedPPR}, we show that $\frac{d_t}{d_s}\cdot \vpi^{(\ell)}_t(s)=\vpi^{(\ell)}_s(t)$, following $\E\left[\epi(t)\right]=\frac{1}{n}\cdot \hspace{-1mm}\sum_{s\in V}\sum_{\ell=0}^{L} \vpi_s^{(\ell)}(t)=\bpi(t)$. 

%Recall that the $\ell$-hop PPR value $\vpi_t^{(\ell)}(u)$ corresponds to the probability that an $\alpha$-random walk starting from node $t$ terminates at node $u$ at the $\ell$-th step. 
%$\r^{(\ell)}(u)$ to record the probability mass that is to be propagated from node $u$ to its neighbors. Initially,
%Analogous to the reverse exploration method mentioned before, we also maintain a residue vector to record the probability mass to be pushed from the current node. 

{\rev 
\header{\bf Advantages of the Push Operation Adopted in \setpush. }
}
%Note that the $\ell$-hop residue $\r^{(\ell)}_t(u)$ defined above is similar in spirit to the one used in the backward push operations (see Section~\ref{subsec:reverse_exploration}), but differs in two crucial aspects as described below. 
{\rev 
Note that the $\ell$-hop residue $\r^{(\ell)}_t(u)$ defined above is similar in spirit to the one used in the vanilla backward push operation adopted in the reverse exploration method (see Section~\ref{subsec:reverse_exploration}), but differs in two crucial aspects as described below. 
\begin{itemize}
    \item To distribute the probability mass maintained at $\r^{(\ell)}_t(u)$, the backward push operation (except in RBS~\cite{wang2020RBS}) %merely adopted in all reverse exploration and hybrid  methods~\cite{andersen2007contribution, lofgren2013personalized, lofgren2014FastPPR, lofgren2016BiPPR, bressan2018sublinear} (except the RBS method~\cite{wang2020RBS}) 
    touches every neighbor $v$ of $u$ to update the residue of $v$, which costs $O(d_u)$ deterministically. In comparison, for the node $u$ with $(1-\alpha)\cdot \r^{(\ell)}_t(u)\le \theta \cdot d_u$, we only select some neighbors $v\in N(u)$ to update $\r^{(\ell+1)}_t(v)$. Therefore, the time cost of each update process is only proportional to the size of the sampled outcomes. By this means, we successfully avoid the $O(d_u)$ term of time complexity introduced by the vanilla backward push. 
    \item Compared to the RBS method, we independently sample the neighbors $v$ from $N(u)$ to update $\r^{(\ell+1)}_t(v)$. As a consequence, the increment of $\r^{(\ell+1)}_t(v)$ for each $v\in V$ is independent with each other. In contrast, the sampling technique adopted in the RBS method~\cite{wang2020RBS} is non-independent, resulting in either large variance or expensive time cost. For example, consider the graph shown in Figure~\ref{fig:related} with node $t$ as the given target node. For the RBS method, the sampling condition of each $u\in N(t)$ is satisfied simultaneously, which costs either $O(n)$ time or unbounded approximation error. Instead, in \setpush, we can independently some $u\in N(t)$ to update $\r^{(\ell)}_t(u)$. 
\end{itemize}
}

\subsection{The \setpush Algorithm}\label{subsec:setpushalg}
\begin{algorithm}[t]
%\caption{\localpush for $L$-hop Transition Probabilities}
\caption{The \setpush Algorithm}
\label{alg:VBES}
\BlankLine
%\KwIn{Graph $G=(V,E)$, source node $s\in V$, length of random walk $L$, relative error threshold $\delta$, failure probability $p_f$, the number of random walk $n_r$\\}
\KwIn{Undirected graph $G=(V,E)$, target node $t\in V$, constant damping factor $\alpha$, threshold $\theta$\\}
\KwOut{Estimator of $\vpi(t)$\\}
Initialize two $n$-dimensional vectors $\er^{(0)}_t \hspace{-1.5mm}\gets \hspace{-0.5mm}\bm{e}_t$ and $\epi_t \hspace{-0.5mm} \gets \hspace{-0.5mm} \alpha \bm{e}_t$\; 
$L \gets \log_{1-\alpha}\frac{c\alpha}{2n}$\;
\For{$\ell$ from $0$ to $L-1$}{
    Initialize an $n$-dimensional vector $\er^{(\ell+1)}_t \gets \bm{0}$\; 
    \For{each $u\in V$ with nonzero $\er^{(\ell)}_t(u)$}{
        \If{$(1-\alpha)\cdot \er^{(\ell)}_t(u)\ge \th\cdot d_u$}{
            \For{each $v\in N(u)$}{
                $\er^{(\ell+1)}_t(v)\gets \er^{(\ell+1)}_t(v)+\frac{(1-\alpha)}{d_u}\cdot \er^{(\ell)}_t(u)$\;
            }
        }
        \Else{
            Let $idx\gets 0$, and $p^*\gets \frac{(1-\alpha)\cdot \er^{(\ell)}_t(u)}{d_u\cdot \th}$\; 
            \While{true}{
                Generate a geometrical random $rg \sim G(p^*)$\; 
                $idx \gets idx +rg$\; 
                \If{$idx > d_u$}{
                    break\;
                }
                Let $v$ denote the $idx$-th node in $N(u)$\; 
                %random number $\rand$ according to $\rand \sim B\left(d_u, \frac{(1-\alpha)\cdot \er^{(\ell)}_t(u)}{d_u\cdot \th}\right)$, and select $\rand$ neighbors from $N(u)$ uniformly at random\;
                %\For{each selected neighbor $v\in N(u)$}{
                $\er^{(\ell+1)}_t(v)\gets \er^{(\ell+1)}_t(v)+\th$\;
                %}
            }
        }
    }
    Clear $\er^{(\ell)}_t$\; 
    $\epi_t \gets \epi_t+\alpha \cdot \er^{(\ell+1)}_t$\;
}
$\epi(t) \gets \frac{1}{n}\cdot \sum_{s\in V}\frac{d_t}{d_s}\cdot \epi_t(s)$\;
\Return $\epi(t)$ as an estimator of $\vpi(t)$;
\end{algorithm}

Algorithm~\ref{alg:VBES} illustrates the pseudocode of \setpush. Consider an undirected graph $G=(V,E)$, a target node $t$, a constant damping factor $\alpha\in (0,1)$ and a threshold parameter $\theta \in (0,1)$. Initially, we set $\r^{(0)}_t=\bm{e}_t$ and iteratively conduct the update process as described in Section~\ref{subsec:highlevelidea} from $\ell=0$ to $L-1$, where $L=\log_{1-\alpha}\frac{\rela \alpha}{2n}$. In particular, for the node $u$ with $0<(1-\alpha)\cdot \r^{(\ell)}_t(u)\le \theta \cdot d_u$, we adopt a {\em geometric sampling operation} to independently select neighbors $v$ from $N(u)$. Specifically, we independently sample every $v\in N(u)$ with probability $p^*=\frac{(1-\alpha)\cdot \r^{(\ell)}_t(u)}{d_u \cdot \th}$. For each sampled $v\in N(u)$, we increase the residue $\r_t^{(\ell+1)}(v)$ by $\th$. By this means, the expectation of $\r_t^{(\ell+1)}(v)$'s increment is still $\frac{(1-\alpha)}{d_u}\cdot \r^{(\ell)}_t(u)$. It's worth noting that we aim to complete the above described sampling process using the time of $O(d_u \cdot p^*)$. In other words, we require the expected time cost of the above described sampling process is asymptotically the same to the expected size of the sampling outcomes (i.e., the expected number of $u$'s neighbors that are successfully sampled). To achieve this goal, we define a variable $idx$ for referring to the index of $u$'s neighbor in $N(u)$ that is successfully sampled. Initially, we set $idx$ as $0$. Moreover, we define a geometric random number $rg$, and repeatedly generate $rg$ according to the geometric distribution $G(p^*)$. According to~\cite{devroye2006nonuniform,bringmann2012efficient}, a geometric random number can be generated in $O(1)$ time. %(see Appendix for the formal proof)
%, where $rand()$ denotes a uniform random number in $[0,1]$. 
We repeatedly generate $rg \sim G(p^*)$, update $idx \gets idx+rg$ and increase the residue $\r_t^{(\ell+1)}(v)$ of the $idx$-th neighbor $v$ in $N(u)$ by $\th$, until $idx > d_u$. %As a result, the expected time cost of the sampling process can be bounded by $O(d_u\cdot p^*)$. 

To understand the sampling process mentioned above, recall that a geometric random number $rg\sim G(p^*)$ indicates the number of Bernoulli trials needed to get one success, where
%the binomial distribution $B(x,p)$ corresponds to the probability distribution of the number of successes in a sequence of $x$ independent Bernoulli trials. 
each Bernoulli trial has two Boolean-valued outcomes: success (with probability $p^*$) and failure (with probability $1-p^*$). Therefore, by generating $rg \sim G(p^*)$, we are able to derive the index of the first sampled node in $N(u)$, using only $O(1)$ time. We iteratively generate $rg \sim G(p^*)$ to derive the index of the next sampled node from the index of the last sampled neighbor (recorded by $idx$). By this means, we are able to independently select each neighbor $v$ from $N(u)$ with probability $p^*$ using only $O(d_u\cdot p^*)=O\left(\frac{(1-\alpha)\cdot \r_t^{(\ell)}(u)}{\theta}\right)$ time in expectation. By carefully setting the value of $\theta$ (see Section~\ref{sec:analysis} for details), the expected time cost of \setpush can be consequently bounded by $O\left(\min\left\{d_t, \sqrt{m}\right\}\right)$. 
Additionally, after the $\ell$-th iteration ($\forall \ell \in \{0,1, \ldots, L-1\}$), we clear the $\ell$-hop residue vector $\r_t^{(\ell)}$ to save memory. Finally, we return $\epi(t)=\frac{1}{n}\cdot \sum_{s\in V} \sum_{\ell=0}^L \frac{d_t}{d_s}\cdot \alpha \r^{(\ell)}_t(s)$ as the estimator of $\vpi(t)$.

%{\rev 
%\subsection{Theoretical Improvements over Existing Methods}
%}

%\header{\bf Algorithm Descriptions. } 

%%% Local Variables:
%%% mode: latex
%%% TeX-master: "paper"
%%% End:

%\input{dynamic.tex}
\section{Theoretical Analysis}
\label{sec:analysis}
%In this section, we analyze the theoretical properties of the \setpush algorithm. First, we formally prove that the $\ell$-hop residue $\r^{(\ell)}_t(u)$ is an unbiased estimator of $\frac{1}{\alpha}\cdot \vpi^{(\ell)}_t(u)$, and the output $\epi(t)$ returned by Algorithm~\ref{alg:VBES} is an unbiased estimator of the truncated PageRank $\bpi(t)$. Then Section~\ref{subsec:variance} bounds the variance of $\epi(t)$. Finally, Section~\ref{subsec:totalcost} presents the expected time cost of the \setpush algorithm. 
In this section, we analyze the theoretical properties of our \setpush.

\subsection{Correctness}\label{subsec:correctness}
Recall that we have presented some intuitions on $\E\left[\r^{(\ell)}_t(u)\right]=\frac{1}{\alpha}\cdot \vpi^{(\ell)}_t(u)$ and $\E\left[\epi(t)\right]=\bpi(t)$ in Section~\ref{subsec:highlevelidea}. The following Lemmas further provide formal proofs on these intuitions. %{\rev Due to the page limit, we defer the proofs of Lemma~\ref{lem:unbiasedness_er} and Lemma~\ref{lem:unbiasedness_ppr} to our Technical Report~\cite{TechnicalReport}. }

\begin{lemma}\label{lem:unbiasedness_er}
For each $\ell\in \{0,1,\ldots, L\}$ The residue vector $\er^{(\ell)}_t$ obtained in Algorithm~\ref{alg:VBES} is an unbiased estimator of $\frac{1}{\alpha}\cdot \vpi^{(\ell)}_t$, such that for each $v\in V$,   
\begin{align*}
\E \left[\er^{(\ell)}_t(v)\right]=\frac{1}{\alpha}\cdot \vpi^{(\ell)}_t(v).     
\end{align*}
\end{lemma}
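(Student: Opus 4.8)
The plan is to prove the statement by induction on the hop count $\ell$, formalizing the intuition already sketched in Section~\ref{subsec:highlevelidea}. For the base case $\ell=0$, I would observe that Algorithm~\ref{alg:VBES} initializes $\er^{(0)}_t = \bm{e}_t$ deterministically, while Equation~\eqref{eqn:def_lhopppr} evaluated at $\ell=0$ gives $\vpi^{(0)}_t = \alpha\cdot\bm{e}_t$. Hence $\E[\er^{(0)}_t] = \bm{e}_t = \frac{1}{\alpha}\vpi^{(0)}_t$ holds trivially, since no randomness has yet been introduced.

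For the inductive step, I assume $\E[\er^{(\ell)}_t(u)] = \frac{1}{\alpha}\vpi^{(\ell)}_t(u)$ for every $u\in V$ and analyze a single iteration of the outer loop. The central quantity is the conditional expected increment that a push at node $u$ contributes to $\er^{(\ell+1)}_t(v)$ for a fixed neighbor $v\in N(u)$, conditioned on the realized value of $\er^{(\ell)}_t(u)$. I would split into the two branches of the algorithm. In the deterministic branch, where $(1-\alpha)\er^{(\ell)}_t(u)\ge\theta d_u$, the increment is exactly $\frac{(1-\alpha)}{d_u}\er^{(\ell)}_t(u)$. In the randomized branch, where $0<(1-\alpha)\er^{(\ell)}_t(u)<\theta d_u$, each neighbor $v$ receives $\theta$ only when sampled, and sampling occurs independently with probability $p^* = \frac{(1-\alpha)\er^{(\ell)}_t(u)}{d_u\theta}$, so the conditional expected increment is $p^*\cdot\theta = \frac{(1-\alpha)}{d_u}\er^{(\ell)}_t(u)$. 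Thus in both branches the conditional expectation equals $\frac{(1-\alpha)}{d_u}\er^{(\ell)}_t(u)$, which is the clean invariant I want.

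The step that needs the most care is justifying that the geometric-sampling loop indeed realizes independent Bernoulli($p^*$) trials over the $d_u$ entries of $N(u)$. I would invoke the standard equivalence that a sequence of i.i.d.\ gaps drawn from $G(p^*)$ marks precisely the success positions of independent Bernoulli($p^*$) trials, noting that $0<p^*<1$ holds throughout the randomized branch because the loop is entered only for nodes with $0<(1-\alpha)\er^{(\ell)}_t(u)<\theta d_u$. With the per-push conditional expectations established, I would sum the contributions over all $u\in N(v)$ (every node whose push can deposit mass into $v$ at hop $\ell+1$, using that the graph is undirected so $v\in N(u)\iff u\in N(v)$), apply linearity of expectation, and then use the tower property to remove the conditioning on the $\ell$-hop residues, obtaining $\E[\er^{(\ell+1)}_t(v)] = \sum_{u\in N(v)}\frac{(1-\alpha)}{d_u}\E[\er^{(\ell)}_t(u)]$. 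Substituting the induction hypothesis and comparing against the $\ell$-hop recurrence in Equation~\eqref{eqn:PPR_recur} yields $\E[\er^{(\ell+1)}_t(v)] = \frac{1}{\alpha}\vpi^{(\ell+1)}_t(v)$, which closes the induction and establishes the lemma.
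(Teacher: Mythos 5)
Your proposal is correct and follows essentially the same route as the paper's proof: induction on $\ell$ with base case $\er^{(0)}_t=\bm{e}_t=\frac{1}{\alpha}\vpi^{(0)}_t$, a two-branch computation showing the conditional expected increment from each push at $u$ equals $\frac{(1-\alpha)}{d_u}\er^{(\ell)}_t(u)$, then linearity of expectation, the tower property, and Equation~\eqref{eqn:PPR_recur} to close the induction. Your added justification that the geometric-gap sampling realizes independent Bernoulli($p^*$) trials is a detail the paper relegates to its algorithm description rather than the proof, but it does not change the argument.
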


\begin{proof}
Let $\incre^{(\ell+1)}(u,v)$ denote the increment of $\r^{(\ell+1)}_t(v)$ in the update procedure conducted at node $u$ with nonzero $\r^{(\ell)}_t(u)$. 
%probability mass transferred from node $u$ to node $v$ in the iteration of updating $\r_t^{(\ell+1)}$ based on $\r_t^{(\ell)}$. 
%in the $(\ell+1)$-th iteration of Algorithm~\ref{alg:VBES}. 
According to Algorithm~\ref{alg:VBES}, for each node $u\in V$ with nonzero $\er^{(\ell)}_t(u)$, $\incre^{(\ell+1)}(u,v)=\frac{1-\alpha}{d_u}\cdot \er^{(\ell)}_t(u)$ deterministically if $\frac{(1-\alpha)}{d_u}\cdot \er^{(\ell)}_t(u)\ge \th$. Otherwise, $\incre^{(\ell+1)}(u,v)=\th$ with probability $\frac{(1-\alpha)}{d_u \cdot \th}\cdot \er^{(\ell)}_t(u)$, or $0$ with probability $1-\frac{(1-\alpha)}{d_u \cdot \th}\cdot \er^{(\ell)}_t(u)$. As a consequence, the expectation of $\incre^{(\ell+1)}(u,v)$ equals $\th \cdot \frac{(1-\alpha)}{d_u \cdot \th}\cdot \er^{(\ell)}_t(u)=\frac{(1-\alpha)}{d_u}\cdot \er^{(\ell)}_t(u)$. More specifically, we have: 
\begin{align*}
\E \left[\incre^{(\ell+1)}(u,v) \mid \er^{(\ell)}_t\right]=\frac{(1-\alpha)}{d_u}\cdot \er^{(\ell)}_t(u). 
\end{align*}
where $\E \left[\incre^{(\ell+1)}(u,v) \mid \er^{(\ell)}_t\right]$ denotes the expectation of $\incre^{(\ell+1)}(u,v)$ conditioned on the fact that the $\ell$-hop residue $\r^{(\ell)}_t$ has been derived. Furthermore, since $\er^{(\ell+1)}_t(v)=\sum_{u\in N(v)}\incre^{(\ell+1)}(u,v)$, we can derive: 
\begin{align}\label{eqn:conditional_exp}
\E \left[\er^{(\ell+1)}_t(v)~\big|~\er^{(\ell)}_t\right]\hspace{-1mm}=\hspace{-3mm}\sum_{u\in N(v)}\hspace{-3mm}\E \left[\incre^{(\ell+1)}(u,v) ~\big|~\er^{(\ell)}_t\right]\hspace{-1mm}=\hspace{-3mm}\sum_{u\in N(v)}\hspace{-3mm}\frac{(1-\alpha)}{d_u}\hspace{-0.5mm}\cdot \hspace{-0.5mm}\er^{(\ell)}_t(u)
\end{align}
by applying the linearity of expectation. Given the fact: $\E[\er^{(\ell+1)}(v)]=\E\left[ \E \left[\er^{(\ell+1)}(v)~\big|~\er^{(\ell)}_t\right]\right]$, we can further derive: 
\begin{align}\label{eqn:recur_er}
\E \left[\er^{(\ell+1)}(v)\right]=\sum_{u\in N(v)}\frac{(1-\alpha)}{d_u}\cdot \E\left[\er^{(\ell)}_t(u)\right].
\end{align}
Based on the recursive formula as shown in Equation~\eqref{eqn:recur_er}, we are able to prove Lemma~\ref{lem:unbiasedness_er} by mathematical induction. Specifically, the base case $\er^{(0)}_t=\bm{e}_s=\frac{1}{\alpha}\cdot \vpi^{(0)}_t$ holds by definition. For the inductive case, assuming that $\E\left[\er^{(\ell)}_t(u)\right]=\frac{\vpi^{(\ell)}_t(u)}{\alpha}$ holds for each $u\in V$ and some $\ell\in \{0,1,\ldots, L-1\}$. By Equation~\eqref{eqn:recur_er}, we have:   
\begin{align*}
\E\left[\er^{(\ell+1)}_t(v)\right]\hspace{-1mm}=\hspace{-1mm}\frac{1}{\alpha}\cdot \hspace{-2mm}\sum_{u\in N(v)}\hspace{-1mm}\frac{(1-\alpha)}{d_u}\cdot \vpi^{(\ell)}_t(u)=\frac{1}{\alpha}\cdot \vpi^{(\ell+1)}_t(v),  
\end{align*}
where we apply the fact that $\vpi^{(\ell+1)}_t(v)=\sum_{u\in N(v)}\frac{(1-\alpha)}{d_u}\cdot \vpi^{(\ell)}_t(u)$ as shown in Equation~\eqref{eqn:PPR_recur}. Consequently, the inductive case holds, and Lemma~\ref{lem:unbiasedness_er} follows. 
\end{proof}

Based on Lemma~\ref{lem:unbiasedness_er}, we are able to prove that Algorithm~\ref{alg:VBES} returns an unbiased estimator of the truncated PageRank $\bpi(t)$. 

\begin{lemma}\label{lem:unbiasedness_ppr}
%For each node $s\in V$, Algorithm~\ref{alg:VBES} computes an estimator $\epi_t(s)$ such that $\E[\epi_t(s)]=\bpi_t(s)$. 
Algorithm~\ref{alg:VBES} returns an unbiased estimator $\epi(t)$ of the truncated PageRank score of node $t$. Specifically, $\E[\epi(t)]=\bpi(t)$. 
\end{lemma}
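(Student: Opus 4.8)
The plan is to prove $\E[\epi(t)]=\bpi(t)$ directly, by first writing out the estimator that Algorithm~\ref{alg:VBES} actually returns, then substituting the per-hop unbiasedness already established in Lemma~\ref{lem:unbiasedness_er}, and finally invoking the reversibility identity of Equation~\eqref{eqn:undirectedPPR} to recognize the resulting sum as the definition of the truncated PageRank in Equation~\eqref{eqn:def_trunc_pagerank}.

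First I would trace the accumulator $\epi_t$ through the pseudocode. It is initialized to $\alpha \bm{e}_t = \alpha \er^{(0)}_t$, and in each of the $L$ iterations it is augmented by $\alpha \cdot \er^{(\ell+1)}_t$; hence upon termination $\epi_t = \alpha \sum_{\ell=0}^L \er^{(\ell)}_t$, where the $\ell=0$ term is precisely the one absorbed into the initialization. The returned quantity is therefore $\epi(t)=\frac{1}{n}\sum_{s\in V}\frac{d_t}{d_s}\cdot \alpha \cdot \sum_{\ell=0}^L \er^{(\ell)}_t(s)$, matching the estimator claimed in Equation~\eqref{eqn:actual_eqn}.

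Next, since the double sum ranges over the finitely many nodes $s\in V$ and hops $\ell\in\{0,\ldots,L\}$, I would apply linearity of expectation to push $\E[\cdot]$ inside, obtaining $\E[\epi(t)]=\frac{1}{n}\sum_{s\in V}\sum_{\ell=0}^L \frac{d_t}{d_s}\cdot \alpha \cdot \E[\er^{(\ell)}_t(s)]$. Substituting $\E[\er^{(\ell)}_t(s)]=\frac{1}{\alpha}\vpi^{(\ell)}_t(s)$ from Lemma~\ref{lem:unbiasedness_er} cancels the factors of $\alpha$ and leaves $\frac{1}{n}\sum_{s\in V}\sum_{\ell=0}^L \frac{d_t}{d_s}\cdot \vpi^{(\ell)}_t(s)$. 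Applying the reversibility property $\frac{d_t}{d_s}\cdot\vpi^{(\ell)}_t(s)=\vpi^{(\ell)}_s(t)$ of Equation~\eqref{eqn:undirectedPPR} rewrites each summand with the target index fixed at $t$, yielding exactly $\frac{1}{n}\sum_{s\in V}\sum_{\ell=0}^L \vpi^{(\ell)}_s(t)=\bpi(t)$ by the definition in Equation~\eqref{eqn:def_trunc_pagerank}.

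There is no genuine obstacle in this statement: the entire analytical content is carried by Lemma~\ref{lem:unbiasedness_er}, and the remaining steps are linearity of expectation together with one direct substitution. The only points deserving care are bookkeeping ones, namely verifying that the iterative accumulation of $\epi_t$ really collapses to $\alpha\sum_{\ell=0}^L \er^{(\ell)}_t$ (with the $\ell=0$ term correctly accounted for by the initialization), and applying the reversibility identity of Equation~\eqref{eqn:undirectedPPR} in the correct orientation so that $t$ remains the fixed target while $s$ is summed over $V$.
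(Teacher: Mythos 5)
Your proposal is correct and follows essentially the same route as the paper's own proof: write out the returned estimator $\epi(t)=\frac{1}{n}\sum_{s\in V}\sum_{\ell=0}^{L}\frac{d_t}{d_s}\cdot\alpha\cdot\er^{(\ell)}_t(s)$, apply linearity of expectation, substitute the per-hop unbiasedness from Lemma~\ref{lem:unbiasedness_er}, and convert via the reversibility identity of Equation~\eqref{eqn:undirectedPPR} to recover the definition in Equation~\eqref{eqn:def_trunc_pagerank}. Your extra bookkeeping step verifying that the accumulator $\epi_t$ collapses to $\alpha\sum_{\ell=0}^{L}\er^{(\ell)}_t$ is a detail the paper asserts without comment, but it does not change the argument.
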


\begin{proof}
Note that Algorithm~\ref{alg:VBES} computes $\epi(t)$ as: 
\begin{align*}
\epi(t)=\frac{1}{n}\cdot \sum_{s\in V}\sum_{\ell=0}^{L} \frac{d_t}{d_s}\cdot \alpha \cdot \er^{(\ell)}_t(s). 
\end{align*}
By applying the linearity of expectation, we can derive: 
\begin{align*}
\E\left[\epi(t)\right]=\frac{1}{n}\cdot \hspace{-1mm}\sum_{s\in V}\sum_{\ell=0}^{L} \frac{d_t}{d_s}\cdot \alpha \cdot \E\left[\er^{(\ell)}_t(s)\right]=\frac{1}{n}\cdot \hspace{-1mm}\sum_{s\in V}\sum_{\ell=0}^{L} \frac{d_t}{d_s}\cdot \vpi^{(\ell)}_t(s), 
\end{align*}
where we employ the expectation of $\er^{(\ell)}_t(s)$ derived in Lemma~\ref{lem:unbiasedness_er}. Furthermore, by Equation~\eqref{eqn:undirectedPPR}, we have $\frac{d_t}{d_s}\cdot \vpi^{(\ell)}_t(s)=\vpi^{(\ell)}_s(t)$. Thus,  we can derive: 
\begin{align*}
\E\left[\epi(t)\right]=\frac{1}{n}\cdot \sum_{s\in V}\sum_{\ell=0}^{L} \vpi^{(\ell)}_s(t)=\bpi(t), 
\end{align*}
which follows the lemma. 
\end{proof}

%\begin{theorem}[Unbiasedness]\label{thm:unbiasedness}
%Algorithm~\ref{alg:VBES} returns an unbiased estimator $\epi(t)$ for the PageRank of node $t$ that $\E[\epi(t)]=\vpi(t)$. 
%\end{theorem}
Up to now, we have proved that $\epi(t)$ is an unbiased estimator of the truncated PageRank $\vpi(t)$. Next, we shall bound the variance of $\epi(t)$  and utilize the following Chebyshev Inequality~\cite{mitzenmacher2017probability} to bound the failure probability for deriving a $(\frac{c}{2}, p_f)$-approximation of $\bpi(t)$. 

\begin{fact}[Chebyshev's Inequality~\cite{mitzenmacher2017probability}]\label{fact:chebyshev}
Let $X$ denote a random variable. For any real number $\e>0$, $\Pr\left\{\left|X-\E[X]\right|\ge \e\right\}\le \frac{\Var[X]}{\e^2}$. 
\end{fact}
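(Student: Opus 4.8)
The plan is to derive Chebyshev's inequality from the first-moment (Markov) bound applied to a carefully chosen non-negative auxiliary variable, since the statement is a classical concentration result and no structure specific to the estimator $\epi(t)$ is involved. First I would define $Y = (X - \E[X])^2$, which is non-negative by construction. The key observation is an event equivalence: because squaring is monotone on the non-negative reals, the two-sided deviation event $\{|X - \E[X]| \ge \e\}$ coincides exactly with the one-sided event $\{Y \ge \e^2\}$ for every real $\e > 0$. This reduces a two-sided tail bound on $X$ to a one-sided tail bound on the non-negative variable $Y$.

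Next I would establish Markov's inequality for $Y$, since this ingredient is not stated earlier in the excerpt and so must be supplied. For any threshold $a > 0$, one writes $\E[Y] \ge \E[Y \cdot \mathbf{1}_{\{Y \ge a\}}] \ge a \cdot \Pr\{Y \ge a\}$, where the first inequality drops the (non-negative) contribution of $Y$ on the event $\{Y < a\}$ and the second uses the bound $Y \ge a$ on the indicated event. Rearranging yields $\Pr\{Y \ge a\} \le \E[Y]/a$.

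Finally I would instantiate $a = \e^2$ and assemble the pieces. By the event equivalence above, $\Pr\{|X - \E[X]| \ge \e\} = \Pr\{Y \ge \e^2\} \le \E[Y]/\e^2$, and by the definition of variance $\E[Y] = \E[(X - \E[X])^2] = \Var[X]$. Substituting gives $\Pr\{|X - \E[X]| \ge \e\} \le \Var[X]/\e^2$, which is exactly the claim.

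Since this is a textbook inequality, there is no genuine obstacle in the proof; the only care required is to justify the auxiliary Markov bound cleanly (as it is not assumed in the excerpt) and to state the monotone-squaring event equivalence precisely, so that the passage from a two-sided to a one-sided tail is fully rigorous.
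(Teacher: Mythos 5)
Your proof is correct: the paper states this Chebyshev bound as a cited Fact from a textbook~\cite{mitzenmacher2017probability} and gives no proof of its own, and your argument---applying Markov's inequality to the non-negative variable $Y=(X-\E[X])^2$ together with the exact event equivalence $\{|X-\E[X]|\ge \e\} = \{Y \ge \e^2\}$---is precisely the standard derivation the cited reference uses. Nothing is missing; the self-contained justification of Markov's inequality and the monotone-squaring step are both handled cleanly.
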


\subsection{Variance Analysis}\label{subsec:variance}
We claim that the variance of $\epi(t)$ can be bounded by $\frac{L\theta d_t}{n}\cdot \vpi(t)$, which is formally demonstrated in Theorem~\ref{thm:variance}. 

\begin{theorem}[Variance]\label{thm:variance}
The variance of the estimator $\epi(t)$ returned by Algorithm~\ref{alg:VBES} can be bounded as $\Var[\epi(t)]\le \frac{L \theta d_t }{n}\cdot \vpi(t)$. 
\end{theorem}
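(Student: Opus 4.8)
The plan is to bound $\Var[\epi(t)]$ through a Doob martingale decomposition over the $L$ push iterations. Let $\mathcal{F}_\ell$ denote the $\sigma$-algebra generated by all random sampling choices made while producing $\er^{(0)}_t,\ldots,\er^{(\ell)}_t$, and set $M_\ell=\E[\epi(t)\mid\mathcal{F}_\ell]$. Since $\er^{(0)}_t=\bm{e}_t$ is deterministic and $\er^{(L)}_t$ fixes $\epi(t)$, the sequence $M_0,\ldots,M_L$ is a martingale with $M_0=\E[\epi(t)]$ and $M_L=\epi(t)$. Writing $D_\ell=M_\ell-M_{\ell-1}$, the differences are uncorrelated, so $\Var[\epi(t)]=\sum_{\ell=1}^{L}\E[D_\ell^2]=\sum_{\ell=1}^{L}\E\big[\Var[M_\ell\mid\mathcal{F}_{\ell-1}]\big]$. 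It then suffices to prove that each per-level term is at most $\tfrac{\theta d_t}{n}\vpi(t)$.

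Next I would identify the coefficient with which each level-$\ell$ residue enters the estimator. Because the conditional one-step increment is linear in the current residue (Equation~\eqref{eqn:conditional_exp}), conditioning on $\mathcal{F}_\ell$ and taking expectations of all later levels gives $M_\ell=C_{\ell-1}+\sum_{v\in V}g_\ell(v)\,\er^{(\ell)}_t(v)$, where $C_{\ell-1}$ is $\mathcal{F}_{\ell-1}$-measurable and $g_\ell(v)=\tfrac{\alpha d_t}{n}\sum_{j=0}^{L-\ell}(1-\alpha)^j\sum_{s\in V}\tfrac{1}{d_s}\big[(\A\D^{-1})^j\big]_{s,v}$. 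Using the $\ell$-hop reversibility identity~\eqref{eqn:undirectedPPR} this simplifies to $g_\ell(v)=\tfrac{d_t}{n\,d_v}\sum_{j=0}^{L-\ell}\sum_{s\in V}\vpi^{(j)}_s(v)$. Conditioned on $\mathcal{F}_{\ell-1}$ the increments $\incre^{(\ell)}(u,v)$ are independent across all pairs $(u,v)$ (precisely what the independent geometric sampling buys), so $\Var[M_\ell\mid\mathcal{F}_{\ell-1}]=\sum_{u}\sum_{v\in N(u)}g_\ell(v)^2\,\Var[\incre^{(\ell)}(u,v)\mid\mathcal{F}_{\ell-1}]$, where each increment is a scaled Bernoulli with $\Var[\incre^{(\ell)}(u,v)\mid\mathcal{F}_{\ell-1}]\le\theta^2 p=\theta\cdot\tfrac{(1-\alpha)}{d_u}\er^{(\ell-1)}_t(u)$; note this is where the single factor of $\theta$ (rather than $\theta^2$) in the target bound originates.

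The crux, and the step I expect to be the main obstacle, is the squared coefficient $g_\ell(v)^2$: the naive estimate $g_\ell(v)\le\tfrac{d_t}{d_v}\vpi(v)$ carries a $1/d_v^2$ that resists summation. The remedy I would use is to control one factor by a uniform bound and collapse the other exactly. Regrouping the Bernoulli variances gives $\Var[M_\ell\mid\mathcal{F}_{\ell-1}]\le\theta\sum_v g_\ell(v)^2\,\E[\er^{(\ell)}_t(v)\mid\mathcal{F}_{\ell-1}]$, and taking expectations with $\E[\er^{(\ell)}_t(v)]=\tfrac{1}{\alpha}\vpi^{(\ell)}_t(v)$ (Lemma~\ref{lem:unbiasedness_er}) yields $\E[\Var[M_\ell\mid\mathcal{F}_{\ell-1}]]\le\tfrac{\theta}{\alpha}\sum_v g_\ell(v)^2\,\vpi^{(\ell)}_t(v)$. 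For one factor I would prove the uniform bound $g_\ell(v)\le\tfrac{d_t}{n}$, which follows from $\sum_s\vpi^{(j)}_s(v)=\alpha(1-\alpha)^j$ (summed geometrically, using $d_s\ge1$). For the remaining $\sum_v g_\ell(v)\vpi^{(\ell)}_t(v)$, I would apply reversibility to both $g_\ell(v)$ and $\vpi^{(\ell)}_t(v)$ so the degree factors cancel, and then invoke the Chapman--Kolmogorov-type identity $\sum_{v}\vpi^{(j)}_s(v)\,\vpi^{(\ell)}_v(t)=\alpha\,\vpi^{(j+\ell)}_s(t)$, verified by expanding both sides in powers of $\A\D^{-1}$. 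This telescopes the double sum into $\tfrac{\alpha}{n}\sum_{s}\sum_{k=\ell}^{L}\vpi^{(k)}_s(t)\le\alpha\bpi(t)\le\alpha\vpi(t)$.

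Combining the two bounds gives $\E[\Var[M_\ell\mid\mathcal{F}_{\ell-1}]]\le\tfrac{\theta}{\alpha}\cdot\tfrac{d_t}{n}\cdot\alpha\vpi(t)=\tfrac{\theta d_t}{n}\vpi(t)$ for every $\ell$, and summing the $L$ martingale-difference terms yields $\Var[\epi(t)]\le\tfrac{L\theta d_t}{n}\vpi(t)$, as claimed. The delicate points are the exact identification of the coefficient $g_\ell(v)$ together with the conditional independence of the increments, and the walk-composition identity that lets the otherwise-intractable $\sum_v g_\ell(v)^2\vpi^{(\ell)}_t(v)$ collapse back onto the truncated PageRank $\bpi(t)$ of the target node.
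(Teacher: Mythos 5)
Your proof is correct and is essentially the paper's own argument in cleaner packaging: your Doob-martingale decomposition $\Var[\epi(t)]=\sum_{\ell}\E\left[\Var\left[M_\ell \mid \mathcal{F}_{\ell-1}\right]\right]$ is precisely the iterated law of total variance that the paper grinds out in Lemma~\ref{lem:var_recur} and the Appendix, your coefficient $g_\ell(v)$ coincides with the paper's weight $\frac{\alpha}{n}\sum_{s\in V}\frac{d_t}{d_s}\sum_{i}\frac{\vpi^{(i)}_v(s)}{\alpha}$, and your three estimates (the scaled-Bernoulli increment variance $\theta\cdot\frac{(1-\alpha)}{d_u}\er^{(\ell-1)}_t(u)$, the uniform bound $g_\ell(v)\le \frac{d_t}{n}$, and the composition identity $\sum_v \vpi^{(j)}_s(v)\,\vpi^{(\ell)}_v(t)=\alpha\,\vpi^{(j+\ell)}_s(t)$ plus reversibility collapsing $\sum_v g_\ell(v)\vpi^{(\ell)}_t(v)$ onto $\alpha\vpi(t)$) are exactly the content of Lemma~\ref{lem:conditional_variance} and Lemma~\ref{lem:var_partial}. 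The only blemish is the parenthetical $\sum_s \vpi^{(j)}_s(v)=\alpha(1-\alpha)^j$, which as written sums over sources and is false in general; the correct justification is $\sum_s \vpi^{(j)}_v(s)=\alpha(1-\alpha)^j$ combined with $d_s\ge 1$ (or, equivalently, reversibility first), after which your bound $g_\ell(v)\le \frac{d_t}{n}$ and the rest of the argument go through unchanged.
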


To prove Theorem~\ref{thm:variance}, we need several technical lemmas. Specifically, in Lemma~\ref{lem:conditional_variance}, we bound the variance of $\er^{(\ell+1)}_t(v)$ conditioned on $\er^{(\ell)}_t$ that is derived in the $\ell$-th iteration. 

\begin{lemma}\label{lem:conditional_variance} 
For each node $v\in V$ and each $\ell \in \{0, 1, \ldots, L-1\}$, the variance of $\er^{(\ell+1)}_t(v)$ can be bounded as
\begin{align*}
\Var\left[\er^{(\ell+1)}_t(v)~\big|~ \er^{(\ell)}_t\right]\le \sum_{u\in N(v)}\th \cdot \frac{(1-\alpha)\cdot \er^{(\ell)}_t(u)}{d_u}, 
\end{align*}
where $\Var\left[\er^{(\ell+1)}_t(v)~\big|~ \er^{(\ell)}_t\right]$ denotes the variance of $\er^{(\ell+1)}_t(v)$ conditioned on the value of $\er^{(\ell)}_t$ that has been derived in the $\ell$-th iteration. 
\end{lemma}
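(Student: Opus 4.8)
The plan is to decompose the conditional variance over the per-source contributions to $\er^{(\ell+1)}_t(v)$, show these contributions are independent given $\er^{(\ell)}_t$, and then bound each one by the variance of a scaled Bernoulli trial. First I would reuse the increment variable $\incre^{(\ell+1)}(u,v)$ introduced in the proof of Lemma~\ref{lem:unbiasedness_er}, which records how much the update performed at a source node $u$ (with nonzero $\er^{(\ell)}_t(u)$) adds to $\er^{(\ell+1)}_t(v)$. Conditioned on $\er^{(\ell)}_t$, this increment is \emph{deterministic} whenever $(1-\alpha)\cdot \er^{(\ell)}_t(u)\ge \th\cdot d_u$, and otherwise equals $\th$ with probability $p^*=\frac{(1-\alpha)\er^{(\ell)}_t(u)}{d_u\cdot\th}$ and $0$ with probability $1-p^*$, i.e.\ a scaled Bernoulli variable. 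Since $\er^{(\ell+1)}_t(v)=\sum_{u\in N(v)}\incre^{(\ell+1)}(u,v)$, the task reduces to controlling the variance of this finite sum.

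The key step I would argue is the conditional independence of the family $\{\incre^{(\ell+1)}(u,v)\}_{u\in N(v)}$ given $\er^{(\ell)}_t$. This is exactly where the design of \setpush pays off: in the $\ell$-th iteration the geometric sampling at each source node $u$ uses its own fresh randomness and touches only $u$'s adjacency list, so the sampling outcomes at two distinct neighbors $u_1\ne u_2$ of $v$ are independent of one another. (This is precisely the property that fails for the coupled sampling of RBS, as highlighted in Section~\ref{subsec:highlevelidea}.) Granting this independence, the variance of the sum splits as $\Var[\er^{(\ell+1)}_t(v)\mid \er^{(\ell)}_t]=\sum_{u\in N(v)}\Var[\incre^{(\ell+1)}(u,v)\mid \er^{(\ell)}_t]$.

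Finally I would bound each summand. In the deterministic regime the increment is a constant given $\er^{(\ell)}_t$, so its conditional variance is $0$ and the target bound holds trivially. In the random regime, writing $\incre^{(\ell+1)}(u,v)=\th\cdot B$ with $B\sim\mathrm{Bernoulli}(p^*)$, we get $\Var[\incre^{(\ell+1)}(u,v)\mid \er^{(\ell)}_t]=\th^2\,p^*(1-p^*)\le \th^2\,p^*=\th\cdot\frac{(1-\alpha)\er^{(\ell)}_t(u)}{d_u}$, where I simply drop the factor $(1-p^*)\in[0,1]$. Summing over $u\in N(v)$ then yields the claimed bound $\sum_{u\in N(v)}\th\cdot\frac{(1-\alpha)\er^{(\ell)}_t(u)}{d_u}$. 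The only genuinely delicate point is the conditional-independence claim, which rests on the independent-per-source sampling of the algorithm; the single-term Bernoulli variance computation and the treatment of the deterministic case are routine.
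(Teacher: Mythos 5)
Your proposal is correct and follows essentially the same route as the paper's proof: the same decomposition $\er^{(\ell+1)}_t(v)=\sum_{u\in N(v)}\incre^{(\ell+1)}(u,v)$, the same appeal to per-source conditional independence to split the variance, and the same per-term bound (the paper writes $\Var[\incre^{(\ell+1)}(u,v)\mid\er^{(\ell)}_t]\le\E[(\incre^{(\ell+1)}(u,v))^2\mid\er^{(\ell)}_t]=\th^2 p^*$, which is exactly your Bernoulli computation $\th^2 p^*(1-p^*)$ with the factor $(1-p^*)$ dropped). No gaps; your explicit emphasis on why independence holds is a welcome elaboration of what the paper states more tersely.
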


\begin{proof}
Recall that in the proof of Lemma~\ref{lem:unbiasedness_er}, we use $\incre^{(\ell+1)}(u,v)$ to denote the increment of $\r_t^{(\ell+1)}(v)$ in the update operations conducted at node $u$ with nonzero $\r_t^{(\ell)}(u)$. 
%probability mass transferred from node $u$ to node $v$ in the $(\ell+1)$-th iteration of Algorithm~\ref{alg:VBES}. %and we have $\er^{(\ell+1)}(v)=\sum_{u\in N(v)}\incre^{(\ell+1)}(u,v)$. Thus, we bound the variance of 
For the deterministic case when $\frac{(1-\alpha)}{d_u}\cdot \er^{(\ell)}_t(u)\ge \th$, $\incre^{(\ell+1)}(u,v)$ is deterministically set as $\frac{(1-\alpha)}{d_u}\cdot \er^{(\ell)}_t(u)$, and thus there is no variance caused. For the randomized case when  $\frac{(1-\alpha)}{d_u}\cdot \er^{(\ell)}_t(u)<\th$, we set $\incre^{(\ell+1)}(u,v)$ as $\th$ with probability $\frac{(1-\alpha)}{d_u\cdot \th}\cdot \er^{(\ell)}_t(u)$, or as $0$ with probability $1-\frac{(1-\alpha)}{d_u\cdot \th}\cdot \er^{(\ell)}_t(u)$. Therefore, in the randomized case, the variance of $\incre^{(\ell+1)}(u,v)$ conditioned on the residue vector $\er^{(\ell)}_t$ that has been derived in previous iterations can be bounded as: 
\begin{equation*}
\begin{aligned}
&\Var\left[\left. \incre^{(\ell+1)}(u,v)~\right|~\er^{(\ell)}_t \right]\le \E\left[\left. \left(\incre^{(\ell+1)}(u,v)\right)^2~\right|~\er^{(\ell)}_t \right]\\
&= \th^2 \cdot \frac{(1-\alpha)}{d_u\cdot \th}\cdot \er^{(\ell)}_t(u)=\th\cdot \frac{(1-\alpha)}{d_u}\cdot \er^{(\ell)}_t(u). 
\end{aligned}
\end{equation*}
Since $\er^{(\ell+1)}_t(v)=\sum_{u\in N(v)}\incre^{(\ell+1)}(u,v)$, we can further derive: 
\begin{align*}
\Var\left[\left. \er^{(\ell+1)}_t(v)~\right|~\er^{(\ell)}_t \right]=\Var\left[\left. \sum_{u\in N(v)}\incre^{(\ell+1)}(u,v)~\right|~\er^{(\ell)}_t \right]. 
\end{align*}
Notably, for each $u\in N(v)$, $\incre^{(\ell+1)}(u,v)$ is independent with each other according to the sampling procedures as described in Section~\ref{alg:VBES}. Thus, we can further derive: 
\begin{equation*}
\begin{aligned}
\Var\left[\hspace{-0.5mm}\left. \er^{(\ell+1)}_t\hspace{-0.5mm}(v)\right|\er^{(\ell)}_t \hspace{-0.5mm}\right]\hspace{-1mm}=\hspace{-3mm}\sum_{u\in N(v)}\hspace{-3mm}\Var\hspace{-0.5mm}\left[\hspace{-0.5mm}\left. \incre^{(\ell+1)}\hspace{-0.5mm}(u,v)\right|\er^{(\ell)}_t \hspace{-0.5mm}\right]\hspace{-1mm}\le \hspace{-3mm}\sum_{u\in N(v)}\hspace{-3.5mm}\frac{\th \hspace{-0.5mm}\cdot \hspace{-0.5mm}(1\hspace{-0.5mm}-\hspace{-0.5mm}\alpha)}{d_u}\hspace{-0.5mm}\cdot \hspace{-0.5mm}\er^{(\ell)}_t\hspace{-0.5mm}(u), 
\end{aligned}    
\end{equation*}
which follows the lemma. 
\end{proof}

In the second step, we prove: 
\begin{lemma}\label{lem:var_recur}
The variance of the estimator $\epi(t)$ obtained by Algorithm~\ref{alg:VBES} can be computed as: 
\begin{equation*}
\begin{aligned}
&\Var\left[\epi(t)\right]=\frac{\alpha^2}{n^2}\cdot \Var\left[\sum_{\ell=0}^L \sum_{s\in V}\frac{d_t}{d_s}\cdot \er^{(\ell)}_t(s)\right]\\
&=\hspace{-0.5mm}\frac{\alpha^2}{n^2}\hspace{-0.5mm}\cdot \sum_{\ell=0}^{L-2}\hspace{-0.5mm}\E\left[\Var\hspace{-0.5mm}\left[\left.\sum_{v\in V}\left(\sum_{s\in V}\frac{d_t}{d_s}\cdot \hspace{-1mm}\sum_{i=0}^{L-\ell-1}\hspace{-0.5mm}\frac{\vpi^{(i)}_v\hspace{-0.5mm}(s)}{\alpha}\right)\cdot \hspace{-0.5mm}\er^{(\ell+1)}_t(v)~\right|~\er^{(\ell)}_t\right]\right]. 
\end{aligned}    
\end{equation*}
\end{lemma}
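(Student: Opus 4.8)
The plan is to establish the two equalities in turn. The first is immediate from the output of Algorithm~\ref{alg:VBES}: since $\epi_t=\alpha\sum_{\ell=0}^{L}\er^{(\ell)}_t$ and $\epi(t)=\frac{1}{n}\sum_{s\in V}\frac{d_t}{d_s}\epi_t(s)$, we have $\epi(t)=\frac{\alpha}{n}\sum_{\ell=0}^{L}\sum_{s\in V}\frac{d_t}{d_s}\er^{(\ell)}_t(s)$. Pulling the deterministic scalar $\frac{\alpha}{n}$ outside the variance and squaring it produces the prefactor $\frac{\alpha^2}{n^2}$, which is exactly the first line of the statement.

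For the second equality, I would treat the residue generations $\er^{(0)}_t,\er^{(1)}_t,\ldots,\er^{(L)}_t$ as a Markov chain: by the independent per-node sampling in Algorithm~\ref{alg:VBES}, the generation $\er^{(\ell+1)}_t$ depends on the past only through $\er^{(\ell)}_t$. Write $S=\sum_{\ell=0}^{L}\sum_{s\in V}\frac{d_t}{d_s}\er^{(\ell)}_t(s)$ and let $\mathcal{F}_\ell$ denote the information contained in $\er^{(0)}_t,\ldots,\er^{(\ell)}_t$. I would then apply the law of total variance along this filtration, that is, decompose $\Var[S]$ into the expected conditional variances of the (uncorrelated, mean-zero) martingale differences $\E[S\mid\mathcal{F}_{\ell+1}]-\E[S\mid\mathcal{F}_\ell]$, giving $\Var[S]=\sum_{\ell}\E\big[\Var[\,\E[S\mid\mathcal{F}_{\ell+1}]\mid\mathcal{F}_\ell\,]\big]$. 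Because $\er^{(0)}_t=\bm{e}_t$ is deterministic, the transition out of the initial generation contributes the first nonzero term, so the sum is naturally organized by the transitions $\er^{(\ell)}_t\to\er^{(\ell+1)}_t$.

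The crux is evaluating the inner conditional variance. I would compute $\E[S\mid\mathcal{F}_{\ell+1}]$ by separating the generations $k\le\ell+1$, which are already $\mathcal{F}_{\ell+1}$-measurable, from the future generations $k\ge\ell+2$. For each future generation, the Markov property and the one-step expectation in matrix form, $\E[\er^{(k)}_t\mid\er^{(\ell+1)}_t]=\big((1-\alpha)\A\D^{-1}\big)^{k-\ell-1}\er^{(\ell+1)}_t$ (iterating the conditional recurrence of Equation~\eqref{eqn:conditional_exp}), together with the linearity of the functional $\sum_{s}\frac{d_t}{d_s}(\cdot)$, rewrites each contribution as a fixed linear form in $\er^{(\ell+1)}_t$. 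Conditioning the variance on $\mathcal{F}_\ell$ annihilates every $\mathcal{F}_\ell$-measurable piece (all generations $k\le\ell$), so only a single linear combination $\sum_{v\in V}c_v\,\er^{(\ell+1)}_t(v)$ survives, with $c_v=\sum_{i=0}^{L-\ell-1}\sum_{s\in V}\frac{d_t}{d_s}\big[((1-\alpha)\A\D^{-1})^i\big]_{sv}$.

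The last step is to identify $c_v$ with the bracketed coefficient in the statement. By the definition of the $\ell$-hop PPR vector in Equation~\eqref{eqn:def_lhopppr}, $\vpi^{(i)}_v=\alpha(1-\alpha)^i(\A\D^{-1})^i\bm{e}_v$, so $\big[((1-\alpha)\A\D^{-1})^i\big]_{sv}=\frac{1}{\alpha}\vpi^{(i)}_v(s)$; substituting gives $c_v=\sum_{s\in V}\frac{d_t}{d_s}\sum_{i=0}^{L-\ell-1}\frac{\vpi^{(i)}_v(s)}{\alpha}$, matching the coefficient inside the conditional variance. I expect the main obstacle to be the bookkeeping in this conditional-expectation step: correctly isolating precisely the part of $\E[S\mid\mathcal{F}_{\ell+1}]$ that still varies with $\er^{(\ell+1)}_t$, tracking the exponent $k-\ell-1$ to obtain the summation range $i=0,\ldots,L-\ell-1$, and confirming that the matrix power $((1-\alpha)\A\D^{-1})^i$ aligns with the normalization $\vpi^{(i)}_v(s)/\alpha$. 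The remaining ingredients, namely the law of total variance and linearity of expectation, are routine.
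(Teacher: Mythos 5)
Your plan is, at bottom, the same argument the paper uses: the paper proves the second equality by repeatedly applying the law of total variance (Fact~\ref{fact:totalvar}) down the chain $\er^{(L-1)}_t,\er^{(L-2)}_t,\ldots,\er^{(0)}_t$, which is precisely your martingale decomposition along the filtration $\mathcal{F}_\ell$. The genuine difference is bookkeeping: you compute each coefficient in closed form via $\E\big[\er^{(k)}_t\mid\mathcal{F}_{\ell+1}\big]=\big((1-\alpha)\A\D^{-1}\big)^{k-\ell-1}\er^{(\ell+1)}_t$ together with the identification $\big[((1-\alpha)\A\D^{-1})^i\big]_{sv}=\vpi^{(i)}_v(s)/\alpha$, whereas the paper peels off one level at a time and merges coefficients using the reversibility-based recursion $\vpi^{(i+1)}_w(s)=\sum_{u\in N(w)}\frac{1}{\alpha}\vpi^{(1)}_w(u)\cdot\vpi^{(i)}_u(s)$ of Equation~\eqref{eqn:recur_PPR_reverse}. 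Your packaging is cleaner and gives the coefficient $c_v$ in one shot; note also that your conditioning on $\mathcal{F}_\ell$ rather than on the single vector $\er^{(\ell)}_t$ is the correct reading of the paper's notation, since the paper treats all residues up to level $\ell$ as measurable when it conditions on $\er^{(\ell)}_t$.

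The one point you must address is the summation range. Executed correctly, your decomposition is
\begin{equation*}
\Var[S]=\sum_{\ell=0}^{L-1}\E\left[\Var\left[\left.\sum_{v\in V} c^{(\ell)}_v\cdot \er^{(\ell+1)}_t(v)~\right|~\mathcal{F}_\ell\right]\right],
\qquad c^{(\ell)}_v=\sum_{s\in V}\frac{d_t}{d_s}\sum_{i=0}^{L-\ell-1}\frac{\vpi^{(i)}_v(s)}{\alpha},
\end{equation*}
which contains $L$ transition terms, including the last one at $\ell=L-1$, whose coefficient degenerates to $c^{(L-1)}_v=d_t/d_v$ (only $i=0$ survives). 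The lemma as printed sums only to $\ell=L-2$, so your proof does not actually "match" the statement: the omitted term $\E\big[\Var\big[\sum_{v}\frac{d_t}{d_v}\er^{(L)}_t(v)\mid\mathcal{F}_{L-1}\big]\big]$ is not zero in general (take $L=1$: the printed right-hand side is an empty sum, yet $\epi(t)$ has strictly positive variance whenever the randomized push is triggered, i.e.\ when $1-\alpha<\theta d_t$). This is an off-by-one slip in the paper itself: the term $\E\big[\Var[\,\cdot\mid\er^{(L-1)}_t]\big]$ produced by the first application of the law of total variance in Equation~\eqref{eqn:total_var1} silently disappears by Equation~\eqref{eqn:totalvar_final}. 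The slip is harmless downstream, because the extra term obeys the same per-transition bound used in Lemma~\ref{lem:var_partial}, so the bound of Theorem~\ref{thm:variance} survives with $L$ in place of $L-1$; but your write-up should state the corrected range $\sum_{\ell=0}^{L-1}$ rather than claim to recover the printed statement, since you cannot prove that last term vanishes.
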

To prove Lemma~\ref{lem:var_recur}, recall that $\epi(t)=\frac{1}{n}\sum_{s\in V}\frac{d_t}{d_s}\cdot \epi_t(s)$, and $\epi_t(s)=\sum_{\ell=0}^L \alpha \er^{(\ell)}_t(s)$ according to Algorithm~\ref{alg:VBES}. Thus, the variance of $\epi(t)$ derived by Algorithm~\ref{alg:VBES} can be computed as: 
\begin{align*}
\Var\hspace{-0.5mm}\left[\epi(t)\right]\hspace{-0.5mm}=\hspace{-0.5mm}\Var\left[\hspace{-0.5mm}\frac{1}{n}\hspace{-0.5mm}\cdot \hspace{-1mm}\sum_{s\in V}\hspace{-0.5mm}\frac{d_t}{d_s}\hspace{-0.5mm}\cdot \hspace{-1mm}\sum_{\ell=0}^{L}\hspace{-0.5mm}\alpha \hspace{-0.5mm}\cdot \hspace{-0.5mm}\er_t^{(\ell)}\hspace{-0.5mm}(s)\hspace{-0.5mm}\right]\hspace{-1mm}=\hspace{-0.5mm}\frac{\alpha^2}{n^2}\hspace{-0.5mm}\cdot \hspace{-0.5mm}\Var\left[\sum_{s\in V}\hspace{-1mm}\frac{d_t}{d_s}\hspace{-1mm}\cdot \hspace{-1mm}\sum_{\ell=0}^{L}\hspace{-0.5mm}\er_t^{(\ell)}\hspace{-0.5mm}(s)\hspace{-0.5mm}\right], 
\end{align*}
%following the first equality in Lemma~\ref{lem:var_recur}. 
For the second equality in Lemma~\ref{lem:var_recur},  
the detailed proof is rather technical, and we defer it to the Appendix (i.e., Section~\ref{sec:appendix}) for readability. At a high level, we prove it by repeatedly applying the law of total variance. Details of the law of total variance are given as below. 

\begin{fact}[Law of Total Variance~\cite{weiss2005TotalVarianceLaw}]\label{fact:totalvar}
For two random variables $X$ and $Y$, the law of total variance states: 
\begin{align*}
\Var\left[Y\right]=\E\left[\Var \left[Y\mid X\right]\right]+\Var\left[\E\left[Y\mid X\right]\right]
\end{align*}
holds if the two variables $X$ and $Y$ are on the same probability space and the variance of $Y$ is finite. 
\end{fact}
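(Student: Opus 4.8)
The plan is to reduce both sides to raw second moments and then exploit the tower property (law of iterated expectations) $\E[\E[Y\mid X]]=\E[Y]$, after which the identity collapses to a single cancellation. First I would rewrite the variance in its raw-moment form $\Var[Y]=\E[Y^2]-\left(\E[Y]\right)^2$, and record the analogous pointwise identity for the conditional variance, $\Var[Y\mid X]=\E[Y^2\mid X]-\left(\E[Y\mid X]\right)^2$, viewing both sides as functions of $X$. The hypothesis $\Var[Y]<\infty$ forces $\E[Y^2]<\infty$, and by conditional Jensen $\E\left[\left(\E[Y\mid X]\right)^2\right]\le \E\left[\E[Y^2\mid X]\right]=\E[Y^2]<\infty$, so every moment appearing below is integrable and all manipulations are legitimate.

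Next I would evaluate the two summands on the right-hand side separately. Taking the expectation of the conditional-variance identity and applying the tower property to the first term yields
\begin{align*}
\E\left[\Var[Y\mid X]\right]=\E[Y^2]-\E\left[\left(\E[Y\mid X]\right)^2\right].
\end{align*}
For the second summand I would set $Z=\E[Y\mid X]$, observe that $\E[Z]=\E[Y]$ again by the tower property, and expand its variance as
\begin{align*}
\Var\left[\E[Y\mid X]\right]=\E\left[Z^2\right]-\left(\E[Z]\right)^2=\E\left[\left(\E[Y\mid X]\right)^2\right]-\left(\E[Y]\right)^2.
\end{align*}

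Adding these two displays, the term $\E\left[\left(\E[Y\mid X]\right)^2\right]$ enters once with a minus sign and once with a plus sign and therefore cancels, leaving $\E[Y^2]-\left(\E[Y]\right)^2=\Var[Y]$, which is exactly the claim. There is no genuinely hard step here: the entire argument is bookkeeping around one cancellation. The only point demanding care is justifying the application of the tower property to $Y^2$ and to $\left(\E[Y\mid X]\right)^2$; the assumption $\Var[Y]<\infty$, equivalently $\E[Y^2]<\infty$, is precisely what guarantees their integrability, and once that is secured the cancellation—and hence the lemma—follows immediately.
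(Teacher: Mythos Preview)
Your proof is correct and is the standard textbook derivation of the law of total variance. Note, however, that the paper does not actually prove this statement: it is presented as a cited \emph{Fact} (with reference~\cite{weiss2005TotalVarianceLaw}) and invoked without proof in the variance analysis, so there is no ``paper's own proof'' to compare against.
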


Furthermore, we plug the variance bound derived in Lemma~\ref{lem:conditional_variance} into Lemma~\ref{lem:var_recur}, which follows Lemma~\ref{lem:var_partial}.  

\begin{lemma}\label{lem:var_partial}
For all $\ell\in [0, L]$, the residue vectors $\er^{(\ell)}$ obtained by Algorithm~\ref{alg:VBES} in the $\ell$-th iterations satisfy: 
\begin{equation*}
\begin{aligned}
\sum_{\ell=1}^{L-1}\hspace{-0.5mm}\E\hspace{-0.5mm}\left[\hspace{-0.5mm}\Var\hspace{-0.5mm}\left[\hspace{-0.5mm}\left.\sum_{v\in V}\hspace{-0.5mm}\left(\sum_{s\in V}\hspace{-0.8mm}\frac{d_t}{d_s}\hspace{-0.5mm}\cdot \hspace{-0.5mm}\hspace{-0.5mm}\sum_{i=0}^{L-\ell}\hspace{-1.5mm}\frac{\vpi^{(i)}_v\hspace{-0.5mm}(s)}{\alpha}\hspace{-0.5mm}\right)\hspace{-0.5mm}\cdot \hspace{-0.5mm}\er^{(\ell)}_t\hspace{-0.5mm}(v)~\right|\er^{(\ell-1)}_t\hspace{-0.5mm}\right]\right]\hspace{-1mm}\le \hspace{-0.5mm}\frac{ L\th d_t \hspace{-0.5mm}\cdot \hspace{-0.5mm} n\vpi(t)}{\alpha^2}. 
\end{aligned}    
\end{equation*}
\end{lemma}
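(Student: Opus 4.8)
The plan is to bound each summand of the outer sum separately by $\frac{\th d_t n\vpi(t)}{\alpha^2}$, so that summing the $L-1$ indices yields the claimed $\frac{L\th d_t n\vpi(t)}{\alpha^2}$. Writing $f^{(\ell)}(v)=\sum_{s\in V}\frac{d_t}{d_s}\sum_{i=0}^{L-\ell}\frac{\vpi^{(i)}_v(s)}{\alpha}$ for the coefficient of $\er^{(\ell)}_t(v)$, the first step would exploit that, conditioned on $\er^{(\ell-1)}_t$, the increments $\incre^{(\ell)}(u,v)$ are mutually independent across all edges $(u,v)$, since each neighbor of each active node is sampled independently. Hence $\Var\left[\sum_v f^{(\ell)}(v)\er^{(\ell)}_t(v)\mid\er^{(\ell-1)}_t\right]=\sum_{u}\sum_{v\in N(u)}\left(f^{(\ell)}(v)\right)^2\Var\left[\incre^{(\ell)}(u,v)\mid\er^{(\ell-1)}_t\right]$, and I would plug in the per-increment bound $\th\cdot\frac{(1-\alpha)\er^{(\ell-1)}_t(u)}{d_u}$ established in the proof of Lemma~\ref{lem:conditional_variance}.

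The hard part will be taming the squared coefficient $\left(f^{(\ell)}(v)\right)^2$. The key observation is a property that holds only on undirected graphs: the PageRank upper bound $\vpi(v)\le \frac{d_v}{n}$. I would prove it from the recurrence in Equation~\eqref{eqn:ite_pagerank} by setting $g(v)=\vpi(v)/d_v$, noting that $g(v)\le(1-\alpha)\max_w g(w)+\frac{\alpha}{n}$, and maximizing over $v$ to get $\max_w g(w)\le\frac1n$. Combined with the reversibility of Equation~\eqref{eqn:undirectedPPR}, which rewrites $f^{(\ell)}(v)=\frac{d_t}{d_v}\cdot\frac{1}{\alpha}\sum_{s\in V}\sum_{i=0}^{L-\ell}\vpi^{(i)}_s(v)\le \frac{d_t}{d_v}\cdot\frac{n\vpi(v)}{\alpha}$, this yields $f^{(\ell)}(v)\le\frac{d_t}{\alpha}$. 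I can therefore replace one factor and write $\left(f^{(\ell)}(v)\right)^2\le\frac{d_t}{\alpha}f^{(\ell)}(v)$, linearizing the otherwise quadratic expression.

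Next I would take the expectation over $\er^{(\ell-1)}_t$, using the unbiasedness $\E[\er^{(\ell-1)}_t(u)]=\frac{1}{\alpha}\vpi^{(\ell-1)}_t(u)$ from Lemma~\ref{lem:unbiasedness_er}, and swap the order of summation so the inner factor becomes $\sum_{u\in N(v)}\frac{\vpi^{(\ell-1)}_t(u)}{d_u}$; by the recurrence of Equation~\eqref{eqn:PPR_recur} this collapses to $\frac{\vpi^{(\ell)}_t(v)}{1-\alpha}$, leaving $\frac{\th d_t}{\alpha^2}\sum_{v\in V}f^{(\ell)}(v)\vpi^{(\ell)}_t(v)$. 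The last step evaluates this sum using the semigroup (Chapman--Kolmogorov) identity $\sum_{v}\vpi^{(\ell)}_t(v)\vpi^{(i)}_v(s)=\alpha\,\vpi^{(\ell+i)}_t(s)$ together with reversibility, giving $\sum_v f^{(\ell)}(v)\vpi^{(\ell)}_t(v)=\sum_{s\in V}\sum_{k=\ell}^{L}\vpi^{(k)}_s(t)\le n\bpi(t)\le n\vpi(t)$, where the final inequality is $\bpi(t)\le\vpi(t)$. This bounds each summand by $\frac{\th d_t n\vpi(t)}{\alpha^2}$ uniformly in $\ell$, and summing over the $L-1$ indices gives the lemma. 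The main obstacle is genuinely the opening move of the second step: recognizing that the undirected-only bound $\vpi(v)\le d_v/n$ is exactly what linearizes the variance, after which everything downstream is bookkeeping with reversibility and the two recurrences.
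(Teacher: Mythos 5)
Your proof is correct, and its skeleton is the same as the paper's: split the conditional variance using the mutual independence of the increments $\incre^{(\ell)}(u,v)$, linearize the squared coefficient via $\left(f^{(\ell)}(v)\right)^2\le \frac{d_t}{\alpha}\cdot f^{(\ell)}(v)$, take expectations with Lemma~\ref{lem:unbiasedness_er} and Equation~\eqref{eqn:PPR_recur}, collapse the resulting sum with the identity $\sum_{v}\frac{1}{\alpha}\vpi^{(i)}_v(s)\vpi^{(\ell)}_t(v)=\vpi^{(\ell+i)}_t(s)$, and finish with reversibility. The one place you genuinely diverge is the step you call the crux: you derive $f^{(\ell)}(v)\le\frac{d_t}{\alpha}$ by first proving $\vpi(v)\le\frac{d_v}{n}$ with a fixpoint argument and then invoking reversibility, and you present this undirected-only inequality as what makes the variance tractable. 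That detour is valid, but it is not needed: the paper gets the same bound $f^{(\ell)}(v)\le \frac{d_t}{\alpha}$ immediately from $d_s\ge 1$ (so $\frac{d_t}{d_s}\le d_t$) and $\sum_{s\in V}\sum_{i=0}^{L-\ell}\vpi^{(i)}_v(s)\le\sum_{s\in V}\vpi_v(s)=1$, an argument that holds on any graph. Undirectedness actually enters the paper's proof (and, equivalently, the tail of yours) only at the final bookkeeping step, where $\sum_{s\in V}\frac{d_t}{d_s}\vpi^{(k)}_t(s)=\sum_{s\in V}\vpi^{(k)}_s(t)$ turns the sum into $n\bpi(t)\le n\vpi(t)$. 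So your proof stands, but your diagnosis of where the undirected structure is essential is misplaced; the $\vpi(v)\le d_v/n$ bound is an interesting fact, just not the load-bearing one here.
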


\begin{proof}
According to Algorithm~\ref{alg:VBES}, given the residue vector $\r_t^{(\ell)}$, the residue's increment $X^{(\ell+1)}(u,v)$ of node $u$ (formally defined in the proof of Lemma~\ref{lem:unbiasedness_er}) is independent with that of other nodes $w\in V$. Therefore, the variance expression given in Lemma~\ref{lem:var_recur} can be rewritten as: 
%Recall that in the $(\ell+1)$-th iteration of Algorithm~\ref{alg:VBES}, we set $\er^{(\ell+1)}(v)=\sum_{u\in N(v)}\incre^{(\ell+1)}(u,v)$. And the value of $\incre^{(\ell+1)}(u,v)$ is independent from other $\incre^{(\ell+1)}(x,y)$ for $\forall (x, y) \in E$ based on the derived residue vectors $\er^{(j)}_t$ for $j\in \{0, 1, \ldots, \ell\}$. Therefore, for each $v\in V$, $\er^{(\ell+1)}_t(v)$ is independent with each other, which follows: 
\begin{equation}\label{eqn:exp_sum}
\begin{aligned}
&\sum_{\ell=1}^{L-1}\E\left[\Var\left[\left.\sum_{v\in V}\left(\sum_{s\in V}\frac{d_t}{d_s}\cdot \sum_{i=0}^{L-\ell}\frac{\vpi^{(i)}_v(s)}{\alpha}\right)\cdot \er^{(\ell)}_t(v)~\right|\er^{(\ell-1)}_t\right]\right]\\
&=\sum_{\ell=1}^{L-1}\E\left[\sum_{v\in V}\Var\left[\left.\left(\sum_{s\in V}\frac{d_t}{d_s}\cdot \sum_{i=0}^{L-\ell}\frac{\vpi^{(i)}_v(s)}{\alpha}\right)\cdot \er^{(\ell)}_t(v)~\right|\er^{(\ell-1)}_t\right]\right]. 
\end{aligned}    
\end{equation}
Note that $\sum_{s\in V}\frac{d_t}{d_s}\cdot \sum_{i=1}^{L-\ell}\frac{\vpi^{(i)}_v(s)}{\alpha}$ is a deterministic probability mass rather than a random variable. Thus, we have: 
\begin{equation*}
\begin{aligned}
&\Var\left[\left.\left(\sum_{s\in V}\frac{d_t}{d_s}\cdot \sum_{i=0}^{L-\ell}\frac{\vpi^{(i)}_v(s)}{\alpha}\right)\cdot \er^{(\ell)}_t(v)~\right|\er^{(\ell-1)}_t\right]\\
&=\left(\sum_{s\in V}\frac{d_t}{d_s}\cdot \sum_{i=0}^{L-\ell}\frac{\vpi^{(i)}_v(s)}{\alpha}\right)^2 \cdot \Var\left[\left. \er^{(\ell)}_t(v)~\right|\er^{(\ell-1)}_t\right]
\end{aligned}    
\end{equation*}
In particular, the value of $\sum_{s\in V}\frac{d_t}{d_s}\cdot \sum_{i=1}^{L-\ell}\frac{\vpi^{(i)}_v(s)}{\alpha}$ can be upper bounded as: 
\begin{align*}
\sum_{s\in V}\frac{d_t}{d_s}\cdot \hspace{-2mm}\sum_{i=0}^{L-\ell}\hspace{-1mm}\frac{\vpi^{(i)}_v(s)}{\alpha}\le \frac{d_t}{\alpha}\cdot \hspace{-1mm}\sum_{s\in V}\sum_{i=0}^{L-\ell}\hspace{-1mm}\vpi^{(i)}_v(s) \le \frac{d_t}{\alpha}\cdot \hspace{-1mm} \sum_{s\in V}\vpi_v(s)=\frac{d_t}{\alpha}. 
\end{align*}
Plugging into Equation~\eqref{eqn:exp_sum}, we can further derive: 
\begin{equation*}
\begin{aligned}
&\sum_{\ell=0}^{L-1}\E\left[\sum_{v\in V}\Var\left[\left.\left(\sum_{s\in V}\frac{d_t}{d_s}\cdot \hspace{-1mm}\sum_{i=0}^{L-\ell}\frac{\vpi^{(i)}_v(s)}{\alpha}\right)\cdot \er^{(\ell)}_t(v)~\right|\er^{(\ell-1)}_t\right]\right]\\
&\le \hspace{-1mm}\frac{d_t}{\alpha}\cdot \hspace{-1mm}\sum_{\ell=0}^{L-1}\E\left[\sum_{v\in V}\left(\sum_{s\in V}\frac{d_t}{d_s}\cdot \hspace{-1mm}\sum_{i=0}^{L-\ell}\frac{\vpi^{(i)}_v(s)}{\alpha}\right)\cdot \Var\left[\left. \er^{(\ell)}_t(v)~\right|\er^{(\ell-1)}_t\right]\right]. 
\end{aligned}    
\end{equation*}
Recall that in Lemma~\ref{lem:conditional_variance}, we have already bounded the conditional variance: 
%\begin{align*}
$\Var\left[\left. \er^{(\ell)}_t(v)~\right|\er^{(\ell-1)}_t\right]\hspace{-1mm}\le \hspace{-0.5mm}\sum_{u\in N(v)}\hspace{-0.5mm}\frac{\th \cdot (1-\alpha)\cdot \er^{(\ell-1)}_t(u)}{d_u}$, 
%\end{align*}
Moreover, by Lemma~\ref{lem:unbiasedness_er} and Equation~\eqref{eqn:PPR_recur}, we have: 
\begin{equation*}
\begin{aligned}
&\E\left[\Var\left[\left. \er^{(\ell)}_t(v)~\right|\er^{(\ell-1)}_t\right]\right]\le \sum_{u\in N(v)}\frac{\th \cdot (1-\alpha)}{d_u}\cdot \E\left[\er^{(\ell-1)}_t(u)\right]\\
&=\sum_{u\in N(v)} \frac{\th \cdot (1-\alpha)}{\alpha \cdot d_u}\cdot \vpi^{(\ell-1)}_t(u)=\frac{\th}{\alpha}\cdot \vpi^{(\ell)}_t(v). 
\end{aligned}    
\end{equation*}
Therefore, it follows: 
\begin{equation*}
\begin{aligned}
&\sum_{\ell=1}^{L-1}\E\left[\sum_{v\in V}\Var\left[\left.\left(\sum_{s\in V}\frac{d_t}{d_s}\cdot \hspace{-1mm}\sum_{i=0}^{L-\ell}\frac{\vpi^{(i)}_v(s)}{\alpha}\right)\cdot \er^{(\ell)}_t(v)~\right|~\er^{(\ell-1)}_t\right]\right]\\
& \le \frac{d_t \cdot \th}{\alpha^2}\cdot \sum_{\ell=1}^{L-1} \sum_{s\in V}\frac{d_t}{d_s}\cdot \sum_{i=0}^{L-\ell}\sum_{v\in V} \frac{\vpi^{(i)}_v(s)}{\alpha} \cdot \vpi^{(\ell)}_t(v). 
%=\frac{d_t\cdot \th}{\alpha^2}\cdot \sum_{\ell=0}^{L-2} \sum_{s\in V}\frac{d_t}{d_s}\cdot \hspace{-2mm}\sum_{i=0}^{L-\ell-1}\vpi^{(\ell+i+1)}_t(s). 
\end{aligned}    
\end{equation*}
Note that $\sum_{v\in V}\frac{1}{\alpha}\cdot \vpi^{(i)}_v(s)\cdot \vpi^{(\ell)}_t(v)=\vpi^{(\ell+i)}_t(s)$. Moreover, 
\begin{align*}
\sum_{s\in V}\frac{d_t}{d_s}\cdot \sum_{i=0}^{L-\ell}\vpi^{(\ell+i)}_t(s)\le \sum_{s\in V}\frac{d_t}{d_s}\cdot \vpi_t(s)=\sum_{s\in V} \vpi_s(t)=n\vpi(t). 
\end{align*}
As a consequence, we can further derive: 
\begin{equation*}
\begin{aligned}
&\sum_{\ell=1}^{L-1}\E\left[\Var\left[\left.\sum_{v\in V}\left(\sum_{s\in V}\frac{d_t}{d_s}\cdot \hspace{-1mm}\sum_{i=0}^{L-\ell}\frac{\vpi^{(i)}_v(s)}{\alpha}\right)\cdot \er^{(\ell)}_t(v)~\right|\er^{(\ell-1)}_t\right]\right]\\
&\le \frac{d_t \cdot \th}{\alpha^2} \cdot \sum_{\ell=1}^{L-1} n\pi(t) \le \frac{1}{\alpha^2} \cdot L \th d_t \cdot n \vpi(t), 
\end{aligned}   
\end{equation*}
which follows the lemma. 
\end{proof}

Finally, by putting Lemma~\ref{lem:var_recur} and ~\ref{lem:var_partial} together, we can conclude that  
%\begin{align*}
$\Var\left[\epi(t)\right]\le \frac{L\cdot \th \cdot d_t}{n^2}\cdot n\epi(t)$, 
%\end{align*}
following Theorem~\ref{thm:variance}. %{\rev Detailed proofs of the above theorem and lemmas can be found in the Technical Report~\cite{TechnicalReport} due to the page limit of the main text. }

\subsection{Time Cost}\label{subsec:totalcost}
In the following, we analyze the expected time cost of the \setpush algorithm. Moreover, Theorem~\ref{thm:finalcost_analysis} provides the theoretical guarantees of the \setpush algorithm for achieving a $(\rela, \pf)$-approximation of the single-node PageRank. 

\begin{lemma}\label{lem:cost_theta}
The expected time cost of Algorithm~\ref{alg:VBES} can be bounded by ${\rev \frac{1}{\alpha \theta}=}~O\left(\frac{1}{\th}\right)$. 
\end{lemma}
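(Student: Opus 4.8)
The plan is to charge the whole running time to the residue mass that the algorithm pushes, and then use the unbiasedness result of Lemma~\ref{lem:unbiasedness_er} to turn that charge into a clean geometric sum. The cost of Algorithm~\ref{alg:VBES} is dominated by the double loop, so I would first bound the expected cost incurred while processing a single node $u$ with nonzero $\er^{(\ell)}_t(u)$ in iteration $\ell$, and argue it is $O\!\left(\frac{(1-\alpha)\cdot \er^{(\ell)}_t(u)}{\theta}\right)$ in both branches. In the deterministic branch the cost is $O(d_u)$, and the branch condition $(1-\alpha)\er^{(\ell)}_t(u)\ge \theta d_u$ directly gives $d_u\le \frac{(1-\alpha)\er^{(\ell)}_t(u)}{\theta}$. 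In the randomized branch the geometric-sampling loop generates, in expectation, $d_u p^{*}+1$ geometric variates, each in $O(1)$ time (as argued in Section~\ref{subsec:setpushalg}); since $d_u p^{*}=\frac{(1-\alpha)\er^{(\ell)}_t(u)}{\theta}$, this is $O\!\left(1+\frac{(1-\alpha)\er^{(\ell)}_t(u)}{\theta}\right)$.

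The main obstacle is the additive $+1$ overhead of the randomized branch: naively summed over every processed node and every level it need not be dominated by the residue term. I would remove it with a structural observation about the residues. Every increment applied to a residue is at least $\theta$: the randomized branch adds exactly $\theta$, while the deterministic branch adds $\frac{(1-\alpha)}{d_u}\er^{(\ell)}_t(u)\ge \theta$ by its own guard. Hence any nonzero residue satisfies $\er^{(\ell)}_t(u)\ge \theta$, so $\frac{(1-\alpha)\er^{(\ell)}_t(u)}{\theta}\ge 1-\alpha$ and the $+1$ can be absorbed into the main term at the cost of a constant factor $(1+\tfrac{1}{1-\alpha})$. This yields a uniform bound $O\!\left(\frac{(1-\alpha)\cdot \er^{(\ell)}_t(u)}{\theta}\right)$ on the expected work at $u$, valid simultaneously for both branches.

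It remains to aggregate. Summing over all nonzero nodes of level $\ell$ and conditioning on $\er^{(\ell)}_t$, the expected work of iteration $\ell$ is $O\!\left(\frac{1-\alpha}{\theta}\sum_{u\in V}\er^{(\ell)}_t(u)\right)$; taking total expectation through the tower rule and invoking Lemma~\ref{lem:unbiasedness_er} gives $\E\!\big[\sum_{u}\er^{(\ell)}_t(u)\big]=\frac1\alpha\sum_u \vpi^{(\ell)}_t(u)=(1-\alpha)^\ell$, where the last equality uses that $\A\D^{-1}$ is column-stochastic so that $\vpi^{(\ell)}_t$ in Equation~\eqref{eqn:def_lhopppr} has total mass $\alpha(1-\alpha)^\ell$. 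Summing the geometric series $\sum_{\ell=0}^{L-1}(1-\alpha)^\ell\le \frac1\alpha$ then bounds the total push cost by $O\!\left(\frac{1-\alpha}{\alpha\theta}\right)=O\!\left(\frac{1}{\alpha\theta}\right)$. Finally I would note that the concluding aggregation $\epi(t)=\frac1n\sum_{s}\frac{d_t}{d_s}\epi_t(s)$ only ranges over nodes with nonzero $\epi_t(s)$, whose number is at most the total number of residue increments and hence is dominated by the push cost already bounded. Treating $\alpha$ as a constant, this establishes the claimed $\frac{1}{\alpha\theta}=O(1/\theta)$ bound.
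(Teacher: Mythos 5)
Your proof is correct, and its core strategy is the same as the paper's: charge the running time to the push operations, bound the expected push work generated by a node with residue $\er^{(\ell)}_t(u)$ by $O\bigl((1-\alpha)\cdot\er^{(\ell)}_t(u)/\theta\bigr)$, convert expected residues into $\ell$-hop PPR mass via Lemma~\ref{lem:unbiasedness_er}, and sum the resulting geometric series to obtain $O\left(\frac{1}{\alpha\theta}\right)$. The bookkeeping differs: the paper defines a per-edge cost $Cost^{(\ell+1)}(u,v)$ that equals $1$ exactly when mass is pushed along $(u,v)$, bounds its conditional expectation by $\frac{(1-\alpha)}{d_u\theta}\cdot\er^{(\ell)}_t(u)$ in both branches, and then uses the recursion in Equation~\eqref{eqn:PPR_recur} to express the per-level cost as $\frac{1}{\alpha\theta}\sum_{v\in V}\vpi^{(\ell+1)}_t(v)$, whereas you work per node and collapse $\E\bigl[\sum_{u\in V}\er^{(\ell)}_t(u)\bigr]=(1-\alpha)^\ell$ directly from the column-stochasticity of $\A\D^{-1}$. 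What your version genuinely adds is rigor about overheads that the paper's accounting silently drops: counting only successful pushes never charges the final overshoot geometric variate generated in each randomized-branch while-loop (one per processed node per level), nor the cost of the concluding aggregation over nonzero entries of $\epi_t$. Your invariant that every nonzero residue is at least $\theta$ --- since each increment is either exactly $\theta$ or at least $\theta$ by the deterministic guard, and $\er^{(0)}_t(t)=1\ge\theta$ --- is precisely what is needed to absorb those $+1$ terms into the push cost at a constant-factor loss, so your argument closes a small gap in the paper's proof rather than merely reproducing it; both routes then land on the same $O\left(\frac{1}{\alpha\theta}\right)$ bound claimed in the lemma.
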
 

\begin{proof}
Let $Cost^{(\ell+1)}(u,v)$ denote the time cost of increasing $\r^{(\ell+1)}_t(v)$ during the update process conducted at node $u$ with nonzero $\r^{(\ell)}_t(u)$. 
%for pushing the probability mass from node $u$ to node $v$ in the $(\ell+1)$-th iteration of Algorithm~\ref{alg:VBES}. 
According to Algorithm~\ref{alg:VBES}, $Cost^{(\ell+1)}(u,v)=1$ holds deterministically if $\frac{(1-\alpha)}{d_u}\cdot \er^{(\ell)}_t(u)\ge \th$. On the other hand, if $\frac{(1-\alpha)}{d_u}\cdot \er^{(\ell)}_t(u)< \th$, $Cost^{(\ell+1)}(u,v)=1$ (i.e., pushing probability mass from node $u$ to $v$) holds with probability $\frac{(1-\alpha)}{d_u\cdot \th}\cdot \er^{(\ell)}_t(u)$, or $Cost^{(\ell+1)}(u,v)=0$ holds with probability $1-\frac{(1-\alpha)}{d_u\cdot \th}\cdot \er^{(\ell)}_t(u)$. Thus, given the $\ell$-hop residue vector $\r^{(\ell)}_t$, the expectation of $Cost^{(\ell+1)}(u,v)$ can be bounded as: 
%if $\frac{(1-\alpha)}{d_u}\cdot \er^{(\ell)}_t(u)< \th$, 
\begin{align*}
\E\left[Cost^{(\ell+1)}(u,v)~\big|~ \er^{(\ell)}_t\right]\le 1\cdot \frac{(1-\alpha)}{d_u\cdot \th}\cdot \er^{(\ell)}_t(u). 
\end{align*}
Furthermore, let $Cost^{(\ell+1)}$ denote the time cost of updating the $(\ell+1)$-hop residue vector $\r^{(\ell+1)}_t$ based on the $\ell$-hop residue vector $\r^{(\ell)}_t$. Then we have $Cost^{(\ell+1)}\hspace{-0.5mm}=\hspace{-0.5mm}\sum_{(u,v)\in E} Cost^{(\ell+1)}(u,v)$. It follows: 
\begin{equation*}
\begin{aligned}
\E\hspace{-0.5mm}\left[Cost^{(\ell+1)}~\big|~ \er^{(\ell)}_t\hspace{-0.5mm}\right]\hspace{-0.5mm}=\hspace{-3mm}\sum_{(u,v)\in E}\hspace{-2.5mm}\E\left[Cost^{(\ell+1)}(u,v)~\big|~ \er^{(\ell)}_t\hspace{-0.5mm}\right]\hspace{-0.5mm}=\hspace{-3mm}\sum_{(u,v)\in E}\hspace{-2mm}\frac{(1\hspace{-0.5mm}-\hspace{-0.5mm}\alpha)}{d_u \hspace{-0.5mm}\cdot \hspace{-0.5mm}\th}\hspace{-0.5mm}\cdot \er^{(\ell)}_t\hspace{-0.5mm}(u). 
%1\cdot \frac{(1-\alpha)}{d_u\cdot \th}\cdot \er^{(\ell)}_t(u). 
\end{aligned}    
\end{equation*}
By the property of expectation, we further have:  
\begin{equation*}
\begin{aligned}
&\E\left[Cost^{(\ell+1)}\right]=\E\left[\E\left[Cost^{(\ell+1)}~\big|~ \er^{(\ell)}_t\hspace{-0.5mm}\right]\right]=\hspace{-2mm}\sum_{(u,v)\in E}\hspace{-1mm}\frac{(1\hspace{-0.5mm}-\hspace{-0.5mm}\alpha)}{d_u \hspace{-0.5mm}\cdot \hspace{-0.5mm}\th}\hspace{-0.5mm}\cdot \E\left[\er^{(\ell)}_t\hspace{-0.5mm}(u)\right]\\
&=\frac{1}{\alpha\th}\cdot \sum_{v\in V}\sum_{u\in N(v)}\hspace{-2mm}\frac{(1\hspace{-0.5mm}-\hspace{-0.5mm}\alpha)}{d_u}\hspace{-0.5mm}\cdot \vpi^{(\ell)}_t\hspace{-0.5mm}(u)=\frac{1}{\alpha\th}\cdot \sum_{v\in V}\vpi^{(\ell+1)}_t(v),  
\end{aligned}
\end{equation*}
where we apply Lemma~\ref{lem:unbiasedness_er} in the third equality given above. We also apply Equation~\eqref{eqn:PPR_recur} in the last equality as shown above. Furthermore, let $Cost=\sum_{\ell=0}^{L-1}Cost^{(\ell+1)}$ denote the total time cost of Algorithm~\ref{alg:VBES}. Thus, we can derive: 
\begin{align*}
\E\left[Cost\right]=\hspace{-0.5mm}\sum_{\ell=0}^{L-1}\E\left[Cost^{(\ell+1)}\right]\hspace{-0.5mm}=\hspace{-0.5mm}\frac{1}{\alpha\th}\cdot \hspace{-1mm}\sum_{v\in V}\sum_{\ell=0}^{L-1}\vpi^{(\ell+1)}_t(v)\hspace{-0.5mm}\le \hspace{-0.5mm}\frac{1}{\th}=O\left(\frac{1}{\th}\right),  
\end{align*}
by applying $\sum_{\ell=0}^{L-1}\vpi^{(\ell+1)}_t(v)\le \vpi_t(v)$, and $\sum_{v\in V}\vpi^{(\ell+1)}_t(v)=\alpha$. Therefore, the lemma follows. 
\end{proof}

In the end, we employ the bound of variance $\Var\left[\epi(t)\right]$ derived in Theorem~\ref{thm:variance} to the Chebyshev's Inequality given in Fact~\ref{fact:chebyshev}, to derive an appropriate setting of the threshold $\theta$. %As shown in Theorem~\ref{thm:finalcost_analysis}, by setting $\theta$ roughly proportional to the value of $\max\left\{\frac{1}{d_t}, \frac{1}{\sqrt{m}}\right\}$, the \setpush algorithm achieves a $(\rela, \pf)$-approximation of $\vpi(t)$. Plugging the setting of $\theta$ to the results given by Lemma~\ref{lem:cost_theta}, the expected time cost can be further bounded by $\tilde{O}\left(\min\left\{d_t,\sqrt{m}\right\}\right)$.  

\begin{figure*}[t]
\begin{minipage}[t]{1\textwidth}
\centering
%\vspace{-5mm}
%    \begin{footnotesize}
\begin{tabular}{cccc}
\hspace{-3mm} \includegraphics[width=38mm]{./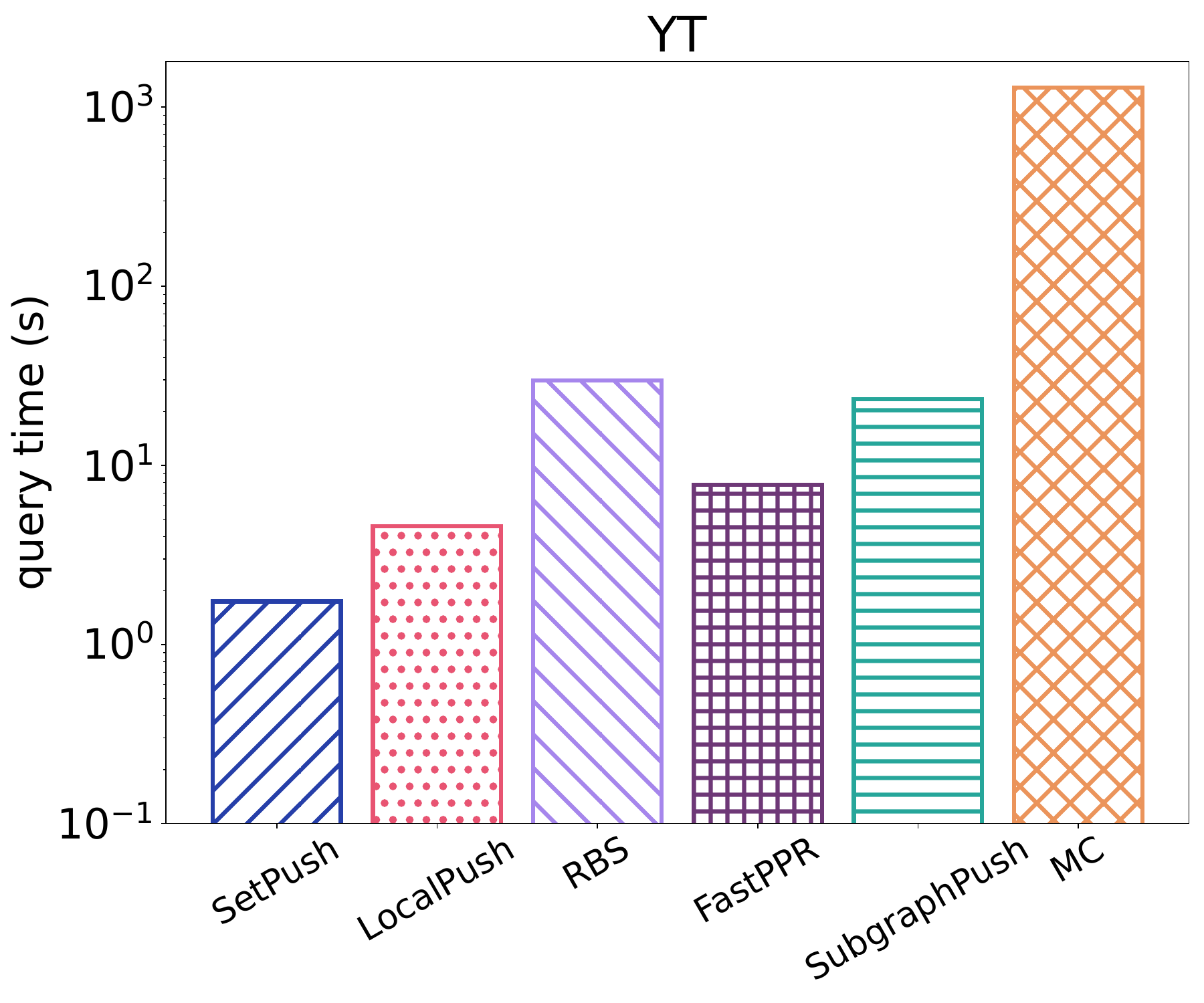} &
\hspace{+1mm} \includegraphics[width=38mm]{./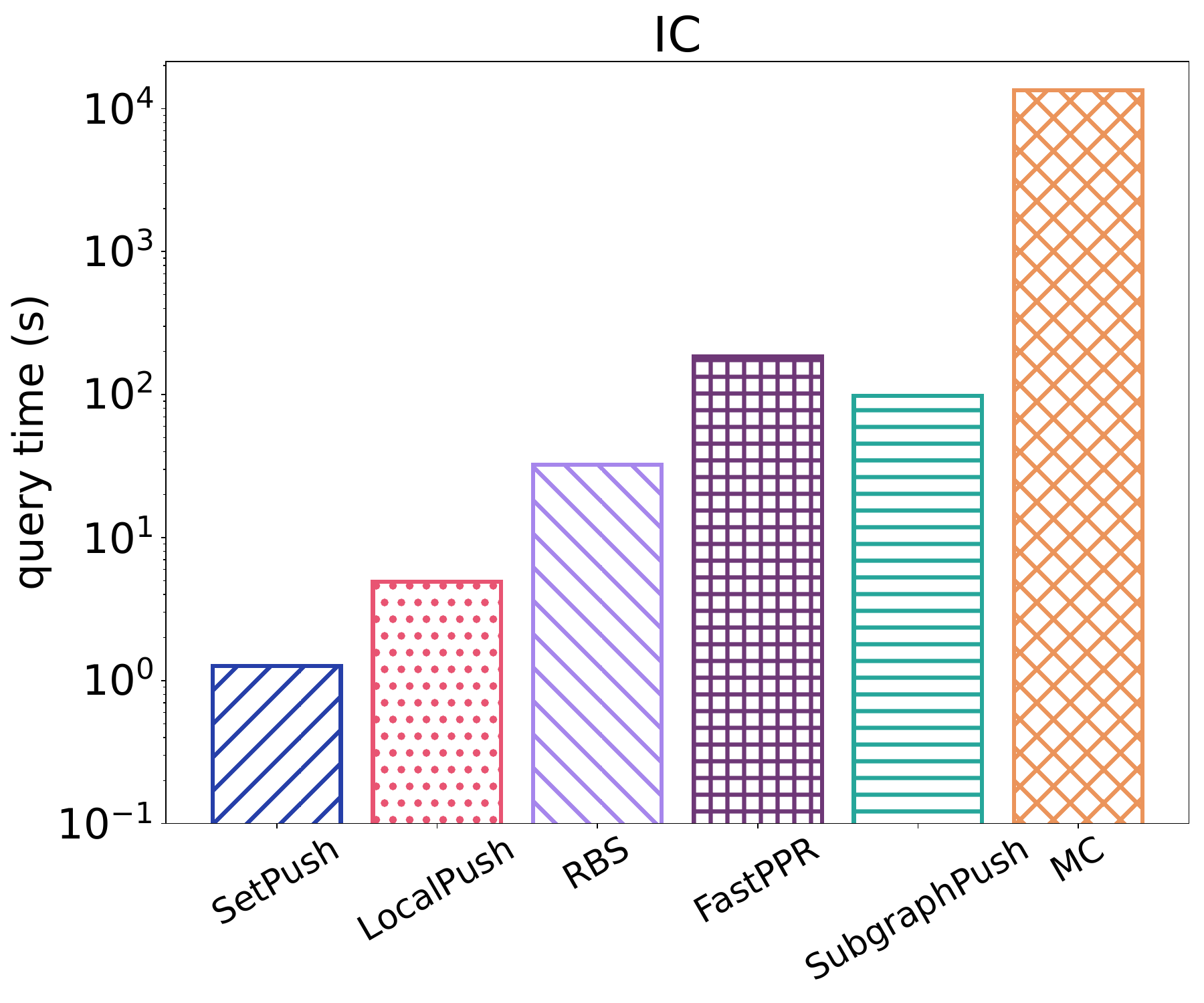} &
\hspace{+1mm} \includegraphics[width=38mm]{./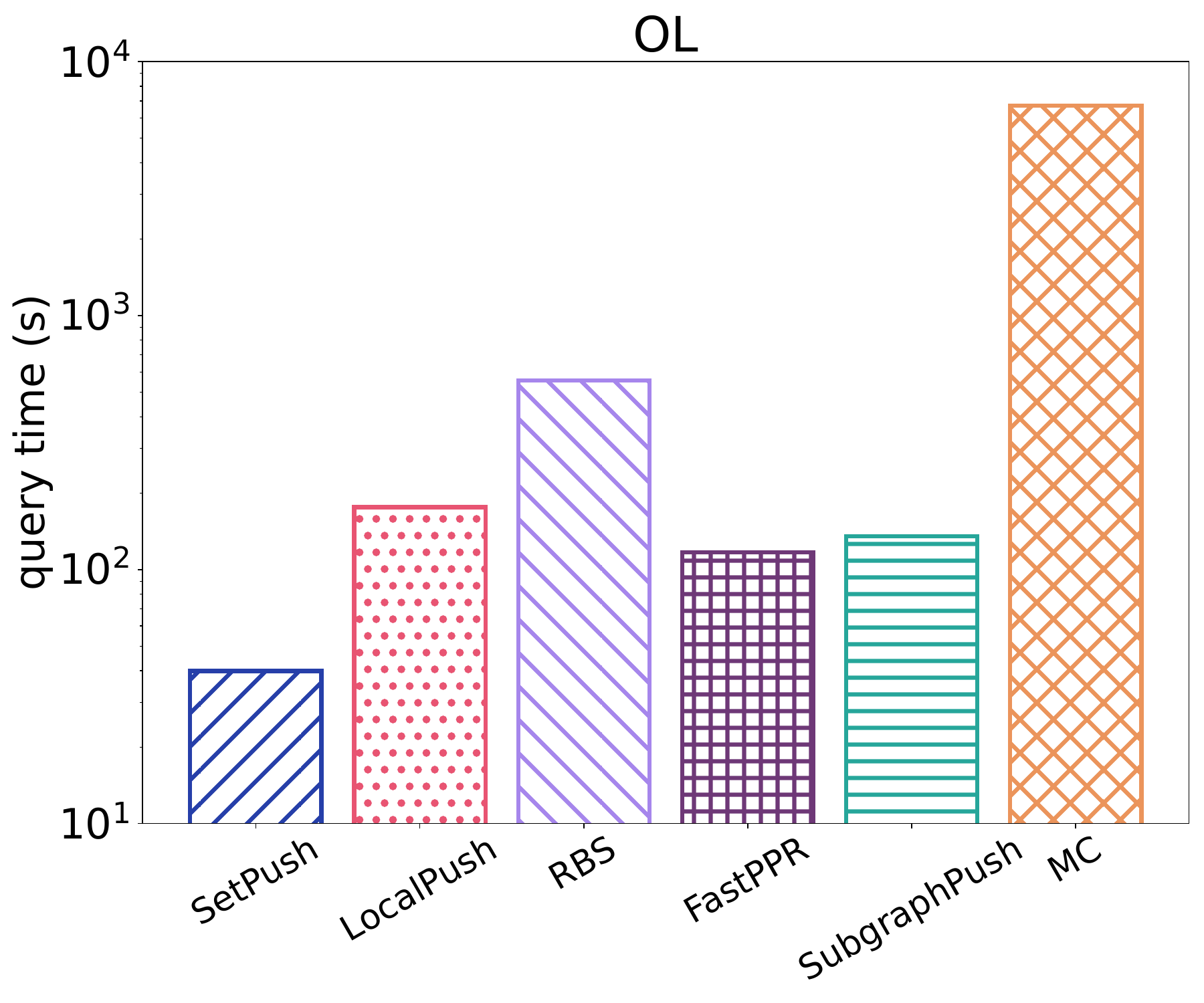} &
\hspace{+1mm} \includegraphics[width=38mm]{./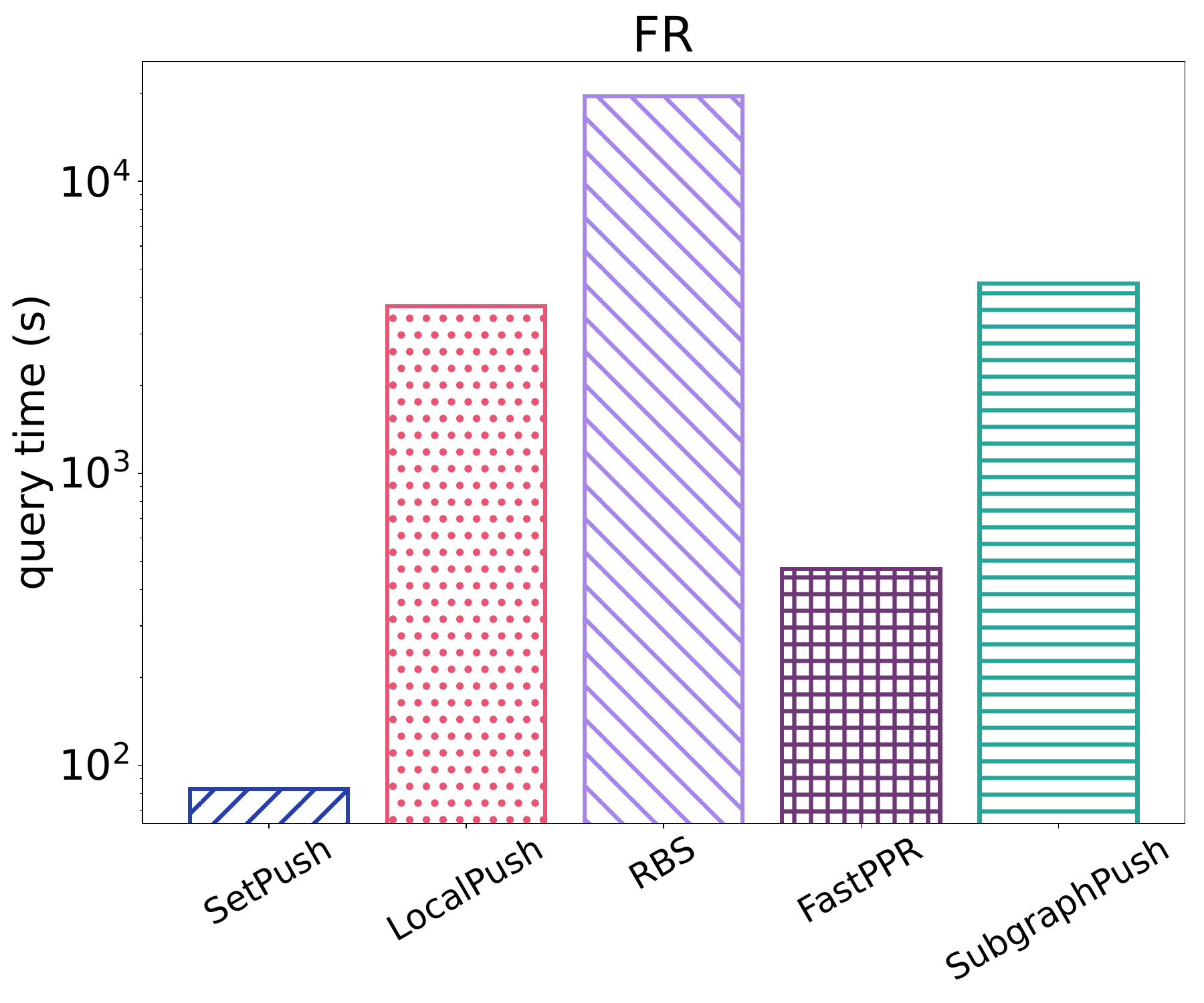} \\
\end{tabular}
\vspace{-6mm}
\caption{\rev Query time (seconds) of each algorithm with uniformly selected query node, $\rela=0.1$}
\label{fig:query_uniform_0.1}
%\vspace{+2mm}
\end{minipage}

\begin{minipage}[t]{1\textwidth}
\centering
%\vspace{-2mm}
%    \begin{footnotesize}
\begin{tabular}{cccc}
\hspace{-3mm} \includegraphics[width=38mm]{./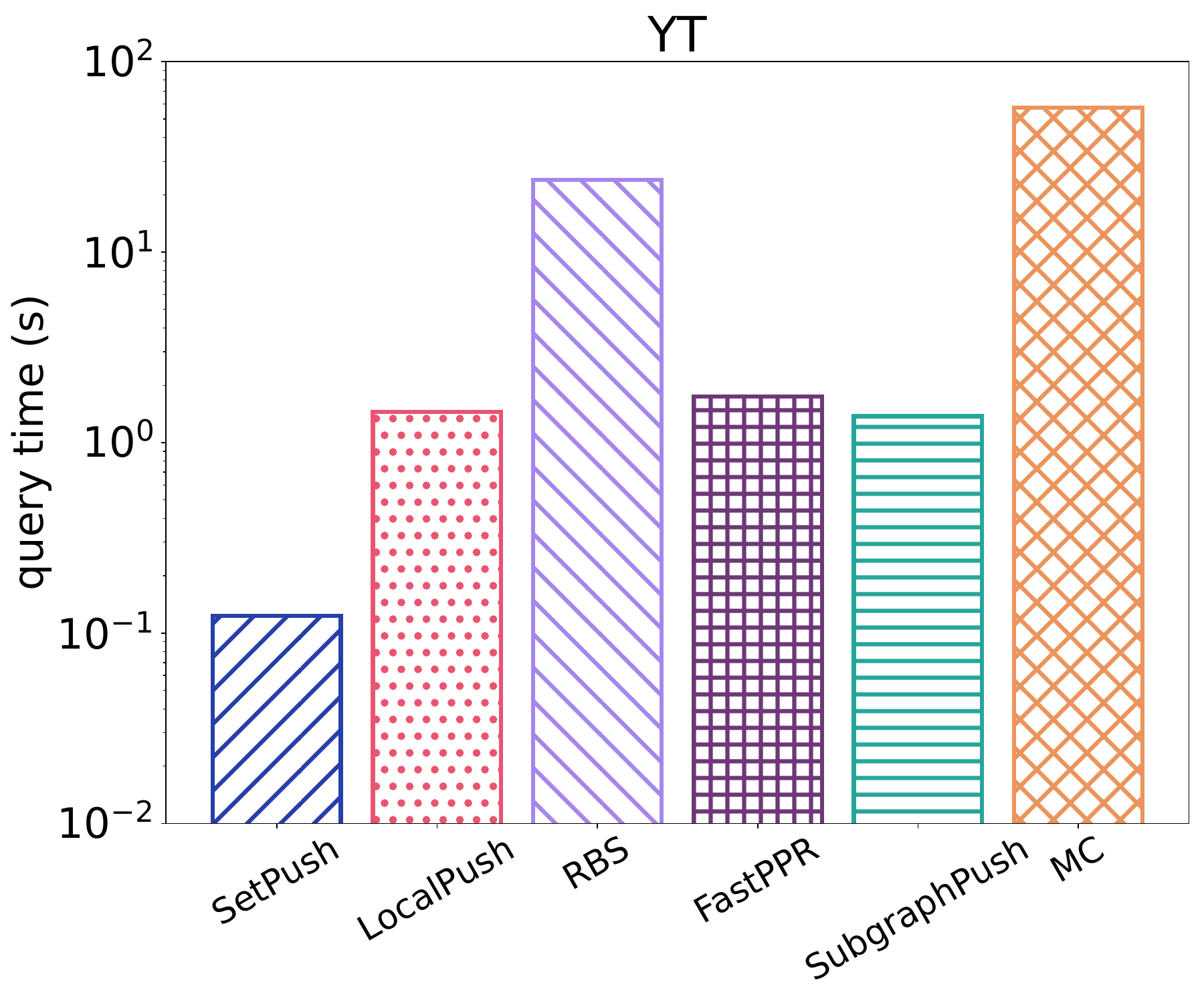} &
\hspace{+1mm} \includegraphics[width=38mm]{./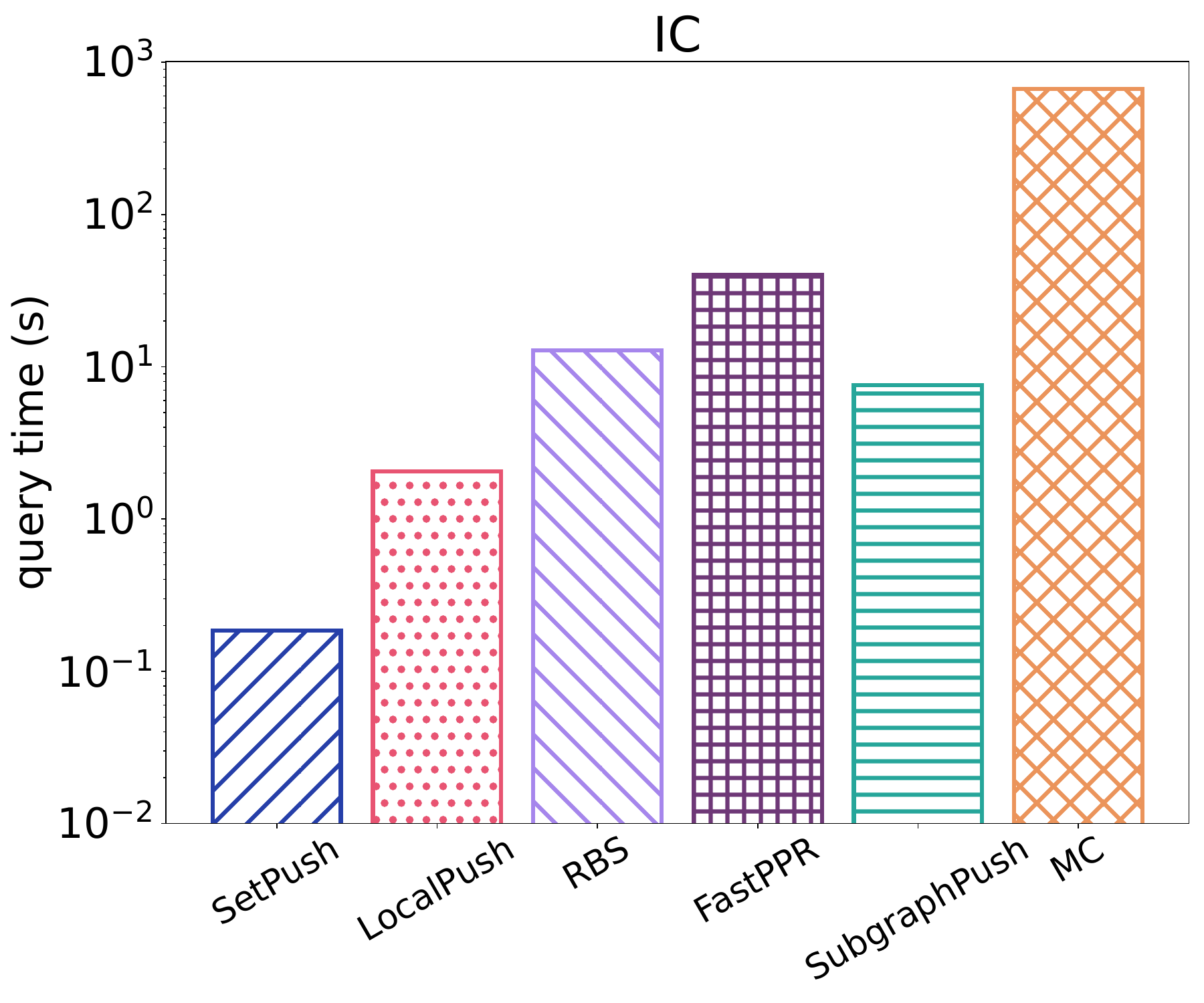} &
\hspace{+1mm} \includegraphics[width=38mm]{./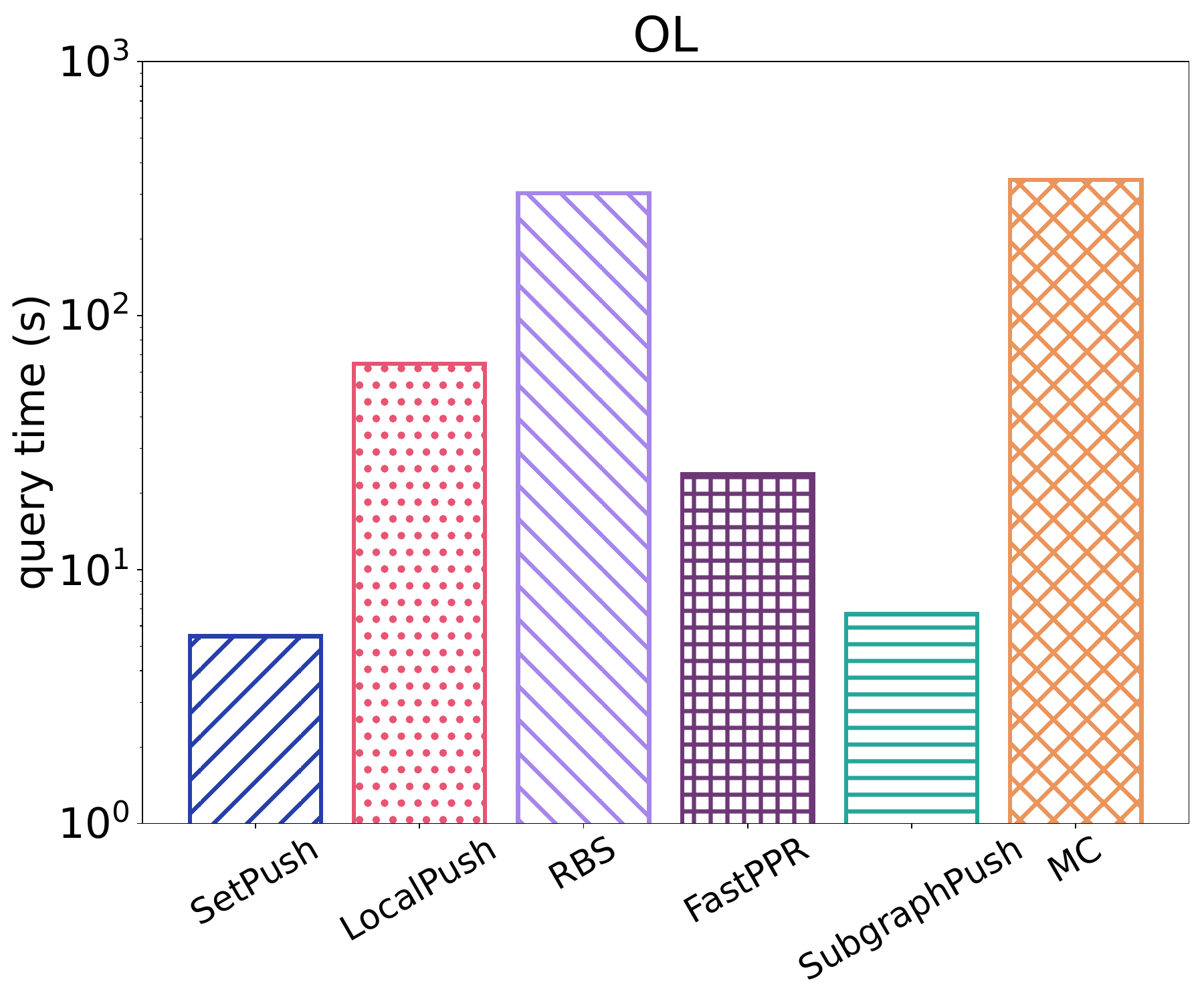} &
\hspace{+1mm} \includegraphics[width=38mm]{./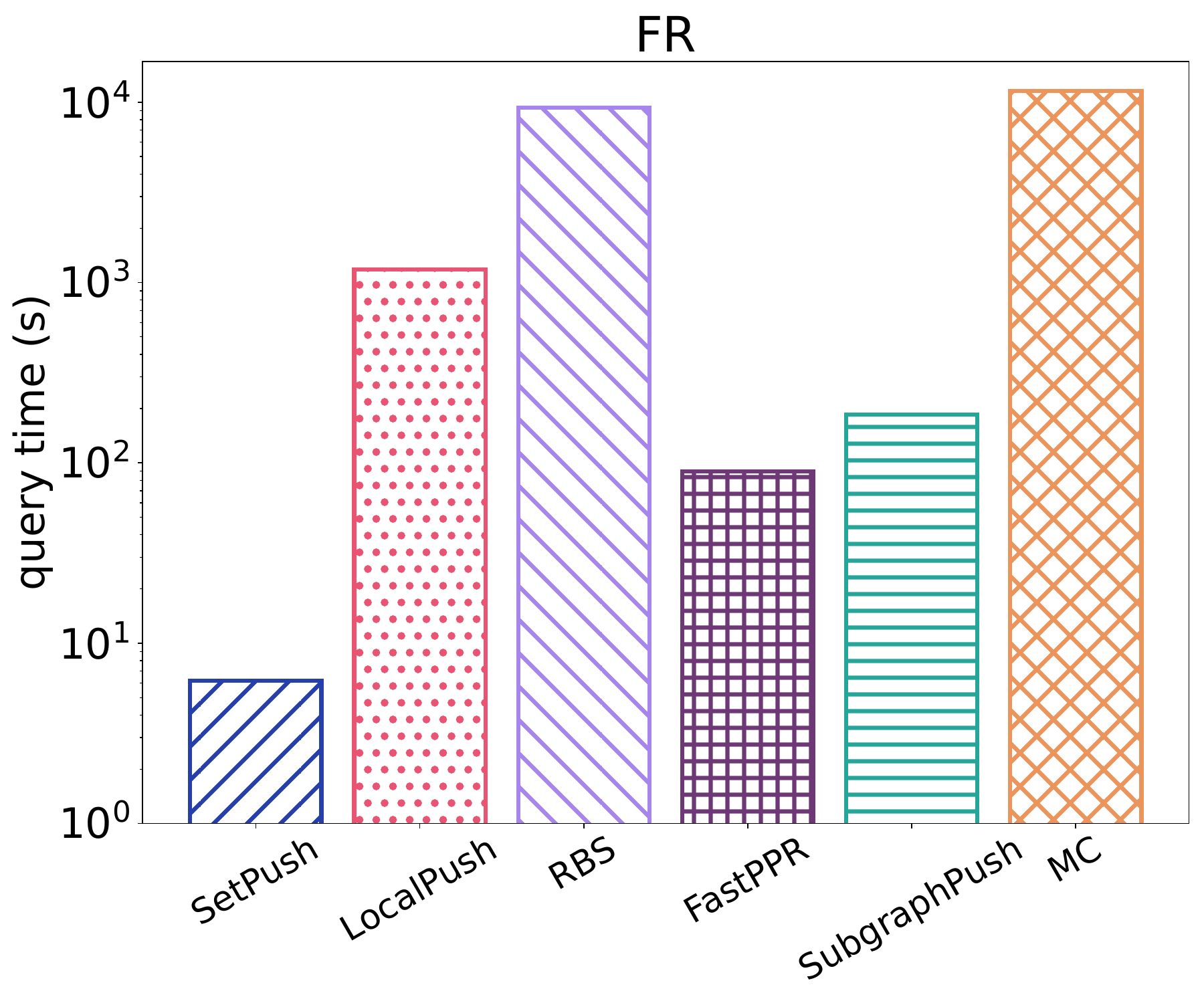} \\
\end{tabular}
\vspace{-6mm}
\caption{\rev Query time (seconds) of each algorithm with uniformly selected query node, $\rela=0.5$}
\label{fig:query_uniform_0.5}
\vspace{-2mm}
\end{minipage}
\end{figure*}

\begin{figure*}[t]
\begin{minipage}[t]{1\textwidth}
\centering
\vspace{-2mm}
%    \begin{footnotesize}
\begin{tabular}{cccc}
\hspace{-3mm} \includegraphics[width=38mm]{./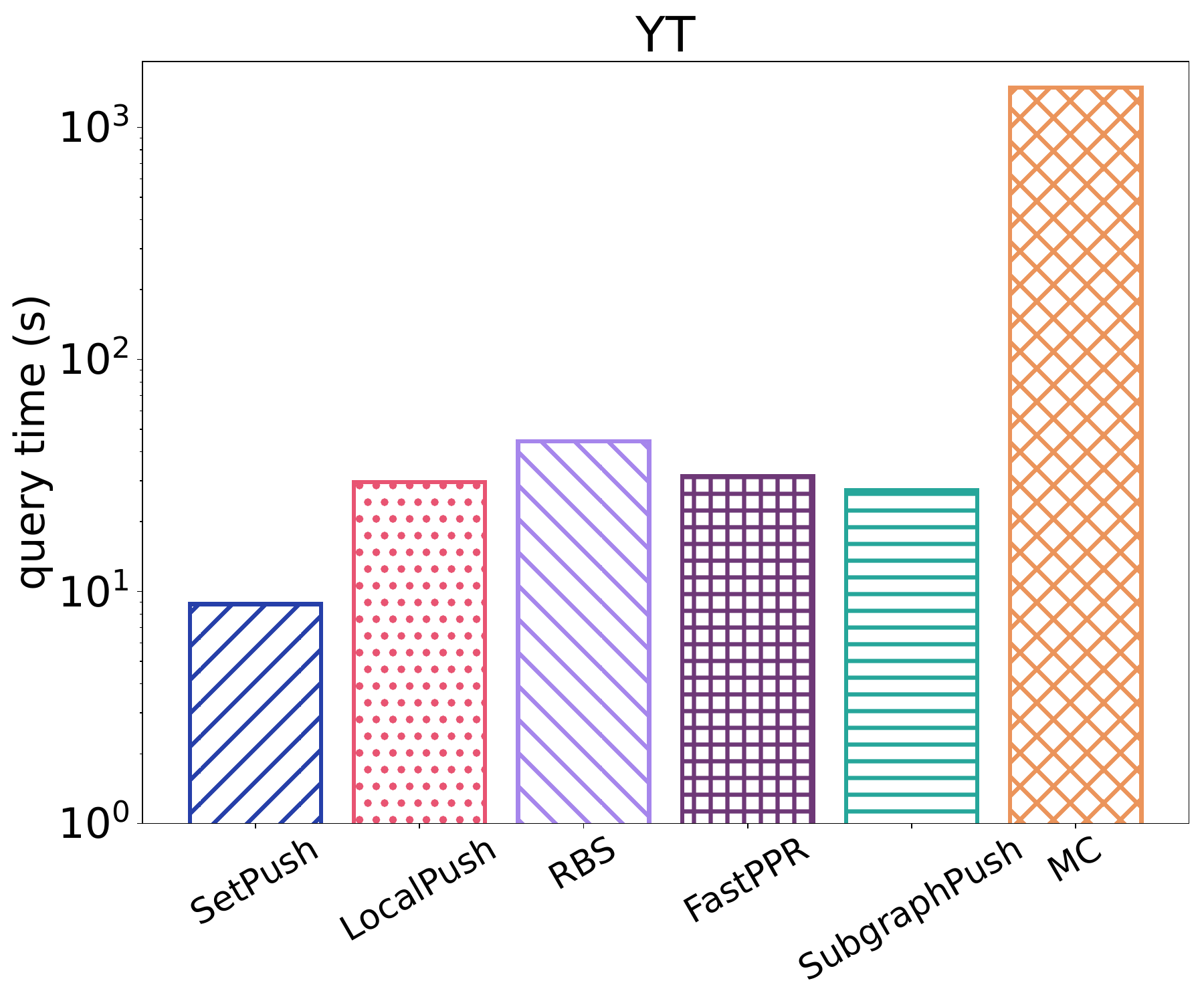} &
\hspace{+1mm} \includegraphics[width=38mm]{./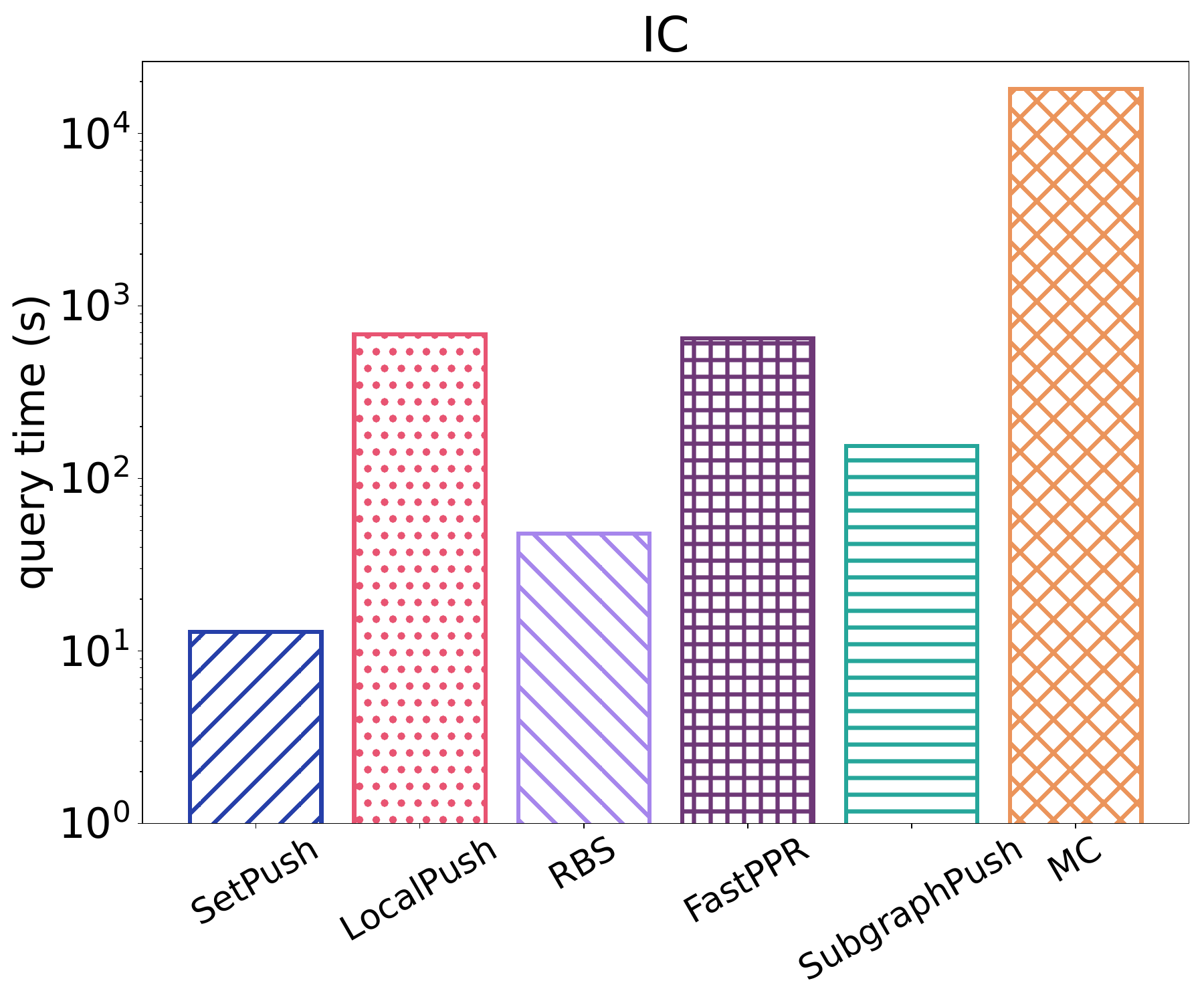} &
\hspace{+1mm} \includegraphics[width=38mm]{./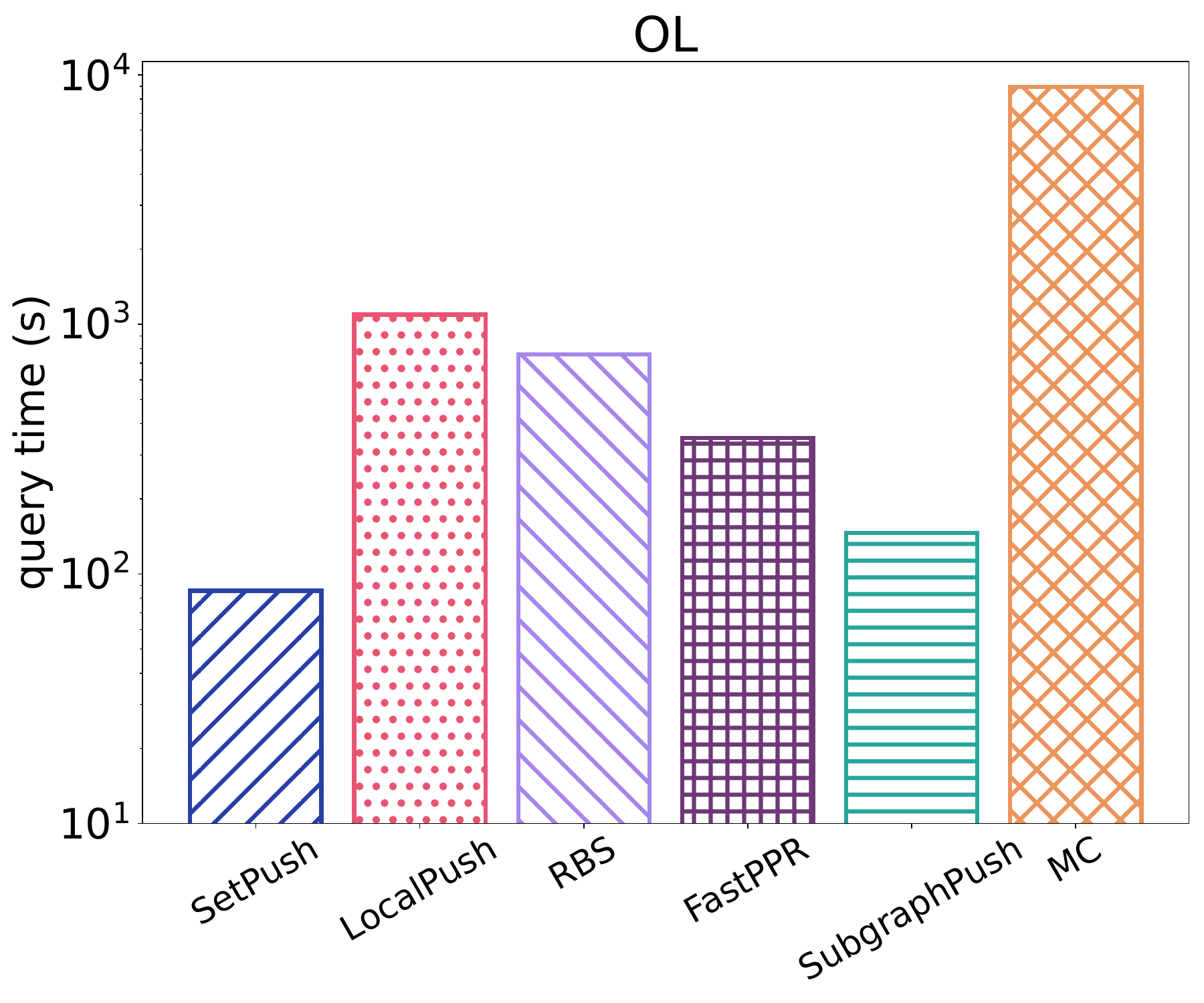} &
\hspace{+1mm} \includegraphics[width=38mm]{./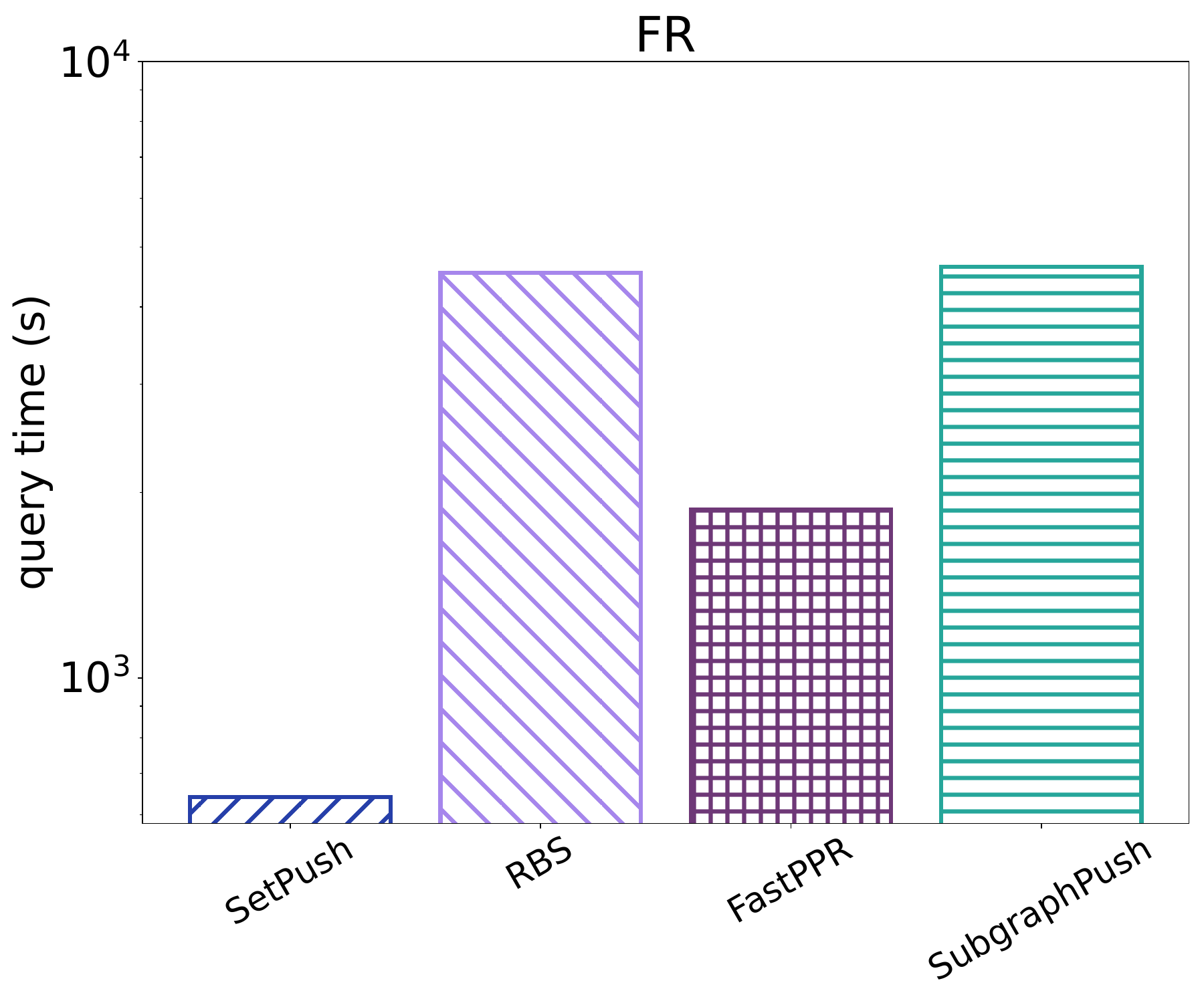} \\
\end{tabular}
\vspace{-6mm}
\caption{\rev Query time (seconds) of each algorithm with degree distributed query nodes, $\rela=0.1$}
\label{fig:query_degree_0.1}
%\vspace{+2mm}
\end{minipage}

\begin{minipage}[t]{1\textwidth}
\centering
%\vspace{-5mm}
%    \begin{footnotesize}
\begin{tabular}{cccc}
\hspace{-3mm} \includegraphics[width=38mm]{./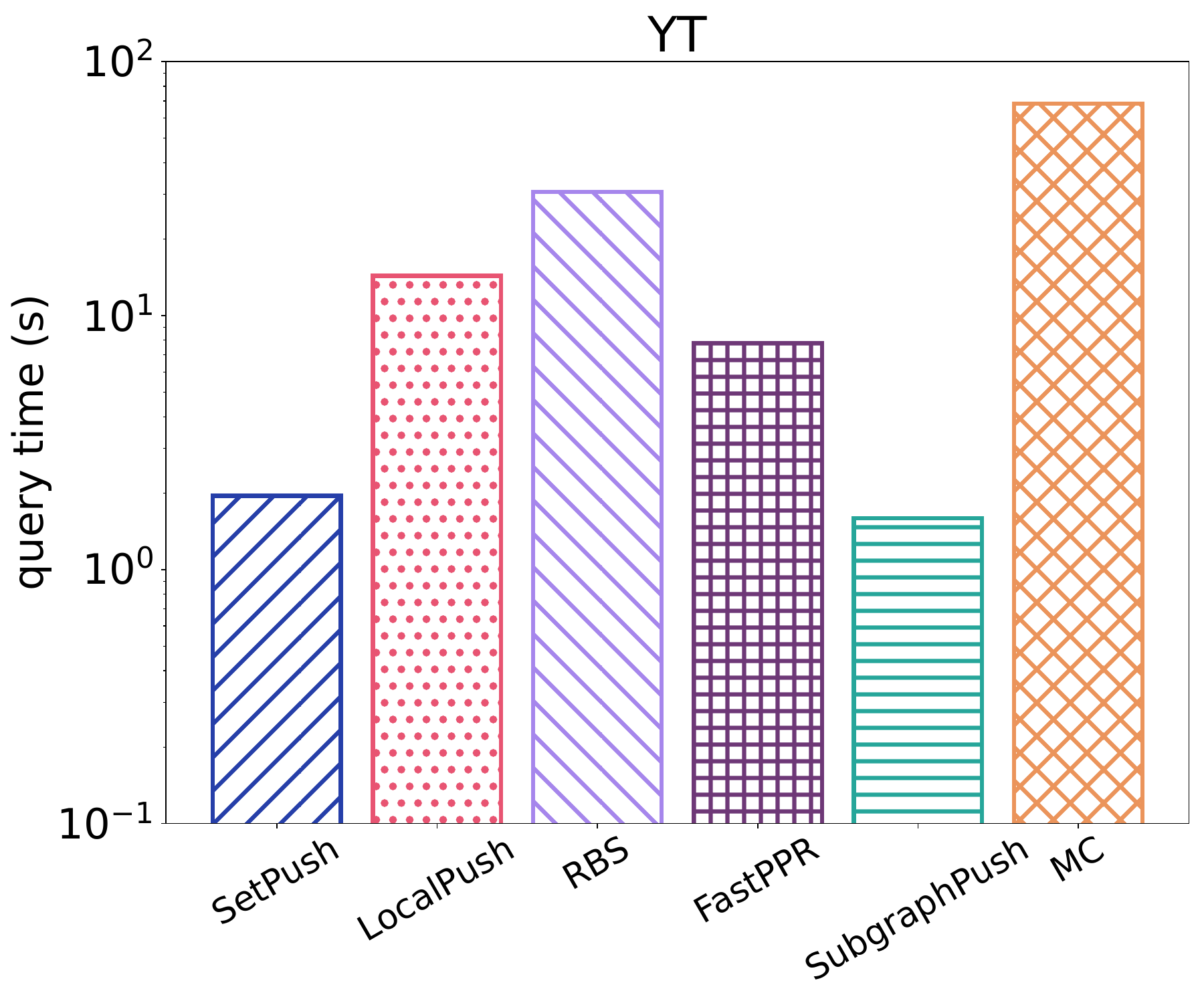} &
\hspace{+1mm} \includegraphics[width=38mm]{./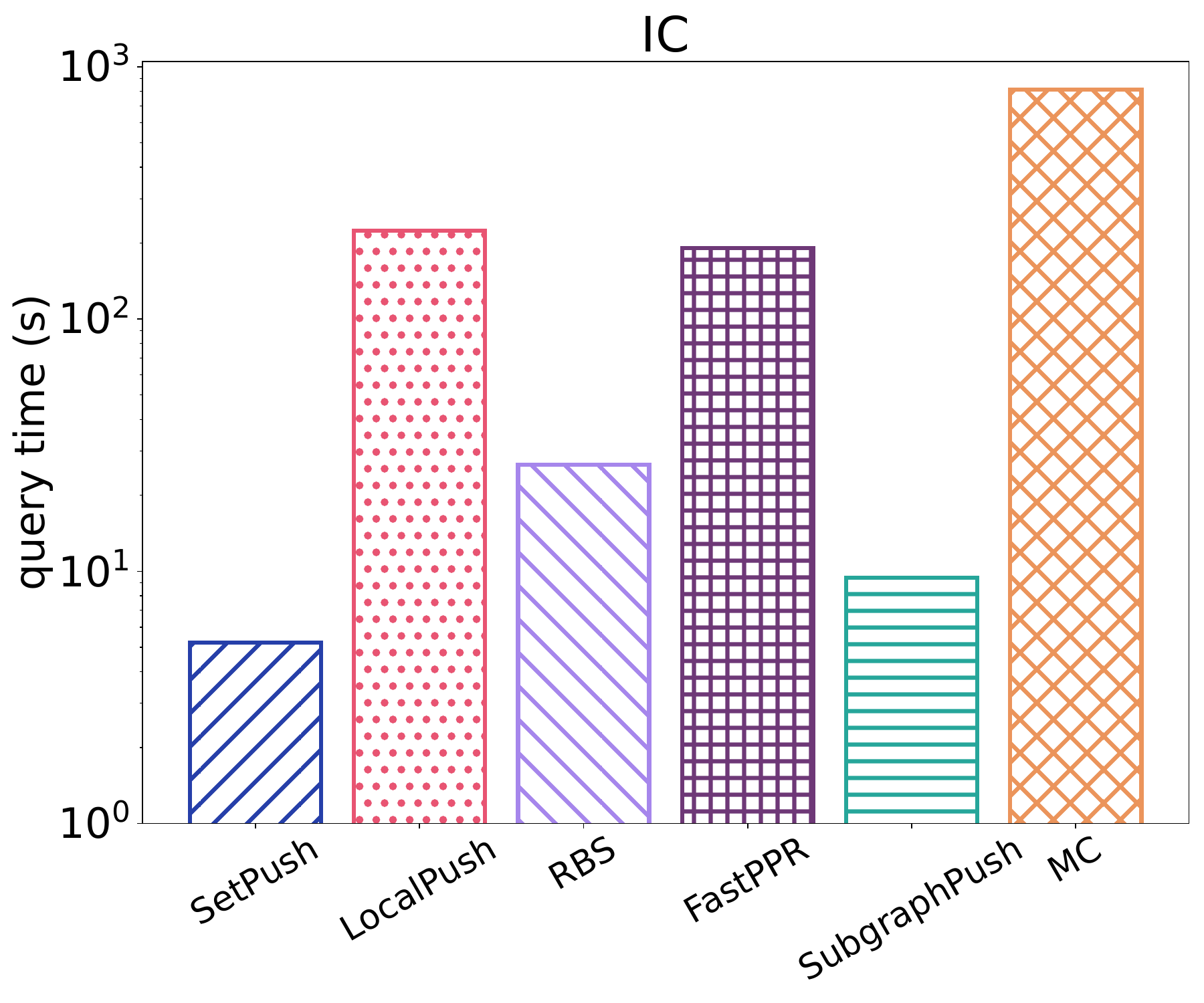} &
\hspace{+1mm} \includegraphics[width=38mm]{./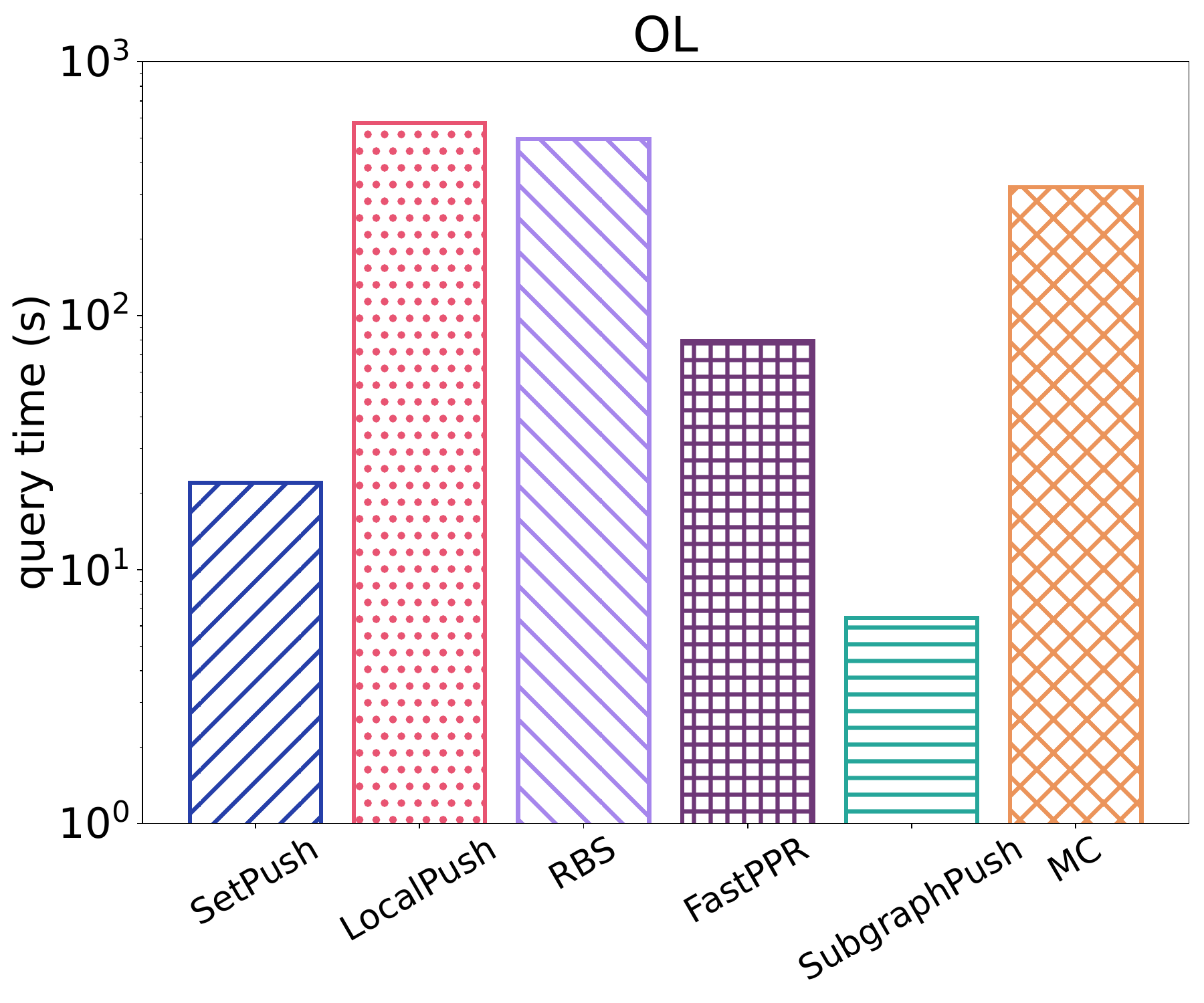} &
\hspace{+1mm} \includegraphics[width=38mm]{./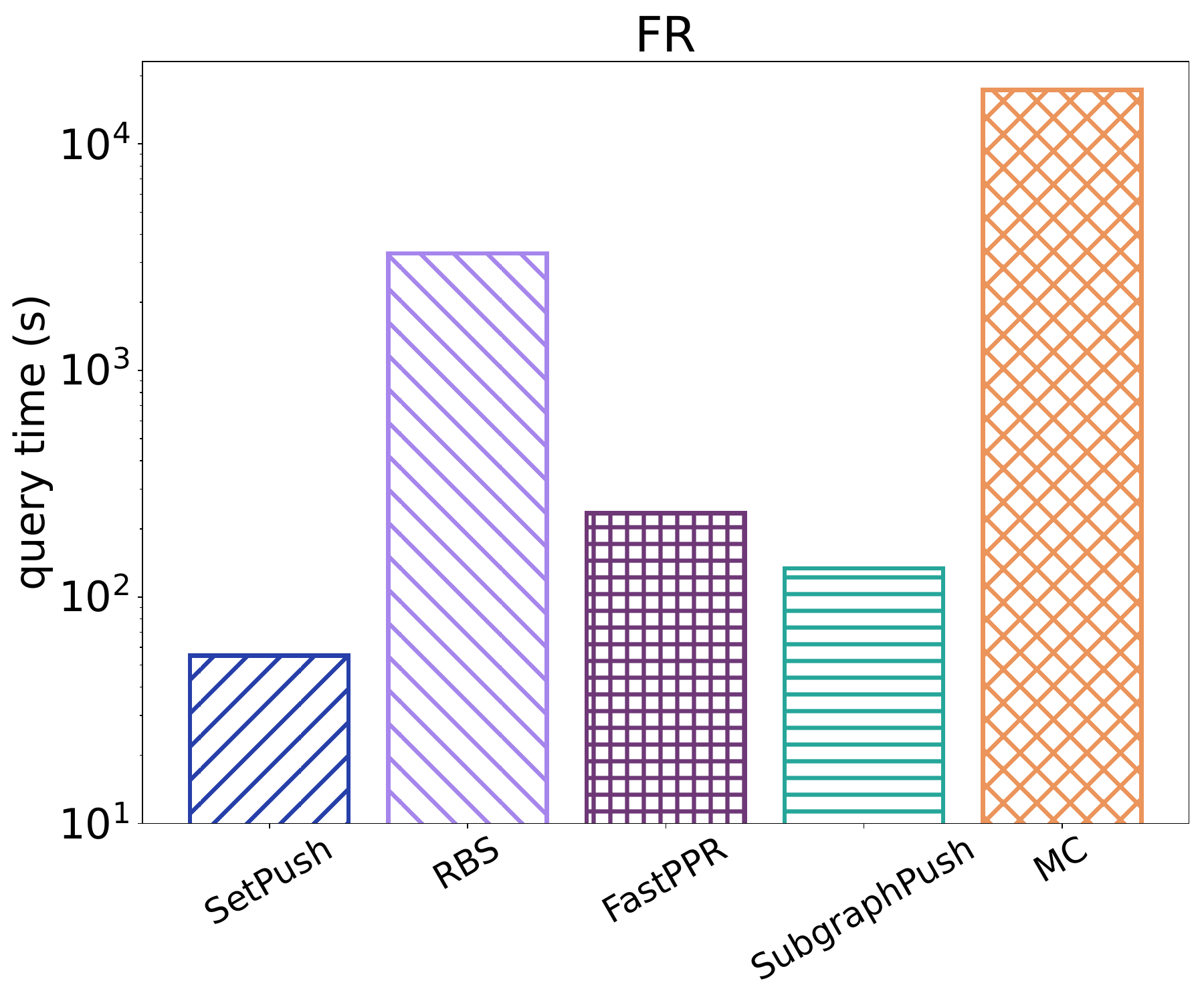} \\
\end{tabular}
\vspace{-6mm}
\caption{\rev Query time (seconds) of each algorithm with degree distributed query nodes, $\rela=0.5$}
\label{fig:query_degree_0.5}
\vspace{-2mm}
\end{minipage}
\end{figure*}

\begin{theorem}\label{thm:finalcost_analysis}
By setting $\theta=\max\left\{\frac{\alpha c^2}{12L\cdot d_t},\frac{\alpha c^2}{12L}\hspace{-0.5mm}\cdot \hspace{-0.5mm}\sqrt{\frac{2(1-\alpha)}{m}}\right\}$, Algorithm~\ref{alg:VBES} returns a $(c,p_f)$-approximation $\epi(t)$ of $\vpi(t)$, such that $|\vpi(t)-\epi(t)|\le c\cdot \vpi(t)$ holds with constant probability. The expected time cost of Algorithm~\ref{alg:VBES} is bounded by 
\begin{align*}
{\rev 
\frac{12 \cdot \left(\log_{1-\alpha}\frac{c\alpha}{2n}\right)}{\alpha^2 c^2}\cdot \min \left\{d_t, \sqrt{\frac{m}{2(1-\alpha)}}\right\}
}
~=\tilde{O}\left(\min\left\{d_t,\sqrt{m}\right\}\right). 
\end{align*} 
%\begin{enumerate}
%\item If we set $\theta=\tilde{O}\left(\frac{1}{\sqrt{m}}\right)$, then the expected time cost of Algorithm~\ref{alg:VBES} is bounded by $\tilde{O}\left(\sqrt{m}\right)$.  
%\item If we set $\theta=\tilde{O}\left(\frac{1}{\sqrt{m}}\right)$, then the expected time cost of Algorithm~\ref{alg:VBES} is bounded by $\tilde{O}\left(\sqrt{m}\right)$. 
%\end{enumerate}
%\vspace{-1mm}
\end{theorem}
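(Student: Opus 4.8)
The plan is to split the statement into a bound on the expected running time and a bound on the failure probability, and to track how the two arguments of the $\max$ defining $\theta$ separate into two regimes. Abbreviate $\theta_1=\frac{\alpha c^2}{12L d_t}$ and $\theta_2=\frac{\alpha c^2}{12L}\sqrt{\frac{2(1-\alpha)}{m}}$, so that $\theta=\max\{\theta_1,\theta_2\}$; a one-line comparison shows $\theta_1\ge\theta_2$ exactly when $d_t\le\sqrt{m/(2(1-\alpha))}$. I will call these the \emph{low-degree} regime ($\theta=\theta_1$) and the \emph{high-degree} regime ($\theta=\theta_2$).

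The running time is the easy half. Lemma~\ref{lem:cost_theta} already bounds the expected cost by $\frac{1}{\alpha\theta}$, so it only remains to substitute $\theta=\max\{\theta_1,\theta_2\}$: since $\frac1{\alpha\theta}=\frac1\alpha\min\{\theta_1^{-1},\theta_2^{-1}\}$ with $\theta_1^{-1}=\frac{12Ld_t}{\alpha c^2}$ and $\theta_2^{-1}=\frac{12L}{\alpha c^2}\sqrt{\frac m{2(1-\alpha)}}$, this equals $\frac{12L}{\alpha^2c^2}\min\{d_t,\sqrt{m/(2(1-\alpha))}\}$, matching the claim; using $L=O(\log n)$ and constant $\alpha,c$ turns it into $\tilde O(\min\{d_t,\sqrt m\})$.

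For correctness I would first apply Lemma~\ref{lem:trunc_pagerank_approx} to reduce the goal to showing that $\epi(t)$ is a $(c/2,p_f)$-approximation of the truncated PageRank $\bpi(t)$. Because $\E[\epi(t)]=\bpi(t)$ (Lemma~\ref{lem:unbiasedness_ppr}), Chebyshev's inequality (Fact~\ref{fact:chebyshev}) with deviation $\frac c2\vpi(t)$ and the variance bound of Theorem~\ref{thm:variance} give
\[
\Pr\{|\epi(t)-\bpi(t)|\ge\tfrac c2\vpi(t)\}\le\frac{4\Var[\epi(t)]}{c^2\vpi(t)^2}\le\frac{4L\theta d_t}{c^2 n\,\vpi(t)}.
\]
In the low-degree regime, substituting $\theta=\theta_1$ and the universal bound $\vpi(t)\ge\frac\alpha n$ from Equation~\eqref{eqn:ite_pagerank} makes the right-hand side collapse to the constant $\frac13$, so the claim follows immediately.

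The high-degree regime is where the work lies, and I expect it to be the main obstacle. Here $\theta=\theta_2$ and $d_t$ may be $\Theta(n)$, so the cheap bound $\vpi(t)\ge\frac\alpha n$ leaves a residual factor $\frac{d_t}{\sqrt m}>1$ and does not control the probability. The fix I would pursue is a sharper, degree-dependent lower bound on $\vpi(t)$ that is special to undirected graphs. Starting from the reversibility identity $\vpi(t)=\frac1n\sum_s\vpi_s(t)=\frac{d_t}{n}\sum_s\frac{\vpi_t(s)}{d_s}$ (Equation~\eqref{eqn:PageRank_PPR} and Equation~\eqref{eqn:birectional_ppr}), I would retain only the one-hop terms $\vpi_t(s)\ge\vpi_t^{(1)}(s)=\frac{\alpha(1-\alpha)}{d_t}$ for $s\in N(t)$ to get $\vpi(t)\ge\frac{\alpha(1-\alpha)}n\sum_{s\in N(t)}\frac1{d_s}$, and then apply the AM--HM inequality to the $d_t$ values $\{d_s\}_{s\in N(t)}$ together with the handshake estimate $\sum_{s\in N(t)}d_s\le 2m$ to obtain $\vpi(t)\ge\frac{\alpha(1-\alpha)d_t^2}{2mn}$. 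Feeding this bound (in place of $\frac\alpha n$) and $\theta_2$ into the Chebyshev estimate cancels the dangerous $d_t$ and leaves a $\sqrt m/d_t$ factor that is bounded precisely because $d_t>\sqrt{m/(2(1-\alpha))}$ in this regime. The delicate point is that neither lower bound alone suffices throughout: I would take $\vpi(t)\ge\max\{\frac\alpha n,\frac{\alpha(1-\alpha)d_t^2}{2mn}\}$ and check that the resulting failure-probability bound stays below $1$ for every $d_t$ (it is maximized where the increasing and decreasing estimates cross, yielding a constant strictly below $1$), after which the Median-of-Means trick drives the constant arbitrarily small. The structural fact that a high-degree target must carry enough PageRank is exactly what makes the $\sqrt m$ side of $\min\{d_t,\sqrt m\}$ achievable, and I expect pinning down this bound and its constants to be the heart of the argument.
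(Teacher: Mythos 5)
Your proposal is correct and follows the same skeleton as the paper's proof: expected time via Lemma~\ref{lem:cost_theta} as $\frac{1}{\alpha\theta}$, correctness via Lemma~\ref{lem:trunc_pagerank_approx}, unbiasedness (Lemma~\ref{lem:unbiasedness_ppr}), and Chebyshev (Fact~\ref{fact:chebyshev}) applied to the variance bound of Theorem~\ref{thm:variance}, with a degree-dependent lower bound on $\vpi(t)$ carrying the $\sqrt{m}$ regime. The one genuine difference is how that lower bound is formed and used. The paper starts from the recurrence in Equation~\eqref{eqn:ite_pagerank}, keeps \emph{both} terms to get $\vpi(t)\ge \frac{\alpha}{n}\left(\frac{(1-\alpha)d_t^2}{2m}+1\right)$ (Cauchy--Schwarz playing the role of your AM--HM step), and then applies AM--GM to the two terms to obtain the single uniform inequality $\frac{n\vpi(t)}{d_t}\ge \alpha\sqrt{\frac{2(1-\alpha)}{m}}$, valid for every $t\in V$; plugged into Chebyshev with $\theta_2$ this gives failure probability at most $\frac{4}{12}=\frac{1}{3}$ with no case analysis at all. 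You instead discard the additive $\frac{\alpha}{n}$ contribution (your reversibility/one-hop derivation keeps only $\vpi(t)\ge\frac{\alpha(1-\alpha)d_t^2}{2mn}$) and recover control via the max-of-two-bounds crossing argument; your computation is right, but the crossing point $d_t=\sqrt{2m/(1-\alpha)}$ yields a worst-case failure probability of $\frac{2}{3}$, not $\frac{1}{3}$. That still satisfies "constant probability," so the theorem as stated survives, but note that a per-run success probability of $\frac{1}{3}<\frac{1}{2}$ means the plain median does not amplify: you must first average a constant number of independent runs to push the failure probability below $\frac{1}{2}$ (the "means" half of Median-of-Means) before taking medians. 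Incidentally, your own route could reach the paper's uniform constant with one small repair: in $\vpi(t)=\frac{d_t}{n}\sum_{s\in V}\frac{\vpi_t(s)}{d_s}$, keep the $s=t$ zero-hop term $\vpi_t(t)\ge\alpha$, which restores the missing $\frac{\alpha}{n}$, and then AM--GM on the two terms eliminates the case analysis entirely — precisely the device by which the paper dissolves the delicacy you correctly identified as the heart of the argument.
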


\begin{proof} %To prove Theorem~\ref{thm:finalcost_analysis}, we utilize the Chebyshev's Inequality as shown below to bound the value of $\th$. 
Recall that the variance of $\epi(t)$ obtained by Algorithm~\ref{alg:VBES} is bounded by $\frac{L\cdot \th \cdot d_t}{n}\cdot \epi(t)$ as shown in Theorem~\ref{thm:variance}. Plugging into the Chebyshev's Inequality, we can further derive: 
\begin{align*}
\Pr\left\{\epi(t)-\bpi(t)\ge \frac{c}{2} \cdot \vpi(t)\right\} \le \frac{4 \cdot \Var\left[\epi(t)\right]}{c^2 \cdot \left(\vpi(t)\right)^2}\le \frac{4L\th d_t}{c^2 \cdot n\vpi(t)}. 
\end{align*}
Thus, by setting $\th= \frac{c^2 \cdot p_f \cdot n\vpi(t)}{4Ld_t}$, $\epi(t)-\bpi(t)\le \frac{c}{2} \cdot \vpi(t)$ holds with probability at least $p_f$. In particular, we note $\frac{c^2 \cdot p_f \cdot n\vpi(t)}{4Ld_t}\ge \frac{\alpha c^2 \cdot p_f}{4Ld_t}$ based on the fact that $\vpi(t)\ge \frac{\alpha}{n}$ as illustrated in Equation~\eqref{eqn:ite_pagerank}. If we set $\theta=\frac{\alpha c^2\cdot p_f}{4Ld_t}$, then according to Lemma~\ref{lem:cost_theta}, the expected time cost of Algorithm~\ref{alg:VBES} can be bounded by %$O\left(\frac{1}{\th}\right)=O\left(\frac{12Ld_t}{\alpha c^2}\right)=\tilde{O}\left(d_t\right)$ with $L=\log_{1-\alpha}{\frac{c\alpha}{2n}}$ and the constants $\alpha$ and $c$. 
{\rev 
$\frac{1}{\alpha \theta}=\frac{4L d_t}{\alpha^2 c^2 \cdot p_f}=\tilde{O}\left(d_t\right)$, where $\alpha, c, p_f$ are all constants, and $L=\log_{1-\alpha}{\frac{c\alpha}{2n}}$ (see Section~\ref{subsec:trunc_pagerank} for the details of setting $L$). 
}
Moreover, as we shall prove below, $\frac{n \vpi(t)}{d_t}\hspace{-0.5mm}\ge \hspace{-0.5mm}\alpha  \hspace{-0.5mm} \cdot  \hspace{-0.5mm}\sqrt{\frac{2(1-\alpha)}{m}}$ holds for any $t\in V$. Thus, by setting $\th \hspace{-0.5mm}= \hspace{-0.5mm}\frac{\alpha c^2 \cdot p_f}{4L}\hspace{-0.5mm}\cdot \hspace{-0.5mm}\sqrt{\frac{2(1-\alpha)}{m}}$, the expected time cost of Algorithm~\ref{alg:VBES} is bounded by 
%$O\left(\hspace{-0.5mm}\frac{1}{\th}\hspace{-0.5mm}\right)\hspace{-0.5mm}=\hspace{-0.5mm}O\left(\frac{12L}{\alpha c^2}\hspace{-0.5mm}\cdot \hspace{-1mm}\sqrt{\frac{m}{2(1-\alpha)}}\right)=\tilde{O}\left(\hspace{-0.5mm}\sqrt{m}\right)$. 
{\rev 
$\frac{1}{\alpha \theta}=\frac{4L}{\alpha^2 c^2 \cdot p_f}\cdot \hspace{-1mm}\sqrt{\frac{m}{2(1-\alpha)}}=\tilde{O}(\sqrt{m})$. 
}

Now we present the proof of $\frac{n \vpi(t)}{d_t}\hspace{-0.5mm}\ge \hspace{-0.5mm}\alpha \cdot \sqrt{\frac{2(1-\alpha)}{m}}$. By Equation~\eqref{eqn:ite_pagerank}, we have: 
\begin{equation}\label{eqn:sqrtm}
\begin{aligned}
\vpi(t)\ge (1-\alpha)\hspace{-2mm}\sum_{u\in N(t)}\hspace{-1mm}\frac{\vpi(u)}{d_u}+\frac{\alpha}{n} \ge (1-\alpha)\hspace{-1mm}\sum_{u\in N(t)}\frac{1}{d_u}\cdot \frac{\alpha}{n}+\frac{\alpha}{n}. %=\frac{\alpha}{n} \cdot \left(\sum_{u\in N(t)}\frac{(1-\alpha)}{d_u}\right). 
\end{aligned}    
\end{equation}
We note $\sum_{u\in N(t)}\hspace{-1mm}\frac{1}{d_u}\hspace{-0.5mm}\ge \hspace{-0.5mm}\frac{d_t^2}{2m}$ since
%\begin{align*}
$\left(\sum_{u\in N(t)}\frac{1}{d_u}\right)\cdot \left(\sum_{u\in N(t)}d_u\right)\ge \left(\sum_{u\in N(t)}1\right)^2=d_t^2$ 
%\end{align*}
holds by the Cauchy-Schwarz Inequality~\cite{steele2004cauchy}. Plugging into Inequality~\eqref{eqn:sqrtm}, we can further derive: 
\begin{align*}
\vpi(t)\hspace{-0.5mm}\ge \hspace{-0.5mm}\frac{\alpha}{n}\hspace{-0.5mm}\cdot \hspace{-0.5mm}\left(\frac{(1-\alpha)d_t^2}{2m}\hspace{-0.5mm}+\hspace{-0.5mm}1\right)\hspace{-0.5mm}=\hspace{-0.5mm}\frac{\alpha d_t}{n}\hspace{-0.5mm}\cdot\hspace{-0.5mm} \left(\frac{(1-\alpha)d_t\hspace{-0.5mm}+\hspace{-0.5mm}\frac{2m}{d_t}}{2m}\right)\hspace{-0.5mm}\ge \hspace{-0.5mm}\frac{\alpha d_t}{n}\hspace{-0.5mm}\cdot \hspace{-0.5mm}\sqrt{\frac{2(1\hspace{-0.5mm}-\hspace{-0.5mm}\alpha)}{m}}, 
\end{align*}
where we apply the fact that $(1-\alpha)d_t+\frac{2m}{d_t} \ge 2\cdot \sqrt{(1-\alpha)2m}$ by the AM-GM Inequality. Consequently, $\frac{n \vpi(t)}{d_t}\hspace{-0.5mm}\ge \hspace{-0.5mm}\alpha \cdot \sqrt{\frac{2(1-\alpha)}{m}}$ holds for each $t\in V$, and the theorem follows. 
\end{proof}

\section{Experiments} \label{sec:exp}
%\vspace{-1mm}

\begin{table}[t]
%\vspace{-5mm}
\centering
\caption{Datasets}
\vspace{-4mm}
\begin{small}
\begin{tabular}{|@{\hspace{+4mm}}l@{\hspace{+4mm}}|@{\hspace{+4mm}}r@{\hspace{+4mm}}|@{\hspace{+4mm}}r@{\hspace{+4mm}}|@{\hspace{+4mm}}r@{\hspace{+4mm}}|} \hline
{{\bf Dataset}}& {{\bf $\boldsymbol{n}$}} & {{\bf $\boldsymbol{m}$}} & {{\bf  $\boldsymbol{m/n}$}} \\ \hline
{Youtube(YT)} &{1,138,499} & {5,980,886} & {5.25}\\
{IndoChina (IC)} &{7,414,768} & {301,969,638} & {40.73}\\
{Orkut-Links (OL)} & {3,072,441} & {234,369,798} & {76.28}  \\ 
%{sx-stackoverflow}&6024271&36233450&6.01&{0.789697}\\
{Friendster (FR)} & {68,349,466} & {3,623,698,684} & {53.02} \\
\hline
\end{tabular}
\end{small}
\label{tbl:datasets}
%\tbldown
\vspace{-3mm}
\end{table}

This section presents the empirical results of \setpush. %{\sheperd The source code of \setpush has been made available at \url{https://github.com/wanghzccls/SetPush-code}}. 
All experiments are conducted on a machine with an Intel(R) Xeon(R) Gold 6126@2.60GHz CPU and 500GB memory with the Linux OS. We implement all algorithms in C++ compiled by g++ with the O3 optimization turned on. 

\header{\bf Datasets. } We use four large-scale real-world datasets in the experiments~\footnote{http://snap.stanford.edu/data}~\footnote{http://law.di.unimi.it/datasets.php}, including Youtube (YT), IndoChina (IC), Orkut-Links (OL) and Friendster (FR). 
%The four datasets are available at~\cite{snapnets,LWA}. 
The Youtube, Orkut-Links and Friendster datasets are all originated from social networks, where the nodes in the graph correspond to the users in the website, and edges indicates friendship between users. Additionally, the IndoChina is a web dataset for the country domains in Indochina. We summarize the statistics of all the datasets in Table~\ref{tbl:datasets}.

{\rev 
\header{\bf Query Sets. } We generate two sets of query nodes, denoted as $Q_1$ and $Q_2$, in the experiments. First, for the $Q_1$ query set, we select $10$ nodes from the graph's vertex set $V$ uniformly at random. 
For the second query set $Q_2$, we select 10 query nodes from $V$ according to the node degree distribution. The larger the node's degree is, the more likely the node is selected into $Q_2$. 
Note that the PageRank distribution of a real-world network is experimentally observed to follow the power-law distribution~\cite{wei2018topppr, lofgren2016BiPPR,wei2019prsim, bahmani2010fast}. In particular, the power-law exponent of the PageRank distribution is the same as that of the degree distribution of the network. 
%According to~\cite{wei2018topppr, lofgren2016BiPPR,wei2019prsim, bahmani2010fast}, the PageRank distribution of a scale-free network follows the same power-law exponent as the degree distribution. And most real-world networks are experimentally observed to be scale-free~\cite{lofgren2013personalized, barabasi2013network}. 
%for the graph structures derived from real-world datasets, the PageRank distribution and the degree distribution both follows the power law distribution with the same exponent. 
Therefore, by sampling query nodes according to the degree distribution, we are more likely to obtain the query nodes with relatively large PageRank scores. 
%is approximately the same as that according to the PageRank distribution. 
%We conduct experiments using the second query set to testify the performance of \setpush for computing large PageRank scores on the graph. 
%with uniform query node to evaluate the performance of \setpush for evaluating small PageRank nodes. And we complement the experiments with degree distributed query node to observe the results of all methods for dealing with the nodes whose PageRank are relatively large. 
}

%\header{\bf Ground Truths. }
%We use the power method (see Section~\ref{sec:related} for the introduction to power method) to compute the ground truths. To be more specific, we adopt the power method~\cite{page1999pagerank} to compute the PageRank values of these query nodes. Recall that the power method is an iterative method based on the definition formula of PageRank given in Equation~\eqref{eqn:def_pagerank}. In our experiments, we set the maximum iterations of power method as $100$ and fix the damping factor $\alpha$ as $0.2$. 

{\rev 
\header{\bf Parameters. } 
We compare our \setpush against five competitors: MC~\cite{fogaras2005MC}, LocalPush~\cite{lofgren2013personalized}, FastPPR~\cite{lofgren2014FastPPR}, RBS~\cite{wang2020RBS} and \sublinear~\cite{bressan2018sublinear}. Among them, MC is a Monte-Carlo method. LocalPush is a reverse exploration method. FastPPR~\cite{lofgren2014FastPPR}, RBS~\cite{wang2020RBS} and \sublinear~\cite{bressan2018sublinear} are all hybrid methods. 
%which correspond to the Monte-Carlo method, the reverse exploration method, and two hybrid methods, respectively. 
We set the parameters of these competitors strictly according to the theoretical analysis. Specifically, for the MC method~\cite{fogaras2005MC}, it has one parameter $n_r$, the number of $\alpha$-random walks. We set $n_r=\frac{\frac{2}{3}c+2}{c^2 \cdot \vpi(t)}\cdot \ln{\frac{1}{p_f}}$ according to the analysis. The LocalPush method~\cite{lofgren2013personalized} has one parameter: the push threshold $\e$. We set $\e=\frac{c \alpha}{n}$. The FastPPR method has two parameters: the push threshold $r_{\max}$ and the number of random walks $n_r$. We set $r_{\max}=c\cdot \sqrt{\frac{\alpha d_t \cdot \log{(1/(1-\alpha))}}{n\cdot \log{(1/p_f)}\cdot \log{(n/\alpha)}}}$, and $n_r=\frac{45\log_{1-\alpha}{(c\alpha/2n)}}{c}\cdot \sqrt{\frac{n \cdot d_t \cdot \log{(1/(1-\alpha))}\log{(2/p_f)}}{\alpha\cdot \log{(n/\alpha)}}}$ according to the descriptions in FastPPR~\cite{lofgren2014FastPPR}. For RBS, recall that RBS can achieve the $\tilde{O}(n)$ time complexity by setting the threshold $\theta=\frac{\rela^2 \cdot \vpi(t)}{12\cdot \log_{1-\alpha}{(c \alpha /2n)}}$. However, we do not know the real value of $\vpi(t)$ in advance. Thus, the value of $\vpi(t)$ can be only in place of the lower bound $\frac{\alpha}{n}$ of $\vpi(t)$ as indicated in Equation~\eqref{eqn:ite_pagerank}. Thus, in the experiments of RBS, we set $\theta=\frac{c^2 \alpha}{12n \cdot \log_{1-\alpha}{(c \alpha /2n)}}$. For the \sublinear method, it has three parameters: the number of random walks $n_r$, the number of subgraphs $k$, and the maximum iteration number $L$. We set $n_r\hspace{-0.5mm}=\hspace{-0.5mm}\min\left\{n^{\frac{2}{3}}\hspace{-1mm}\cdot \dmax^{1/3} \hspace{-1mm}\cdot \hspace{-0.5mm}\left(\ln{\frac{n}{p_f}}\right)^{\frac{1}{3}}\hspace{-2mm}\cdot \hspace{-0.5mm}\left(\ln{\frac{1}{p_f}}\right)^{\frac{1}{3}}\hspace{-2mm}\cdot c^{-\frac{4}{3}},n^{\frac{4}{5}}\davg^{\frac{1}{5}}\hspace{-1mm} \cdot \hspace{-0.5mm}\left(\ln{\frac{n}{p_f}}\right)^{\frac{1}{5}}\hspace{-2mm}\cdot \hspace{-0.5mm}\left(\ln{\frac{1}{p_f}}\right)^{\frac{2}{5}}\hspace{-2mm}\cdot c^{-\frac{6}{5}}\right\}$, 
%\begin{align*}
%\vspace{-4mm}
%n_r\hspace{-0.5mm}=\hspace{-0.5mm}\min\left\{\hspace{-0.5mm}n^{\frac{2}{3}}\hspace{-1mm}\cdot \dmax^{1/3} \hspace{-1mm}\cdot \hspace{-1mm}\left(\ln{\frac{n}{p_f}}\right)^{\frac{1}{3}}\hspace{-2mm}\cdot \hspace{-0.5mm}\left(\ln{\frac{1}{p_f}}\right)^{\frac{1}{3}}\hspace{-2mm}\cdot c^{-\frac{4}{3}}, n^{\frac{4}{5}}\davg^{\frac{1}{5}}\hspace{-1mm} \cdot \hspace{-1mm}\left(\ln{\frac{n}{p_f}}\right)^{\frac{1}{5}}\hspace{-2mm}\cdot \hspace{-0.5mm}\left(\ln{\frac{1}{p_f}}\right)^{\frac{2}{5}}\cdot \hspace{-2mm}c^{-\frac{6}{5}}\hspace{-0.5mm}\right\}, 
%\vspace{-2mm}
%\end{align*}
$k=\frac{n \cdot \log{1/p_f}}{c^2 \cdot n_r}$, and $L=\ln(c/n)$ following~\cite{bressan2018sublinear}. In all experiments, we set the failure probability $p_f=0.1$, the relative error parameter $c=0.1$, and the damping factor $\alpha=0.2$ unless otherwise specified.

\header{\bf Average Overall Query Time. } 
We first compare the empirical query time of all methods. Specifically, for each method, we issue one single-node PageRank query for each query node in the $Q_1$ query set, and report the average query time of each method over all the query nodes in $Q_1$ in Figure~\ref{fig:query_uniform_0.1} and Figure~\ref{fig:query_uniform_0.5}. In particular, we set the relative error parameter $c=0.1$ and $c=0.5$ in Figure~\ref{fig:query_uniform_0.1} and Figure~\ref{fig:query_uniform_0.5}, respectively. 
%with the relative error parameters $c$ set as $0.1$ and $0.5$, respectively. 
From Figure~\ref{fig:query_uniform_0.1} and Figure~\ref{fig:query_uniform_0.5}, we observe that our \setpush consistently outperforms other competitors, which demonstrates the superiority of our \setpush. %Additionally, we note that the Monte-Carlo method costs the largest query time among all methods. 
%Moreover, we note that the MC method outperforms LocalPush in terms of query time. This implies that the complexity bound of LocalPush is not tight. Additionally, generating random numbers in each random walks of MC leads to some time overhead which influences the empirical performance of the MC method. 
It's worth mentioning that we omit the MC method on the FR dataset in Figure~\ref{fig:query_uniform_0.1} since the query time of MC on the FR dataset exceeds one day. 

%This is because the time cost of MC scales linearly with the size of the graph, and the generation time of random number still costs some time overhead. Note that we omit the results of the LocalPush method and the Monte-Carlo method (MC) on Friendster (FR) since their query time both exceed one day. 

Moreover, in Figure~\ref{fig:query_degree_0.1} and Figure~\ref{fig:query_degree_0.5}, we report the average query time of each method over all the query nodes in the query set $Q_2$. We omit the LocalPush method in both Figure~\ref{fig:query_degree_0.1} and Figure~\ref{fig:query_degree_0.5}, and the MC method in Figure~\ref{fig:query_degree_0.1} because the query time of these methods exceed one day. We note that our \setpush still consistently outperforms other competitors when $c=0.1$. When $c=0.5$, the empirical query time of our \setpush outperforms other competitors (except the \sublinear method) by up to an order of magnitude on all datasets. However, on the YT and OL datasets, the \sublinear method slightly outperforms our \setpush. %We note that this only happens under relatively large $c$ (i.e., $c=0.5$) and the target node $t$ with high PageRank value. 
We attribute the superiority of \sublinear as shown in Figure~\ref{fig:query_degree_0.5} to the blacklist trick adopted in the \sublinear method. 
%Specifically, the \sublinear~\cite{bressan2018sublinear} method includes a preprocessing phase, in which the \sublinear method samples $\ell=\min \left\{\frac{n^{2/3}\dmax^{1/3}}{c^{4/3}}, \frac{n^{4/5}d^{1/5}}{c^{6/5}}\right\}$ $\alpha$-random walks. For each $u\in V$, the \sublinear method computes $b(u)$, where $b(u)$ denotes the faction of times that the $\ell$ $\alpha$-random walks stops at node $u$. The \sublinear method defines a set $B$ called blacklist, which contains all the nodes $u\in V$ with $b(u)\ge \tilde{O}\left(\frac{1}{\ell}\right)$. During the query phase, if the given target node $t$ belongs to the blacklist $B$, the \sublinear method will stop early and return $\frac{1}{b(t)}$ as the approximation of $\vpi(t)$. Thus, for the target node $t$ with high PageRank value and large relative error $c$, the empirical running time cost may outperform our \setpush. 
%It's worth noting that the blacklist trick adopted in \sublinear does not work in the worst case. %However, in practice, the \sublinear may performs well for some target node under large relative error. 
In Figure~\ref{fig:dt}, we report the increment of the query time of each method with increasing $d_t$ and fixed $c=0.1$. We observe that our \setpush can consistently outperform \sublinear on all datasets. This demonstrates the superiority and robustness of our \setpush. %On the other hand, the blacklist trick can be also adopted in our \setpush, which, however, is beyond the scope of this paper. We leave the question of how to leverage this trick for achieving a better time complexity result as our future work.

\begin{figure}[t]
%\begin{minipage}[t]{1\textwidth}
\centering
%\vspace{-5mm}
%    \begin{footnotesize}
\begin{tabular}{cc}
%\hspace{-4mm} \includegraphics[width=45mm]{./} &
\hspace{-4mm} \includegraphics[width=40mm]{./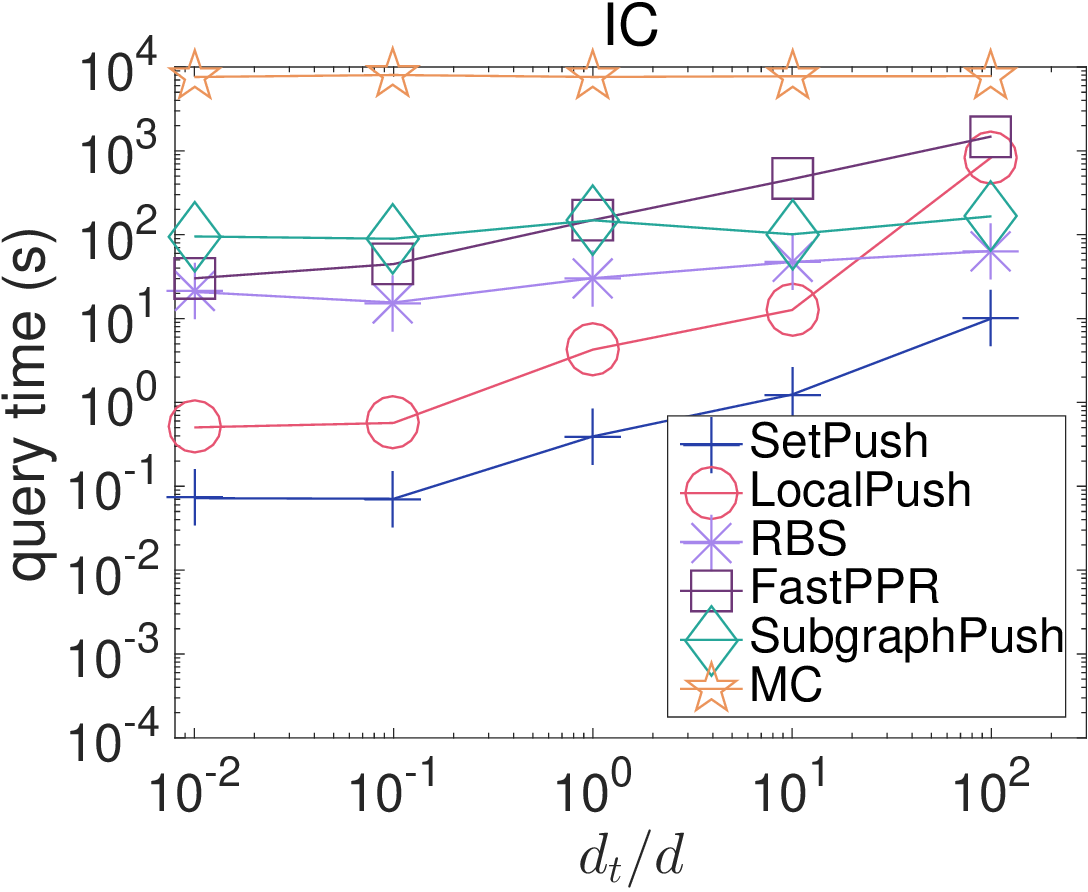} &
%\hspace{-3mm} \includegraphics[width=45mm]{./} &
\hspace{-3mm} \includegraphics[width=40mm]{./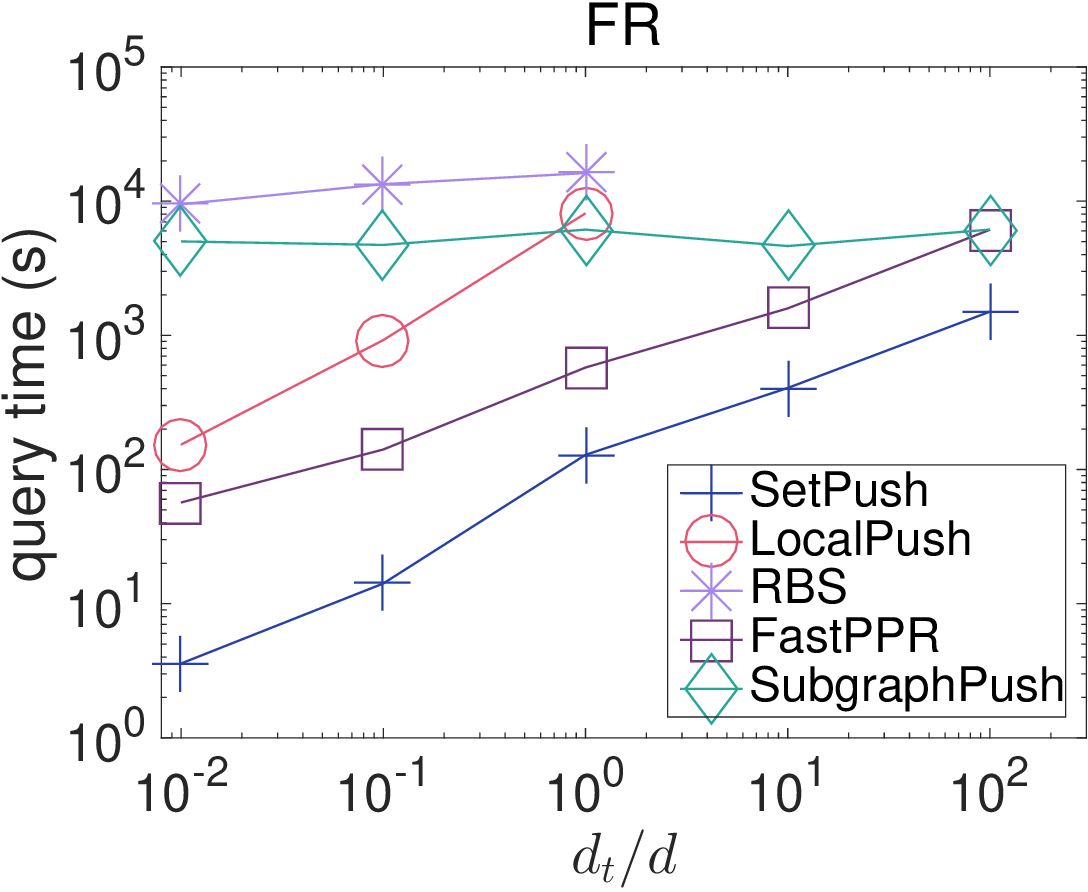} \\
\end{tabular}
\vspace{-6mm}
\caption{\rev $d_t$ v.s. query time. } 
\label{fig:dt}
\vspace{-6mm}
%\end{minipage}
\end{figure}

\begin{figure}[t]
%\begin{minipage}[t]{1\textwidth}
\centering
%\vspace{-1mm}
%    \begin{footnotesize}
\begin{tabular}{cc}
%\hspace{-4mm} \includegraphics[width=45mm]{./} &
\hspace{-3mm} \includegraphics[width=40mm]{./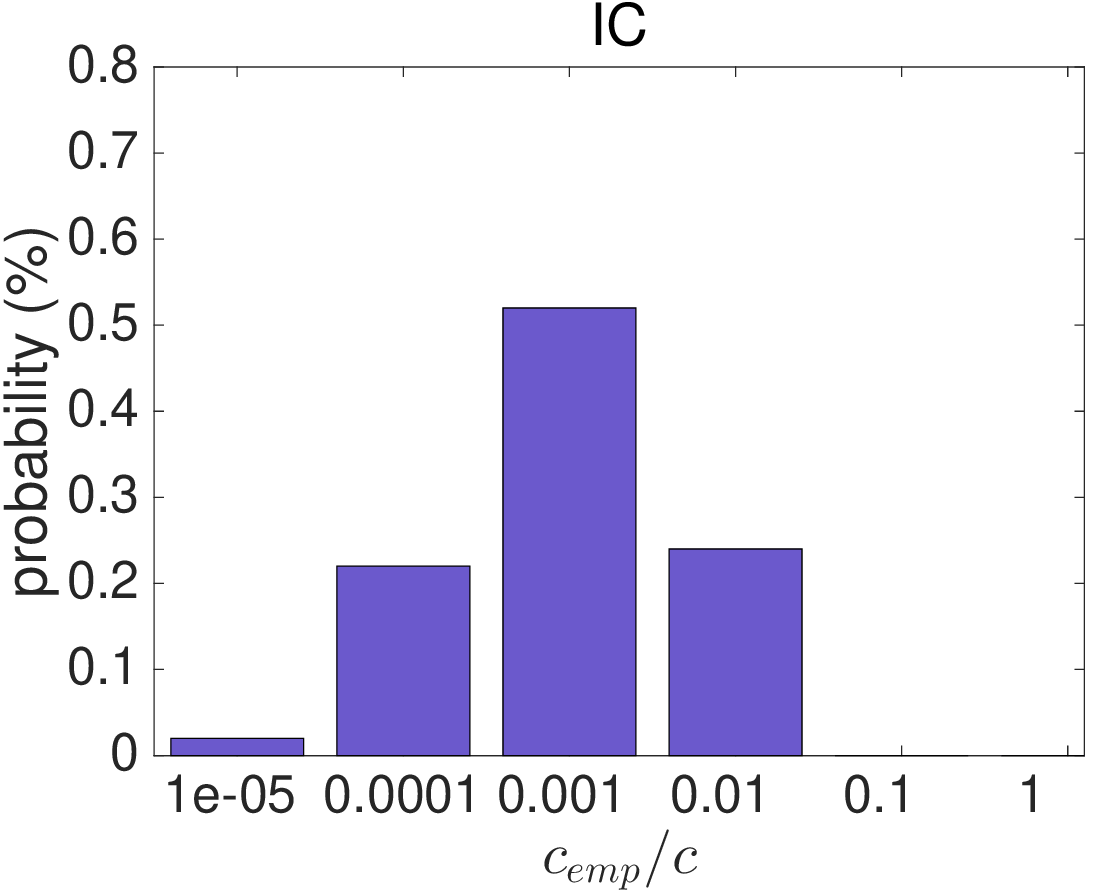} &
%\hspace{-3mm} \includegraphics[width=45mm]{./} &
\hspace{-3mm} \includegraphics[width=40mm]{./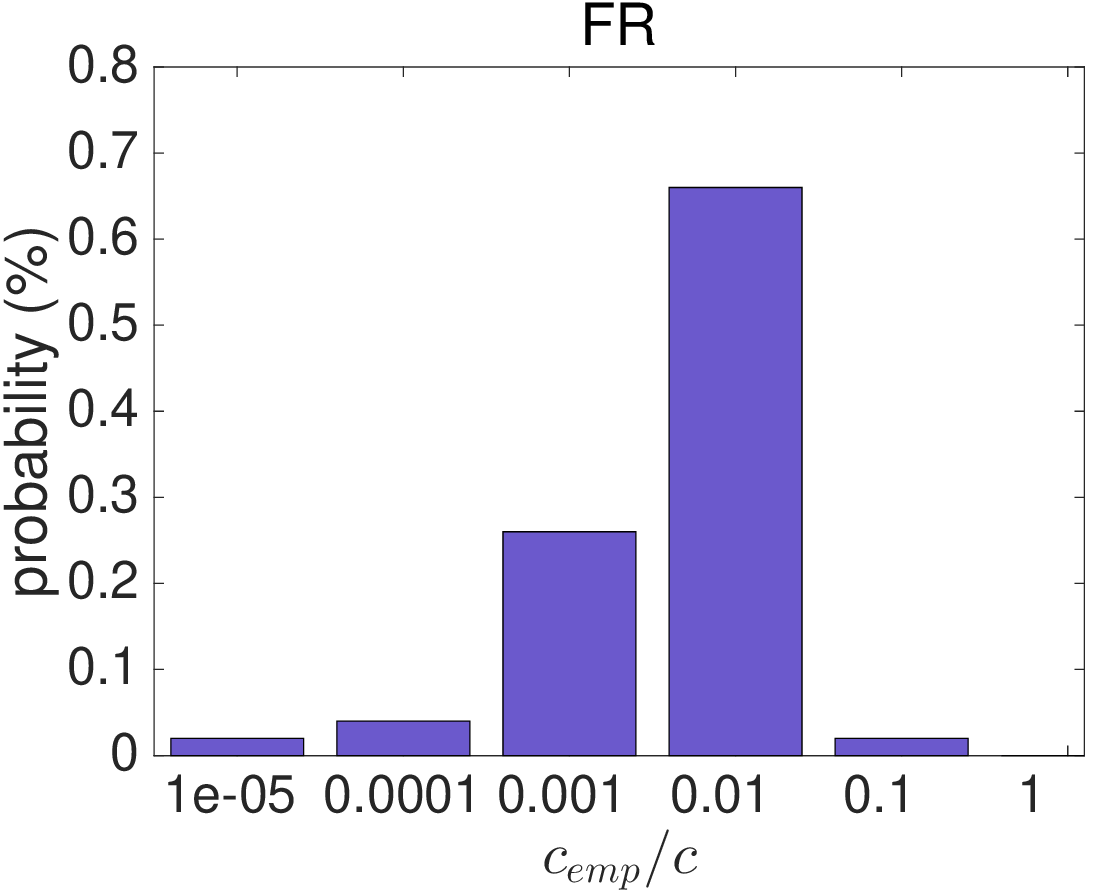} \\
\end{tabular}
\vspace{-6mm}
\caption{\rev Experiments on the empirical errors of \setpush. } 
\label{fig:epsr_exp}
\vspace{-6mm}
%\end{minipage}
\end{figure}

%We also plot the query time results of each method with the query node selected according to the degree distribution in Figure~\ref{fig:query_degree_0.1} and Figure~\ref{fig:query_degree_0.5}. We note that the superiority of \setpush over other competitors is a bit weakened compared to the results shown in Figure~\ref{fig:query_uniform_0.1} and Figure~\ref{fig:query_degree_0.5}. This, actually, concurs with our analysis since the expected time complexity is influenced by the value of $d_t$. Nonetheless, our \setpush still achieves the smallest query time among all the methods, except for \sublinear on OL. Meanwhile, we also observe that the \sublinear algorithm performs best among the competitors, which concurs with the theoretical analysis that the \sublinear method provides the best results before our \setpush. 

\header{\bf $\boldsymbol{d_t}$ v.s. Average Overall Query Time. } 
In Figure~\ref{fig:dt}, we show the trade-off lines between $d_t$ (i.e., the degree of the target node $t$) and the empirical query time. We leverage such experiments to observe the relationship between the query time of each method and the value of $d_t$. Specifically, we partition the vertex set $V$ into five subsets $V_1, V_2, V_3, V_4, V_5$, such that the average node degrees $\davg_1, \davg_2, \davg_3, \davg_4, \davg_5$ of $V_1, V_2, V_3, V_4, V_5$ satisfy $\davg_1 \ge 100\davg$, $\davg_2 \in [10 \davg, 100 \davg)$, $\davg_3 \in [\davg, 10\davg)$, $\davg_4 \in [0.1 \davg, \davg)$, and $\davg_5 \in [0.01 \davg, 0.1\davg)$, respectively, where $d$ denotes the average node degree in the graph $G$. In each subset (i.e., $V_1, \ldots, V_5$), we select five query nodes uniformly at random, and report the average query time of each method over the five query nodes. %For each subset $V_1,\ldots, V_5$, we report the average running time of each method over the five query nodes with $c=0.1$ and $p_f=0.1$. 
We omit LocalPush and RBS on the FR dataset when $d_t/\davg\ge 10$ because their query time exceeds one day. We set $c=0.1$ and $p_f=0.1$ in these experiments. 
From Figure~\ref{fig:dt}, we note that our \setpush consistently outperforms all baseline methods on all datasets for all query sets. In particular, for law-degree query nodes, our \setpush achieves $10\times \sim 1000\times$ improvements on the query time over existing methods. For high-degree query nodes, the superiority of \setpush is gradually weakened, but still exists. Additionally, we observe: % from Figure~\ref{fig:dt}, we can derive several interesting observations: 
\begin{itemize}
%\vspace{-1mm}
    \item The query time of the Monte-Carlo method, RBS, and the \sublinear method nearly remain unchanged with the increment of $d_t$. This concurs with our analysis that the three methods do not include $d_t$ in their complexity results. 

    \item The query time of FastPPR and BiPPR increase slowly with the increment of $d_t$, while the query time of our \setpush and LocalPush grows linearly to $d_t$. This concurs with our analysis that the time complexities of FastPPR and BiPPR are both $\tilde{O}\left(\sqrt{n \cdot d_t}\right)$, while the time complexities of \setpush and LocalPush both have a linear dependence on $d_t$. 
\end{itemize}

\vspace{-2mm}
\header{\bf Empirical Errors of \setpush. } 
In Figure~\ref{fig:epsr_exp}, we evaluate the empirical error of our \setpush. Specifically, we adopt the power method~\cite{page1999pagerank} with the maximum iteration times $L=100$ to compute the ground truth of PageRank. %Recall that the power method is an iterative method based on the definition formula of PageRank given in Equation~\eqref{eqn:def_pagerank}. In our experiments, we set the maximum iterations of power method as $100$ and fix the dampling factor $\alpha$ as $0.2$. 
Furthermore, on each dataset, we fix the relative error parameter $c=0.1$ and run our \setpush for each query node in the set $Q_1$. Then we compute the empirical relative error $c_{emp}$ for each query node following $c_{emp}=\frac{|\epi(t)-\vpi(t)|}{\vpi(t)}$. We report the average of the values $\left(\frac{c_{emp}}{c}\right)$ over all query nodes in Figure~\ref{fig:epsr_exp}. Note that $\left(\frac{c_{emp}}{c}\right) \le 1$ implies that the empirical relative error of \setpush meets the requirement of the $(c,p_f)$-approximation of $\vpi(t)$. From Figure~\ref{fig:epsr_exp}, we observe that the empirical relative errors of \setpush on all datasets are consistently smaller than $c$. In particular, on the IC datasets, the empirical relative errors of \setpush are smaller than $c$ by up to two orders of magnitude. This demonstrates the correctness and query efficiency of our \setpush. 
%We vary $\e_r$ from $0.1$ to $10$

}

\begin{comment}
\begin{figure*}[t]
\begin{minipage}[t]{1\textwidth}
\centering
%\vspace{-5mm}
%    \begin{footnotesize}
\begin{tabular}{ccccc}
%\multicolumn{4}{c}{\includegraphics[width=80mm,trim=180 660 180 0,clip]{./}}
\hspace{-7mm}\includegraphics[width=16mm,trim=180 660 685 0,clip]{./Figs/legend.eps} &
\includegraphics[width=18mm,trim=382 660 468 0,clip]{./Figs/legend.eps} &
\hspace{+1mm}\includegraphics[width=21mm,trim=620 660 215 0,clip]{./Figs/legend.eps} &
\hspace{-1mm}\includegraphics[width=10mm,trim=530 658 383 2,clip]{./Figs/legend.eps} &
\hspace{+1mm}\includegraphics[width=7.5mm,trim=315 658 620 1.5,clip]{./Figs/legend.eps} 
\end{tabular}
\begin{tabular}{cccc}
\hspace{-4mm} \includegraphics[width=43mm]{./} &
\hspace{-3mm} \includegraphics[width=43mm]{./} &
\hspace{-3mm} \includegraphics[width=43mm]{./} &
\hspace{-3mm} \includegraphics[width=43mm]{./} \\
\end{tabular}
\vspace{-4mm}
\caption{query time (seconds) v.s. relative error $c$}
\label{fig:query}
\vspace{-1mm}
\end{minipage}
\end{figure*}
\end{comment}

%%% Local Variables:
%%% mode: latex
%%% TeX-master: "paper"
%%% End:

\vspace{-2mm}
\section{Conclusion} \label{sec:conclusion}
In this paper, we study the problem of single-node PageRank computation on undirected graphs. We propose a novel method, \setpush, which achieves the $\tilde{O}\left(\min\left\{d_t, \sqrt{m}\right\}\right)$ expected time complexity for estimating the target node $t$'s PageRank with constant relative error and constant success probability. {\rev We prove that this is the best result among existing methods on undirected graphs. We also empirically demonstrate the effectiveness of \setpush on large-scale real-world datasets. For the future work, we note that the lower bound for the problem of single-node PageRank computation on undirected graphs is still unclear. Since we have already achieved the complexity bound $\tilde{O}\left(\min\left\{d_t, \sqrt{m}\right\}\right)$, a natural question is whether this complexity matches the lower bound for the problem. 

\begin{comment}
For the future work, 
%we note that the lower bound for the problem of single-node PageRank computation on undirected graphs is still unclear. Since we have already achieved the complexity bound $\tilde{O}\left(\min\left\{d_t, \sqrt{m}\right\}\right)$, a natural question is whether this complexity matches the lower bound for the problem. We leave open this problem for the future work. 
we leave open two non-trivial questions with respect to single-node PageRank computations on undirected and directed graphs, respectively.  
%In the future, we will continue to focus on the problem, and study an interesting question that 

\begin{itemize}
\item First, on undirected graphs, the lower bound for the problem is still unclear. Since we have already achieved the complexity of $\tilde{O}\left(\min\left\{d_t, \sqrt{m}\right\}\right)$, a natural question is whether this complexity matches the lower bound for the problem. 

\item Second, on directed graphs, the best complexity for estimating an arbitrary target node's PageRank within constant relative error is $\tilde{O}\left(\min\left\{\frac{m^{2/3}\cdot \dmax^{1/3}}{\davg^{2/3}}, \frac{m^{4/5}}{\davg^{3/5}} \right\}\right)$, achieved by the \sublinear method proposed by Bressan et al. in 2018~\cite{bressan2018sublinear}. Meanwhile, Bressan et al. proposed a lower bound $\tilde{O}\left(\min\left\{\frac{m^{1/2}\cdot \dmax^{1/2}}{\davg^{1/2}}, \frac{m^{2/3}}{\davg^{1/3}} \right\}\right)$ for the problem. As pointed out by~\cite{bressan2018sublinear}, there is a theoretical gap between the upper bound and the lower bound. We leave open the problem of whether the theoretical gap can be narrowed.  
\end{itemize}
\end{comment}

%is it possible to further reduce the time complexity of single-node PageRank computation to $O(\sqrt{n})$? 
}

%%% Local Variables:
%%% mode: latex
%%% TeX-master: "paper"
%%% End:

\vspace{-2mm}
\begin{acks}
%To Robert, for the bagels and explaining CMYK and color spaces.
This research was supported in part by National Natural Science Foundation of China (No. U2241212, No. 61972401, No. 61932001, No. 61832017), by the major key project of PCL (PCL2021A12), by Beijing Natural Science Foundation (No. 4222028), by Beijing Outstanding Young Scientist Program No.BJJWZYJH012019100020098, by Alibaba Group through Alibaba Innovative Research Program, and by Huawei-Renmin University joint program on Information Retrieval. Hanzhi Wang was also supported by the Outstanding Innovative Talents Cultivation Funded Programs 2020 of Renmin University of China. We also wish to acknowledge the support provided by Engineering Research Center of Next-Generation Intelligent Search and Recommendation, Ministry of Education, and the fund for building world-class universities (disciplines) of Renmin University of China. Additionally, we acknowledge the support from Intelligent Social Governance Interdisciplinary Platform, Major Innovation \& Planning Interdisciplinary Platform for the “Double-First Class” Initiative, Public Policy and Decision-making Research Lab, Public Computing Cloud, Renmin University of China. 
\end{acks}

%\clearpage

\bibliographystyle{ACM-Reference-Format}
\bibliography{paper}

\appendix
%\clearpage
\section{Appendix} \label{sec:appendix}
%\begin{proof}[Proof of Lemma~\ref{lem:conditional_variance}]

\subsection{Proof of Lemma~\ref{lem:var_recur}}
%We now present the complete proof of Lemma~\ref{lem:var_recur}. 
Recall that we have proved %$\Var\left[\epi(t)\right]=\frac{\alpha^2}{n^2}\cdot \Var\left[\sum_{\ell=0}^L \sum_{s\in V}\frac{d_t}{d_s}\cdot \er^{(\ell)}_t(s)\right]$. 
\begin{align*}
\Var\left[\epi(t)\right]=\frac{\alpha^2}{n^2}\cdot \Var\left[\sum_{\ell=0}^L \sum_{s\in V}\frac{d_t}{d_s}\cdot \er^{(\ell)}_t(s)\right]
\end{align*}
in Section~\ref{sec:analysis}. In the following, we present the proof of: 
\begin{equation}\label{eqn:totalvar_final}
\begin{aligned}
&\Var\left[\sum_{\ell=0}^L \sum_{s\in V}\frac{d_t}{d_s}\cdot \er^{(\ell)}_t(s)\right]\\
&=\sum_{\ell=1}^{L-1}\E\left[\Var\left[\left.\sum_{v\in V}\left(\sum_{s\in V}\frac{d_t}{d_s}\cdot \sum_{i=0}^{L-\ell}\frac{\vpi^{(i)}_v(s)}{\alpha}\right)\cdot \er^{(\ell)}_t(v)~\right|~\er^{(\ell-1)}_t\right]\right].  
\end{aligned}    
\end{equation}
%we prove it by iteratively applying the law of total variance (Fact~\ref{fact:totalvar}). The formal proof is as follows. 
Specifically, by the law of total variance, we can derive 
\begin{equation}\label{eqn:total_var1}
\begin{aligned}
\Var\left[\sum_{\ell=0}^L \sum_{s\in V}\frac{d_t}{d_s}\cdot \er^{(\ell)}_t(s)\right]
&=\E\left[\Var\left[\sum_{\ell=0}^L \sum_{s\in V}\frac{d_t}{d_s}\cdot \er^{(\ell)}_t(s)~\Big|~ \er_t^{(L-1)}\right]\right]\\
&+\Var\left[\E\left[\sum_{\ell=0}^L \sum_{s\in V}\frac{d_t}{d_s}\cdot \er^{(\ell)}_t(s)~\Big|~ \er_t^{(L-1)}\right]\right]
\end{aligned}
\end{equation}
As we shall show in the following, the second term in the right hand side of Equation~\eqref{eqn:total_var1} can be iteratively rewritten as the sum of an expectation and, again, a variance expression. Thus, we can repeatedly adopt the law of total variance to further rewrite the new variance expression as the summation of an expectation and a variance. By repeating the above process, in the end, we will derive Equation~\eqref{eqn:totalvar_final}. Details are presented as below. 

As the first step, we note that the variance term given in the right hand side of Equation~\eqref{eqn:total_var1} can be rewritten as below by the linearity of expectation. 
\begin{equation}\label{eqn:total_var2}
\begin{aligned}
&\hspace{-0.5mm}\Var\hspace{-0.5mm}\left[\hspace{-0.5mm}\E\hspace{-0.5mm}\left[\hspace{-0.5mm}\sum_{\ell=0}^L \sum_{s\in V}\hspace{-0.5mm}\frac{d_t}{d_s}\hspace{-0.5mm}\cdot \hspace{-0.5mm}\er^{(\ell)}_t\hspace{-0.5mm}(s)\Big|~\er_t^{(L-1)}\hspace{-0.5mm}\right]\hspace{-0.5mm}\right]
\hspace{-1mm}=\hspace{-0.5mm}\Var\hspace{-0.5mm}\left[\hspace{-0.5mm}\sum_{s\in V}\hspace{-0.5mm}\frac{d_t}{d_s}\hspace{-0.5mm}\cdot \hspace{-1mm}\sum_{\ell=0}^L  \hspace{-0.5mm}\E\hspace{-0.5mm}\left[\hspace{-0.5mm}\er^{(\ell)}_t\hspace{-0.5mm}(s)\Big|~\er_t^{(L-1)}\hspace{-0.5mm}\right]\hspace{-0.5mm}\right]\\
&=\Var\left[\sum_{s\in V}\frac{d_t}{d_s}\cdot \left(\E\left[\er^{(L)}_t(s)\Big|~\er_t^{(L-1)}\right]+\sum_{\ell=0}^{L-1}  \E\left[\er^{(\ell)}_t(s)\Big|~\er_t^{(L-1)}\right]\right)\right]
\end{aligned}
\end{equation}
We note that for every $\ell\in [0,L-1]$, 
$\E\left[\er^{(\ell)}_t(s)~\Big|~\er_t^{(L-1)}\right]= \er^{(\ell)}_t(s)$.  
And for $\E\left[\er^{(L)}_t(s)\Big|~\er_t^{(L-1)}\right]$, we have: 
\begin{align*}
\E\left[\er^{(L)}_t(s)~\Big|~\er_t^{(L-1)}\right]=\sum_{u\in N(s)}\frac{(1-\alpha)}{d_u}\cdot \r^{(\ell)}_t(u)
\end{align*}
according to Equation~\eqref{eqn:conditional_exp}. 
In particular, we note that $\frac{(1-\alpha)}{d_u}=\frac{\vpi^{(1)}_u(s)}{\alpha}$ holds for every $u\in N(s)$ according to the definition formula of the $\ell$-hop PPR as shown in Equation~\eqref{eqn:def_lhopppr}. Thus, we further have: 
\begin{align*}
\E\left[\er^{(L)}_t(s)\Big|~\er_t^{(L-1)}\right]=\sum_{u\in N(s)}\frac{\vpi^{(1)}_u(s)}{\alpha}\cdot \er^{(L-1)}_t(u)
\end{align*}
Plugging into Equation~\eqref{eqn:total_var2}, we can therefore derive: 
\begin{equation}\label{eqn:total_var3}
\begin{aligned}
&\Var\left[\E\left[\sum_{\ell=0}^L \sum_{s\in V}\frac{d_t}{d_s}\cdot \er^{(\ell)}_t(s)~\Big|~\er_t^{(L-1)}\right]\right]\\
&=\Var\left[\sum_{s\in V}\frac{d_t}{d_s}\cdot \left(\sum_{u\in N(s)}\hspace{-1mm}\frac{1}{\alpha}\cdot \vpi^{(1)}_u(s) \cdot \er^{(L-1)}_t(u)+\sum_{\ell=0}^{L-1}\er^{(\ell)}_t(s)\right)\right]
\end{aligned}
\end{equation}
In particular, by Equation~\eqref{eqn:PPR_recur}, we have the following fact: 
\begin{itemize}
    \item $\vpi^{(0)}_s(s)=\frac{1}{\alpha}$, and $\vpi^{(0)}_s(u)=0$ for every $u\neq s$, 
    \item $\vpi^{(1)}_s(u)=0$ for every $u \notin N(s)$, 
\end{itemize}
Equation~\eqref{eqn:total_var3} can be further expressed as: 
\begin{equation}\label{eqn:total_var_mid}
\begin{aligned}
&\Var\left[\E\left[\sum_{\ell=0}^L \sum_{s\in V}\frac{d_t}{d_s}\cdot \er^{(\ell)}_t(s)\Big|~\er_t^{(L-1)}\right]\right]\\
&=\Var\left[\sum_{s\in V}\frac{d_t}{d_s}\cdot \left(\sum_{i=0}^{1}\sum_{u\in V}\hspace{-1mm}\frac{1}{\alpha}\hspace{-0.5mm}\cdot \hspace{-0.5mm}\vpi^{(i)}_u(s)\hspace{-0.5mm} \cdot \hspace{-0.5mm}\er^{(L-1)}_t(u)+\sum_{\ell=0}^{L-2}\er^{(\ell)}_t(s)\right)\right]. 
\end{aligned}
\end{equation}
Again, we apply the law of total variance (Fact~\ref{fact:totalvar}) to Equation~\eqref{eqn:total_var_mid}, which follows: 
\begin{equation}\label{eqn:total_var4}
\begin{aligned}
&\Var\left[\sum_{s\in V}\frac{d_t}{d_s}\cdot \left(\sum_{i=0}^{1}\sum_{u\in V}\hspace{-1mm}\frac{1}{\alpha}\hspace{-0.5mm}\cdot \hspace{-0.5mm}\vpi^{(i)}_u(s)\hspace{-0.5mm} \cdot \hspace{-0.5mm}\er^{(L-1)}_t(u)+\sum_{\ell=0}^{L-2}\er^{(\ell)}_t\hspace{-0.5mm}(s)\right)\right]\\
&=\E\left[\Var\left[\sum_{s\in V}\frac{d_t}{d_s}\hspace{-0.5mm}\cdot \hspace{-0.5mm}\left(\sum_{i=0}^{1}\sum_{u\in V}\hspace{-1mm}\frac{1}{\alpha}\hspace{-0.5mm}\cdot \hspace{-0.5mm}\vpi^{(i)}_u(s)\hspace{-0.5mm} \cdot \hspace{-0.5mm}\er^{(L-1)}_t\hspace{-0.5mm}(u)+\hspace{-1mm}\sum_{\ell=0}^{L-2}\hspace{-0.5mm}\er^{(\ell)}_t\hspace{-0.5mm}(s)\hspace{-0.5mm}\right)\Big|~ \er^{(L-2)}_t\hspace{-0.5mm}\right]\hspace{-0.5mm}\right]\\
&+\Var\left[\E\left[\sum_{s\in V}\frac{d_t}{d_s}\hspace{-0.5mm}\cdot \hspace{-0.5mm}\left(\sum_{i=0}^{1}\sum_{u\in V}\hspace{-1mm}\frac{1}{\alpha}\hspace{-0.5mm}\cdot \hspace{-0.5mm}\vpi^{(i)}_u(s)\hspace{-0.5mm} \cdot \hspace{-0.5mm}\er^{(L-1)}_t\hspace{-0.5mm}(u)+\hspace{-1mm}\sum_{\ell=0}^{L-2}\hspace{-0.5mm}\er^{(\ell)}_t\hspace{-0.5mm}(s)\hspace{-0.5mm}\right)\Big|~\er^{(L-2)}_t\hspace{-0.5mm}\right]\hspace{-0.5mm}\right]. 
\end{aligned}
\end{equation}
%Now we successfully express the second term in the right hand side of Equation~\eqref{eqn:total_var1} as a sum of an expectation and a variance. Next, 
Repeating the above process, we can further rewrite the second term in the right hand side of Equation~\eqref{eqn:total_var4} as a summation of an expectation and a variance. Specifically, consider the right hand side of Equation~\eqref{eqn:total_var4}. By the linearity of expectation, we have: 
\begin{equation}\label{eqn:total_var5}
\begin{aligned}
&\hspace{-1mm}\Var\left[\E\left[\sum_{s\in V}\frac{d_t}{d_s}\hspace{-0.5mm}\cdot \hspace{-0.5mm}\left(\sum_{i=0}^{1}\sum_{u\in V}\hspace{-1mm}\frac{1}{\alpha}\hspace{-0.5mm}\cdot \hspace{-0.5mm}\vpi^{(i)}_u(s)\hspace{-0.5mm} \cdot \hspace{-0.5mm}\er^{(L-1)}_t\hspace{-0.5mm}(u)+\hspace{-1mm}\sum_{\ell=0}^{L-2}\hspace{-0.5mm}\er^{(\ell)}_t\hspace{-0.5mm}(s)\hspace{-0.5mm}\right)\Big|~\er^{(L-2)}_t\hspace{-0.5mm}\right]\hspace{-0.5mm}\right]\\
&\hspace{-2mm}=\hspace{-1mm}\Var\hspace{-0.5mm}\left[\hspace{-0.5mm}\sum_{s\in V}\hspace{-0.5mm}\frac{d_t}{d_s}\hspace{-0.5mm}\cdot \hspace{-0.5mm}\left(\sum_{i=0}^{1}\hspace{-0.5mm}\sum_{u\in V}\hspace{-1.5mm}\frac{\vpi^{(i)}_u\hspace{-0.5mm}(s)}{\alpha}\hspace{-0.5mm} \cdot \hspace{-0.5mm}\E\left[\hspace{-0.5mm}\er^{(L-1)}_t\hspace{-0.5mm}(u)\Big|\er^{(L-2)}_t\hspace{-0.5mm}\right]\hspace{-1mm}+\hspace{-1.5mm}\sum_{\ell=0}^{L-2}\hspace{-0.5mm}\E\hspace{-0.5mm}\left[\hspace{-0.5mm}\er^{(\ell)}_t\hspace{-0.5mm}(s)\Big|\er^{(L-2)}_t\hspace{-0.5mm}\right]\hspace{-0.5mm}\right)\hspace{-0.5mm}\right]
\end{aligned}
\end{equation}
Analogously, we have: 
\begin{align*}
\sum_{\ell=0}^{L-2}\E\left[\er^{(\ell)}_t(s)\Big|~\er^{(L-2)}_t\right]=\sum_{\ell=0}^{L-2}\er^{(\ell)}_t(s), 
\end{align*}
and by Equation~\eqref{eqn:conditional_exp} and Equation~\eqref{eqn:def_lhopppr}:
\begin{align*}
\E\left[\er^{(L-1)}_t\hspace{-0.5mm}(u)~\big|~\er^{(L-2)}_t\hspace{-0.5mm}\right]\hspace{-1mm}=\hspace{-3mm}\sum_{w\in N(u)}\hspace{-4mm}\frac{(1-\alpha)}{d_w}\cdot \er^{(L-2)}_t\hspace{-0.5mm}(w)\hspace{-0.5mm}=\hspace{-4mm}\sum_{w\in N(u)}\hspace{-4mm}\frac{\vpi^{(1)}_w\hspace{-0.5mm}(u)}{\alpha}\cdot \er^{(L-2)}_t\hspace{-0.5mm}(w). 
\end{align*}
Plugging into Equation~\eqref{eqn:total_var5}, we can further derive: 
\begin{equation}\label{eqn:total_var6}
\begin{aligned}
&\Var\left[\E\left[\sum_{s\in V}\frac{d_t}{d_s}\hspace{-0.5mm}\cdot \hspace{-0.5mm}\left(\sum_{i=0}^{1}\sum_{u\in V}\hspace{-1mm}\frac{1}{\alpha}\hspace{-0.5mm}\cdot \hspace{-0.5mm}\vpi^{(i)}_u(s)\hspace{-0.5mm} \cdot \hspace{-0.5mm}\er^{(L-1)}_t\hspace{-0.5mm}(u)+\hspace{-1mm}\sum_{\ell=0}^{L-2}\hspace{-0.5mm}\er^{(\ell)}_t\hspace{-0.5mm}(s)\hspace{-0.5mm}\right)\Big|~\er^{(L-2)}_t\hspace{-0.5mm}\right]\hspace{-0.5mm}\right]\\
&=\hspace{-0.5mm}\Var\hspace{-0.5mm}\left[\hspace{-0.5mm}\sum_{s\in V}\hspace{-0.5mm}\frac{d_t}{d_s}\hspace{-0.5mm}\cdot \hspace{-0.5mm}\left(\sum_{i=0}^{1}\sum_{u\in V}\hspace{-1mm}\sum_{w\in N(u)}\hspace{-3.5mm}\frac{1}{\alpha^2}\hspace{-0.5mm}\cdot \hspace{-0.5mm}\vpi^{(i)}_u(s)\hspace{-0.5mm}\cdot \hspace{-0.5mm}\vpi^{(1)}_w\hspace{-0.5mm}(u) \hspace{-0.5mm}\cdot \hspace{-0.5mm}\er^{(L-2)}_t\hspace{-0.5mm}(w)\hspace{-1mm}+\hspace{-1.5mm}\sum_{\ell=0}^{L-2}\hspace{-0.5mm}\er^{(\ell)}_t\hspace{-0.5mm}(s)\hspace{-0.5mm}\right)\hspace{-0.5mm}\right]\hspace{-1mm}. 
\end{aligned}
\end{equation}
Note that by Equation~\eqref{eqn:PPR_recur} and Equation~\eqref{eqn:birectional_ppr}, we can derive the following fact: 
\begin{equation}\label{eqn:recur_PPR_reverse_mid}
\begin{aligned}
&\vpi^{(i+1)}_w(s)=\frac{d_s}{d_w}\cdot \vpi^{(i+1)}_s(w)=\frac{d_s}{d_w}\cdot \hspace{-2mm}\sum_{u\in N(w)}\hspace{-1mm}\frac{(1-\alpha)}{d_u}\vpi^{(i)}_s(u)\\
&=\sum_{u\in N(w)}\hspace{-1mm}\frac{(1-\alpha)}{d_w}\cdot \left(\frac{d_s}{d_u}\cdot \vpi^{(i)}_s(u)\right)=\sum_{u\in N(w)}\hspace{-1mm}\frac{(1-\alpha)}{d_w}\cdot \vpi^{(i)}_u(s). 
\end{aligned}    
\end{equation}
Meanwhile, Equation~\eqref{eqn:PPR_recur} also indicates the following properties of $\ell$-hop PPR: 
\begin{itemize}
\item $\vpi^{(0)}_w(w)=\alpha$; 
\item $\vpi^{(1)}_w(u)=\frac{(1-\alpha)}{d_w}\cdot \vpi^{(0)}_w(w)=\frac{\alpha\cdot (1-\alpha)}{d_w}$ for every $u \in N(w)$; 
\item $\vpi^{(1)}_w(u)=0$ for every $u \notin N(w)$. 
\end{itemize}
Therefore, the recursive relation shown in Equation~\eqref{eqn:recur_PPR_reverse_mid} can be further expressed as: 
\begin{align}\label{eqn:recur_PPR_reverse}
\vpi^{(i+1)}_w(s)=\sum_{u\in N(w)}\frac{1}{\alpha}\cdot \vpi^{(1)}_w(u)\cdot \vpi^{(i)}_u(s). 
\end{align}
Plugging Equation~\eqref{eqn:recur_PPR_reverse} into Equation~\eqref{eqn:total_var6}, we can derive: 
\begin{equation*}
\begin{aligned}
&\Var\left[\E\left[\sum_{s\in V}\frac{d_t}{d_s}\hspace{-0.5mm}\cdot \hspace{-0.5mm}\left(\sum_{i=0}^{1}\sum_{u\in V}\hspace{-1mm}\frac{1}{\alpha}\hspace{-0.5mm}\cdot \hspace{-0.5mm}\vpi^{(i)}_u(s)\hspace{-0.5mm} \cdot \hspace{-0.5mm}\er^{(L-1)}_t\hspace{-0.5mm}(u)+\hspace{-1mm}\sum_{\ell=0}^{L-2}\hspace{-0.5mm}\er^{(\ell)}_t\hspace{-0.5mm}(s)\hspace{-0.5mm}\right)\Big|~\er^{(L-2)}_t\hspace{-0.5mm}\right]\hspace{-0.5mm}\right]\\
&=\hspace{-0.5mm}\Var\hspace{-0.5mm}\left[\sum_{s\in V}\hspace{-0.5mm}\frac{d_t}{d_s}\hspace{-0.5mm}\cdot \hspace{-0.5mm}\left(\sum_{i=0}^{1}\sum_{w\in V}\frac{1}{\alpha}\hspace{-0.5mm}\cdot \hspace{-0.5mm}\vpi^{(i+1)}_w(s)\hspace{-0.5mm}\cdot \hspace{-0.5mm}\er^{(L-2)}_t\hspace{-0.5mm}(w)\hspace{-0.5mm}+\hspace{-0.5mm}\sum_{\ell=0}^{L-2}\hspace{-0.5mm}\er^{(\ell)}_t(s)\right)\right]. 
\end{aligned}
\end{equation*}
Since $\vpi^{(0)}_s(s)=1$ and $\vpi^{(0)}_w(s)=0$ for any $w\neq s$ as mentioned above, we can further derive: 
\begin{equation}\label{eqn:total_var7}
\begin{aligned}
&\Var\left[\E\left[\sum_{s\in V}\frac{d_t}{d_s}\hspace{-0.5mm}\cdot \hspace{-0.5mm}\left(\sum_{i=0}^{1}\sum_{u\in V}\hspace{-1mm}\frac{1}{\alpha}\hspace{-0.5mm}\cdot \hspace{-0.5mm}\vpi^{(i)}_u(s)\hspace{-0.5mm} \cdot \hspace{-0.5mm}\er^{(L-1)}_t\hspace{-0.5mm}(u)+\hspace{-1mm}\sum_{\ell=0}^{L-2}\hspace{-0.5mm}\er^{(\ell)}_t\hspace{-0.5mm}(s)\hspace{-0.5mm}\right)\Big|~\er^{(L-2)}_t\hspace{-0.5mm}\right]\hspace{-0.5mm}\right]\\
&=\hspace{-0.5mm}\Var\hspace{-0.5mm}\left[\sum_{s\in V}\hspace{-0.5mm}\frac{d_t}{d_s}\hspace{-0.5mm}\cdot \hspace{-0.5mm}\left(\sum_{i=0}^{2}\sum_{w\in V}\frac{1}{\alpha}\hspace{-0.5mm}\cdot \hspace{-0.5mm}\vpi^{(i)}_w(s)\hspace{-0.5mm}\cdot \hspace{-0.5mm}\er^{(L-2)}_t\hspace{-0.5mm}(w)\hspace{-0.5mm}+\hspace{-0.5mm}\sum_{\ell=0}^{L-3}\hspace{-0.5mm}\er^{(\ell)}_t(s)\right)\right]. 
\end{aligned}
\end{equation}
If we apply the law of total variance to Equation~\eqref{eqn:total_var7} one more times, we will have the sum of an expectation and a variance again. 
%Let us take a review to the processes for deriving Equation~\eqref{eqn:total_var_mid} and Equation~\eqref{eqn:total_var7}. %We note that by repeating the law of total variance once process of 
Repeatedly applying the law of total variance and rewriting the expression of variance, as a consequence, we can derive: 
\begin{equation}\label{eqn:total_var8}
\begin{aligned}
&\Var\left[\sum_{\ell=0}^L \sum_{s\in V}\frac{d_t}{d_s}\cdot \er^{(\ell)}_t(s)\right]\\
&\hspace{-2mm}=\hspace{-1.5mm}\sum_{\ell=1}^{L-1}\hspace{-0.5mm}\E\hspace{-0.5mm}\left[\hspace{-0.5mm}\Var\hspace{-0.5mm}\left[\sum_{s\in V}\hspace{-0.5mm}\frac{d_t}{d_s}\hspace{-0.5mm}\cdot \hspace{-0.5mm}\left(\hspace{-0.5mm}\sum_{i=0}^{\ell}\hspace{-0.5mm}\sum_{w\in V}\hspace{-1mm}\frac{1}{\alpha}\hspace{-0.5mm}\cdot \hspace{-0.5mm}\vpi^{(i)}_w(s)\hspace{-0.5mm}\cdot \hspace{-0.5mm}\er^{(L-\ell)}_t\hspace{-0.5mm}(w)\hspace{-0.5mm}+\hspace{-3mm}\sum_{j=0}^{L-\ell-1}\hspace{-2mm}\er^{(j)}_t(s)\hspace{-0.5mm}\right)\Big| \er^{(L-\ell-1)}_t\hspace{-0.5mm}\right]\hspace{-0.5mm}\right]\\
&+\Var\left[\E\left[\sum_{s\in V}\frac{d_t}{d_s}\hspace{-0.5mm}\cdot \hspace{-0.5mm}\left(\sum_{i=0}^{L-1}\hspace{-0.5mm}\sum_{w\in V}\hspace{-0.5mm}\frac{1}{\alpha}\hspace{-0.5mm}\cdot \hspace{-0.5mm}\vpi^{(i)}_w(s)\hspace{-0.5mm}\cdot \hspace{-0.5mm}\er^{(1)}_t\hspace{-0.5mm}(w)\hspace{-0.5mm}+\er^{(0)}_t(s)\hspace{-0.5mm}\right)~\Big|~ \er^{(0)}_t\right]\right]
\end{aligned}
\end{equation}
For the second term in the right side of Equation~\eqref{eqn:total_var8}, we have: 
\begin{equation}\label{eqn:total_var9}
\begin{aligned}
&\Var\left[\E\left[\sum_{s\in V}\frac{d_t}{d_s}\hspace{-0.5mm}\cdot \hspace{-0.5mm}\left(\sum_{i=0}^{L-1}\hspace{-0.5mm}\sum_{w\in V}\hspace{-0.5mm}\frac{1}{\alpha}\hspace{-0.5mm}\cdot \hspace{-0.5mm}\vpi^{(i)}_w(s)\hspace{-0.5mm}\cdot \hspace{-0.5mm}\er^{(1)}_t\hspace{-0.5mm}(w)\hspace{-0.5mm}+\er^{(0)}_t(s)\hspace{-0.5mm}\right)~\Big|~ \er^{(0)}_t\right]\right]\\
&\hspace{-0.5mm}=\hspace{-0.5mm}\Var\hspace{-0.5mm}\left[\sum_{s\in V}\frac{d_t}{d_s}\hspace{-0.5mm}\cdot \hspace{-0.5mm}\left(\sum_{i=0}^{L-1}\hspace{-0.5mm}\sum_{w\in V}\hspace{-0.5mm}\frac{1}{\alpha}\hspace{-0.5mm}\cdot \hspace{-0.5mm}\vpi^{(i)}_w(s)\hspace{-0.5mm}\cdot \hspace{-0.5mm}\E\left[\er^{(1)}_t\hspace{-0.5mm}(w)\Big|~ \er^{(0)}_t\hspace{-0.5mm}\right]\hspace{-1mm}+\hspace{-0.5mm}\E\left[\hspace{-0.5mm}\er^{(0)}_t(s)\Big|~ \er^{(0)}_t\right]\hspace{-0.5mm}\right)\right]
\end{aligned}
\end{equation}
by the linearity of expectation. In particular, we note that by Equation~\eqref{eqn:conditional_exp}, we can derive: 
\begin{align*}
\E\left[\er^{(1)}_t\hspace{-0.5mm}(w)\Big|~ \er^{(0)}_t\right]=\sum_{x\in N(w)}\frac{(1-\alpha)}{d_x}\cdot \er^{(0)}_t(x). 
\end{align*}
Therefore, Equation~\eqref{eqn:total_var9} actually bounds the variance of $\er^{(0)}_t$. The randomness comes from the values of $\er^{(0)}_t$. However, according to Algorithm~\ref{alg:VBES}, $\er^{(0)}_t$ is deterministically set as $\er^{(0)}_t=\bm{e}_t$. As a consequence, we have: 
\begin{align*}
\Var\left[\E\left[\sum_{s\in V}\frac{d_t}{d_s}\hspace{-0.5mm}\cdot \hspace{-0.5mm}\left(\sum_{i=0}^{L-1}\hspace{-0.5mm}\sum_{w\in V}\hspace{-0.5mm}\frac{1}{\alpha}\hspace{-0.5mm}\cdot \hspace{-0.5mm}\vpi^{(i)}_w(s)\hspace{-0.5mm}\cdot \hspace{-0.5mm}\er^{(1)}_t\hspace{-0.5mm}(w)\hspace{-0.5mm}+\er^{(0)}_t(s)\hspace{-0.5mm}\right)~\Big|~ \er^{(0)}_t\right]\right]=0.  
\end{align*}
Plugging into Equation~\eqref{eqn:total_var8}, we can thus derive: 
\begin{equation*}
\begin{aligned}
&\Var\left[\sum_{\ell=0}^L \sum_{s\in V}\frac{d_t}{d_s}\cdot \r^{(\ell)}_t(s)\right]\\
&\hspace{-1mm}=\hspace{-1.5mm}\sum_{\ell=1}^{L-1}\hspace{-0.5mm}\E\hspace{-0.5mm}\left[\hspace{-0.5mm}\Var\hspace{-0.5mm}\left[\sum_{s\in V}\hspace{-0.5mm}\frac{d_t}{d_s}\hspace{-0.5mm}\cdot \hspace{-0.5mm}\left(\hspace{-0.5mm}\sum_{i=0}^{\ell}\hspace{-0.5mm}\sum_{w\in V}\hspace{-1mm}\frac{1}{\alpha}\hspace{-0.5mm}\cdot \hspace{-0.5mm}\vpi^{(i)}_w(s)\hspace{-0.5mm}\cdot \hspace{-0.5mm}\er^{(L-\ell)}_t\hspace{-0.5mm}(w)\hspace{-0.5mm}+\hspace{-3mm}\sum_{j=0}^{L-\ell-1}\hspace{-2mm}\er^{(j)}_t(s)\hspace{-0.5mm}\right)\Big| \er^{(L-\ell-1)}_t\hspace{-0.5mm}\right]\hspace{-0.5mm}\right]\\
&\hspace{-1mm}=\hspace{-1.5mm}\sum_{\ell=1}^{L-1}\hspace{-0.5mm}\E\hspace{-0.5mm}\left[\hspace{-0.5mm}\Var\hspace{-0.5mm}\left[\sum_{s\in V}\hspace{-0.5mm}\frac{d_t}{d_s}\hspace{-0.5mm}\cdot \hspace{-0.5mm}\left(\hspace{-0.5mm}\sum_{i=0}^{\ell}\hspace{-0.5mm}\sum_{w\in V}\hspace{-1mm}\frac{1}{\alpha}\hspace{-0.5mm}\cdot \hspace{-0.5mm}\vpi^{(i)}_w(s)\hspace{-0.5mm}\cdot \hspace{-0.5mm}\er^{(L-\ell)}_t\hspace{-0.5mm}(w)\hspace{-0.5mm}\right)\Big|~\er^{(L-\ell-1)}_t\hspace{-0.5mm}\right]\hspace{-0.5mm}\right],  
\end{aligned}
\end{equation*}
which follows the lemma.

%%% Local Variables:
%%% mode: latex
%%% TeX-master: "paper"
%%% End:

%\input{response}

\end{document}